\date{}
\newtheorem{theorem}{Theorem}
\newtheorem{acknowledgement}[theorem]{Acknowledgement}
\newtheorem{lemma}[theorem]{Lemma}
\newenvironment{proof}[1][Proof]{\noindent\textbf{#1.} }{\ \rule{0.5em}{0.5em}}
\newcommand{\R}{\mathbb{R}}
\renewcommand{\Re}{\mathrm{Re}}
\renewcommand{\Im}{\mathrm{Im}}
\let\pt=\partial
\let\e=\varepsilon
\let\f=\phi
\newcommand{\1}{\,\rlap{\small 1}\kern.13em 1}
\def \R{\mathbb{R}}
\numberwithin{equation}{section}
\numberwithin{theorem}{section}
\begin{document}

\title{\textbf{Phase Transition in a Vlasov-Boltzmann Binary Mixture}}
\author{R. Esposito$^1$, Y. Guo$^2$ and R. Marra$^3$ \\
{\footnotesize {$^1$Dipartimento di Matematica, Universit\`a di L'Aquila,
Coppito, 67100 AQ, Italy} }\\
{\footnotesize {$^2$Division of Applied Mathematics, Brown University,
Providence, RI 02812, U.S.A. } }\\
{\footnotesize {$^3$Dipartimento di Fisica and Unit\`a INFN, Universit\`a di
Roma Tor Vergata, 00133 Roma, Italy.} }}
\maketitle

\begin{abstract}
\noindent There are not many kinetic models where it is possible to prove
bifurcation phenomena for any value of the Knudsen number. Here we consider
a binary mixture over a line with collisions and long range repulsive
interaction between different species. It undergoes a segregation phase
transition at sufficiently low temperature. The spatially homogeneous
Maxwellian equilibrium corresponding to the mixed phase, minimizing the free
energy at high temperature, changes into a maximizer when the temperature
goes below a critical value, while non homogeneous minimizers, corresponding
to coexisting segregated phases, arise. We prove that they are dynamically
stable with respect to the Vlasov-Boltzmann evolution, while the homogeneous
equilibrium becomes dynamically unstable.
\end{abstract}

{Keywords: Vlasov-Boltzmann equation, Phase Transitions, Stability}

\bigskip {Mathematics Subject Classification Numbers: 82B40, 82C26}

\section{Introduction and Main Results.}

The phenomenon of phase transition in a thermodynamic system is usually
described by the arising of multiple minimizers of the free energy. Namely,
when the temperature is lowered below a certain critical value, the unique
equilibrium minimizer becomes a local maximizer and new minimizers appear.
This is interpreted as a loss of stability of the old minimizer and the
birth of new stable states. Here the meaning of stability is merely related
to the free energy minimizing properties of the state. Of course it would be
desirable to have a more detailed, dynamical analysis of the stability
properties involved in such a phenomenon of phase transition. This issue is,
generally speaking, very difficult, because the natural dynamics is a many-component one, whose detailed understanding is very far from being
attainable. A simpler but, in our opinion, still interesting problem is to
study this kind of behavior at kinetic level, where, instead of a system of
a huge number of particles, one has to study a partial differential equation
for the probability distribution of particles on the one particle phase
space. This is the problem we want to address in this paper.

The standard kinetic model of a rarefied gas undergoing collisions
(Boltzmann equation), however, describes essentially the ideal gas, which
does not exhibit phase transitions. Following van der Waals, it is
convenient to include a long range attractive interaction between particles
to see a vapor-liquid transition. With the introduction of such an
interaction, new problems arise, due to the fact that nothing prevents the
system from collapsing (Statistical Mechanics instability). The van der
Waals approach of adding a hard core interaction is not easy to handle and
more complicated, many body long range interactions have been introduced
(see \cite{LPM,P}) to handle this in the framework of the equilibrium
Statistical Mechanics.

While one could in principle try to follow the approach in \cite{LPM}, we
find easier to consider a different kinetic model introduced a few years ago
in \cite{BL}, where there is no attractive interaction and the Statistical
Mechanics instability does not arise. The model consists of two species of
particles which, for simplicity, have the same mass. To fix the ideas, think
of them as distinguished just by their \textit{color}, say red and blue.
Their short range interaction is modeled by Boltzmann-like collisions which
are color blind, while the long range repulsive interaction, arising only between
particles of different color, is modeled by a Vlasov force with a smooth,
bounded, finite range potential. We refer to \cite{BL}, \cite{BELM},\cite{CCELM1}, \cite{CCELM2}
for more information on this model. For any $t\in \mathbb{R}^{+}$, let the
non negative normalized functions $f_{i}(t,x,\xi )$, $i=1,2$, denote the
probability densities of finding a particle of species $1$ (red) or $2$
(blue) in a cell of the phase space around the point $(x,\xi )$ at time $t$.
In this paper we will only consider the case $x\in \mathbb{R}$, the real
line, while the velocity $\xi =(v,\zeta )\in \mathbb{R}^{3}$ with $\zeta \in 
\mathbb{R}^{2}$. The time evolution is governed by 
\begin{eqnarray}
&&\pt_{t}f_{1}+v\pt_{x}f_{1}+F(f_{2})\pt_{v}f_{1}=Q(f_{1},f_{1}+f_{2}), 
\notag  \label{maineq} \\
&& \\
&&\pt_{t}f_{2}+v\pt_{x}f_{2}+F(f_{1})\pt_{v}f_{2}=Q(f_{2},f_{1}+f_{2}). 
\notag
\end{eqnarray}
Here the Vlasov force $F(h)$ due to the mass distribution $h$ is defined as 
\begin{equation}
F(h)(t,x)=-\pt_{x}\int_{\mathbb{R}}dyU(|x-y|)\int_{\mathbb{R}^{3}}d\xi
h(t,y,\xi ),  \label{vlasovforce}
\end{equation}
where $U(r)$, the interaction potential is non negative, smooth, bounded,
with $U(r)=0$ for $r\geq 1$ and 
\begin{equation*}
\int_{\mathbb{R}}dxU(|x|)=1.
\end{equation*}
The collision integral is defined as: 
\begin{eqnarray}
Q(h_{1},h_{2})(\xi ) &&=\int_{\mathbb{R}^{3}\times \mathbb{S}^{2}}|(\xi -\xi
^{\prime })\cdot \omega |\{h_{1}(\xi _{\ast })h_{2}(\xi _{\ast }^{\prime
})-h_{1}(\xi )h_{2}(\xi ^{\prime })\}d\xi ^{\prime }d\omega   \notag
\label{collisionintegral} \\
&& \\
&&:=Q_{\text{gain}}(h_{1},h_{2})-Q_{\text{loss}}(h_{1},h_{2}),  \notag
\end{eqnarray}
with $\mathbb{S}^{2}=\{\omega \in \mathbb{R}^{3}\,|\,\,|\omega |=1\}$ and $
\xi _{\ast }$, $\xi _{\ast }^{\prime }$ related to $\xi $, $\xi ^{\prime }$
by the usual elastic collision relations 
\begin{equation}
\xi _{\ast }=\xi -\omega \lbrack \omega \cdot (\xi -\xi ^{\prime })],\quad
\quad \xi _{\ast }^{\prime }=\xi ^{\prime }+\omega \lbrack \omega \cdot (\xi
-\xi ^{\prime })].
\end{equation}

For any $\beta >0$, the couple $(a_{1}\mu _{\beta },a_{2}\mu _{\beta })$,
with $\mu _{\beta }$ the spatially homogeneous Maxwellian at temperature $
\beta ^{-1}$, 
\begin{equation}
\mu _{\beta }(\xi )=\left( \frac{\beta }{2\pi }\right) ^{\frac{3}{2}}\text{e}
^{-\beta \xi ^{2}/2}  \label{stmaxwellian}
\end{equation}
and $a_{i}>0$, is an equilibrium solution and the most general homogeneous
equilibrium differs from it just for rescaling and centering. However, due
to the presence of the Vlasov force, non homogeneous Maxwellian equilibria
are possible and they are of the form 
\begin{equation}
f_{i}(x,\xi )=\rho _{i}(x)\mu _{\beta }(\xi )  \label{nstmaxwellian}
\end{equation}
provided that the densities $\rho _{i}(x)>0$ satisfy the conditions 
\begin{equation}
\log \rho _{i}+\beta U\ast \rho _{i+1}=C_{i},\quad i=1,2  \label{statcond}
\end{equation}
Here and in the rest of the paper the label $i+1$ means $2$ if $i=1$ and $1$
if $i=2$ and the convolution product $\ast $ is defined by $a\ast
b(x)=\int_{R}dya(|x-y|)b(y)$. The constants $C_{i}$ have the physical
meaning of chemical potentials. The phase transition phenomenon we want to
discuss occurs when $C_{1}=C_{2}$, a situation where the equilibrium
solutions are symmetric under the exchange $1\leftrightarrow 2$.  This symmetry
is spontaneously broken below the critical temperature. Therefore we
assume $C_{1}=C_{2}$ in the rest of the paper.

If $\beta $ is suitably small, conditions (\ref{statcond}) are satisfied
only by constant $\rho _{i}$'s. On the other hand if $\beta $ is
sufficiently large, non constant $\rho _{i}$ solving (\ref{statcond}) can
be constructed.

To understand the arising of multiple equilibria, let us start by defining
the local free energy $\varphi (\rho _{1},\rho _{2})$ on $\mathbb{R}
^{+}\times \mathbb{R}^{+}$ (i.e. the free energy density when $U$ is
replaced by a $\delta$-function) as 
\begin{equation}
\varphi (\rho _{1},\rho _{2})=\rho _{1}\log \rho _{1}+\rho _{2}\log \rho
_{2}+\beta \rho _{1}\rho _{2}  \label{locfreeenergy}
\end{equation}
Let $\rho =\rho _{1}+\rho _{2}$. It is shown in \cite{CCELM2}] (in a
slightly more general context) that, if $\beta \rho <2$ then the only
stationary points for $\varphi $ are characterized by $\rho _{1}=\rho
_{2}=\rho /2$ and they are minimizers for $\varphi $. On the other hand, if $
\beta \rho >2$, then there are $\rho ^{+}>\rho ^{-}>0$ such that 
\begin{equation}
(\rho _{1},\rho _{2})=(\rho ^{+},\rho ^{-})  \label{rho12}
\end{equation}
(and $(\rho _{1},\rho _{2})=(\rho ^{-},\rho ^{+})$ by symmetry) is an
absolute minimizer for $\varphi $, while $(\rho _{1},\rho _{2})=(\rho
/2,\rho /2)$ is a local maximizer for $\varphi $. When $(a_1,a_2)$ is chosen
equal to $(\rho ^{+},\rho ^{-})$, the equilibrium state is interpreted as a
pure phase rich of red particles, while the state $(a_{1},a_{2})=(\rho
^{-},\rho ^{+})$ corresponds to a phase rich of blue particles.

Non homogeneous solutions become relevant when one has to describe a
situation of phase coexistence. The idea to construct non homogeneous
solutions is to observe that at low temperature, in order to minimize the
local free energy, the system has to be in a pure phase at infinity. A
situation with phase coexistence should arise when the system is in the
blue-rich phase at $-\infty $ and in the red-rich phase at $+\infty $ or
vice versa. Therefore one looks at the minimizers of a suitable free energy
functional with above constraints. Indeed, the conditions (\ref{statcond})
are the Euler-Lagrange equations for the \textit{free energy} functional 
\begin{equation}
\mathcal{F}(\rho _{1},\rho _{2})=\int_{I}dx(\rho _{1}\log \rho _{1}+\rho
_{2}\log \rho _{2})+\beta \int_{I}dx\int_{I}dyU(|x-y|)\rho _{1}(x)\rho
_{2}(y).  \label{freeenergy}
\end{equation}
on a finite interval $I$ with periodic boundary conditions. Moreover, it can
be shown, (see \cite{CCELM1}), that when $\beta $ is large, with the masses
of the two species fixed, there are non constant couples $(\rho _{1},\rho
_{2})$ giving lower values to the free energy than the constants. Over the
real line $-\infty <x<\infty $, above free energy does not make sense and a
more careful definition is required: Let $(-\ell ,\ell )$ be a bounded
interval and set 
\begin{multline*}
\mathcal{F}_{(-\ell ,\ell )}(\rho _{1},\rho _{2})=\hskip-5.7pt\int_{(-\ell
,\ell )}\mathrm{d}x\varphi (\rho _{1},\rho _{2})\\+\frac{\beta }{2}
\int_{(-\ell ,\ell )^{2}}\mathrm{d}x\mathrm{d}yU(|x-y|)[\rho _{1}(x)-\rho
_{1}(y)][\rho _{2}(y)-\rho _{2}(x)].
\end{multline*}
We define the \textit{excess free energy} as 
\begin{equation}
\hat{\mathcal{F}}(\rho _{1},\rho _{2}):=\lim_{\ell \rightarrow \infty }\big[
\mathcal{F}_{(-\ell ,\ell )}(\rho _{1},\rho _{2})-2\ell \varphi (\rho
^{+},\rho ^{-})\big].  \label{excess}
\end{equation}
Note that, since $\varphi (\rho ^{+},\rho ^{-})=\varphi (\rho ^{-},\rho
^{+}) $, the functional $\hat{\mathcal{F}}$ is $+\infty $ when $\rho =(\rho
_{1},\rho _{2})$ does not go to pure phases at infinity. Since we are
interested at the coexistence of phases, we assume that $(\rho _{1},\rho
_{2})$ satisfy one of the conditions $\lim_{x\rightarrow \pm \infty }\rho
_{1}(x)=\rho ^{\pm }$ or $\lim_{x\rightarrow \pm \infty }\rho _{2}(x)=\rho
^{\mp }$. In \cite{CCELM2} several results are proved about the minimizers
for the functional $\hat{\mathcal{F}}$, which are summarized in the
following theorem (see also\cite{EGM}):

\begin{theorem}
\label{inf} Let $\beta \rho >2$. Then there exists a unique (up to
translations) positive minimizer \textrm{(front)} for the one-dimensional
excess free energy $\hat{\mathcal{F}}$, defined by (\ref{excess}), in the
class of continuous front functions $\rho =(\rho _{1}(x),\rho _{2}(x))$ such
that $\lim_{z\rightarrow \pm \infty }\rho _{1}=\rho ^{\pm },\quad
\lim_{z\rightarrow \pm \infty }\rho _{2}=\rho ^{\mp }$, where $\rho ^{\pm }$
are defined in (\ref{rho12}). We denote by $\bar{\rho}=(\bar{\rho}_{1}(x),
\bar{\rho}_{2}(x))$ the unique minimizer such that $\bar{\rho}_{1}(x)=\bar{
\rho}_{2}(-x)$. $\bar{\rho}_{1}$ is monotone increasing and $\bar{\rho}_{2}$
is monotone decreasing and $\rho ^{-}<\bar{\rho}_{i}(x)<\rho ^{+}$ for any $
x\in \mathbb{R}$. Moreover, the front $\bar{\rho}$ is $C^{\infty }(\mathbb{R}
)$-smooth and satisfies the Euler-Lagrange equations (\ref{statcond}); Its
derivative $\bar{\rho}^{\prime }$ satisfies the equations 
\begin{equation}
\bar{\rho}_{i}^{\prime }+\beta {\bar{\rho}_{i}}(U\ast \bar{\rho}
_{i+1}^{\prime })=0,\quad i=1,2.  \label{e-l'}
\end{equation}
The front $\bar{\rho}$ converges to its asymptotic values exponentially
fast, in the sense that there is $\alpha >0$ such that 
\begin{equation*}
|\bar{\rho}_{1}(x)-\rho ^{\mp }|e^{\alpha |x|}\rightarrow 0\text{ as }
x\rightarrow \mp \infty ,\quad |\bar{\rho}_{2}(x)-\rho ^{\pm }|e^{\alpha
|x|}\rightarrow 0\text{ as }x\rightarrow \mp \infty .
\end{equation*}
Finally, the derivatives of $\bar{\rho}_{i}$ of any order vanish at infinity
exponentially fast and $\bar{\rho}^{\prime }$ is odd in the sense that 
\begin{equation}
\bar{\rho}_{1}^{\prime }(x)=-\bar{\rho}_{2}^{\prime }(-x).  \label{oddness}
\end{equation}
\end{theorem}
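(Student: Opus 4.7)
The plan is to use the direct method of the calculus of variations to produce a minimizer, then exploit the invariance of $\hat{\mathcal{F}}$ under a color-reversal-reflection to obtain the symmetry, and finally use the Euler-Lagrange equation together with a linearization near the pure phases for regularity, monotonicity and exponential decay.

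First I would show that $\hat{\mathcal{F}}$ is well-defined and bounded below on the admissible class. Since $(\rho^+,\rho^-)$ and $(\rho^-,\rho^+)$ are the two absolute minimizers of $\varphi$ with the same value, one can rewrite $\mathcal{F}_{(-\ell,\ell)}-2\ell\varphi(\rho^+,\rho^-)$ using the chemical-potential relation at the pure phases so that the integrand becomes manifestly non-negative; this makes the limit in (\ref{excess}) exist in $[0,\infty]$. Taking a minimizing sequence $\rho^{(n)}$, the local $L\log L$ bound gives weak $L^1_{\mathrm{loc}}$ compactness (de la Vall\'ee--Poussin), while the double-integral interfacial term controls oscillations against $U$; normalizing translations by fixing, say, the location of the level set $\{\bar{\rho}_1=(\rho^++\rho^-)/2\}$ prevents loss of the front at infinity. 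Weak $L^1$ lower semicontinuity of the convex functional $\int\varphi$ together with Fatou for the nonlocal part yield a minimizer.

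Next I would establish the symmetry, uniqueness and monotonicity. The functional is invariant under $\sigma\colon (\rho_1,\rho_2)(x)\mapsto(\rho_2,\rho_1)(-x)$, which swaps the admissible asymptotics. By strict convexity of $s\log s$, the symmetrization $\tfrac12(\rho+\sigma\rho)$ strictly decreases $\hat{\mathcal{F}}$ unless $\rho$ is already $\sigma$-invariant, giving $\bar{\rho}_1(x)=\bar{\rho}_2(-x)$. Monotonicity is obtained by reflection: if $\bar{\rho}_1$ had an interior critical point, cutting and reflecting across it would produce a competitor with no larger energy and an additional degeneracy, which by strict convexity of $\varphi$ away from the two absolute minima would contradict minimality or uniqueness; the same reasoning yields the strict bounds $\rho^-<\bar{\rho}_i<\rho^+$. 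Uniqueness up to translation then follows once the symmetry axis is fixed.

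Regularity, the Euler-Lagrange equations (\ref{statcond}), (\ref{e-l'}), and the oddness (\ref{oddness}) are routine once a minimizer is in hand: mass-preserving variations give (\ref{statcond}); the resulting identity $\bar{\rho}_i=\exp(C_i-\beta U\ast\bar{\rho}_{i+1})$ bootstraps $\bar{\rho}_i\in C^\infty$ because $U\in C_c^\infty$; differentiating in $x$ and multiplying by $\bar{\rho}_i$ gives (\ref{e-l'}); and (\ref{oddness}) is immediate by differentiating the symmetry relation. The main obstacle I anticipate is the exponential decay. Setting $\eta_i=\bar{\rho}_i-\rho^\pm$ near $\pm\infty$ and linearizing (\ref{statcond}) about the pure phase yields a linear nonlocal system whose Fourier symbol is entire (since $U$ is compactly supported) and, thanks to $\beta\rho>2$ together with the local-minimum property of the pure phases for $\varphi$, has no zeros on a strip $|\Im k|<\alpha$ for some $\alpha>0$. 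A Paley-Wiener/contour-deformation argument then provides exponential decay of the linearized solutions, and a contraction in an exponentially weighted norm matches the nonlinear front onto this stable manifold. Differentiating (\ref{statcond}) iteratively and using the smoothness and compact support of $U$ transports the decay rate to all higher derivatives.
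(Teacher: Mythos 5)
The paper does not prove Theorem \ref{inf}: it is stated as a summary of results from \cite{CCELM2} (and \cite{EGM}), so there is no in-paper proof to compare against. Judging your proposal on its own, the central difficulty is that $\hat{\mathcal{F}}$ is \emph{not} convex in the ordinary (linear-interpolation) sense, and this undermines both your symmetrization step and your monotonicity-by-reflection step. The interaction energy is an indefinite bilinear form: computing the second variation of $\varphi$ plus the nonlocal term at the constant state $(1,1)$ along a direction $(h,-h)$ with slowly varying $h$ gives approximately $2(1-\beta)\|h\|_{L^2}^2<0$ for $\beta>1$. Hence the average $\tfrac12(\rho+\sigma\rho)$ need not have energy $\le\hat{\mathcal{F}}(\rho)$; strict convexity of $s\log s$ alone cannot overcome an interaction term whose second difference $J(\rho-\sigma\rho)$ has no definite sign. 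This is exactly the reason the cited paper (whose title is ``Displacement convexity and minimal fronts at phase boundaries'') replaces linear interpolation with McCann-type displacement interpolation in the Wasserstein geometry, where the functional \emph{is} convex; without that replacement the uniqueness and symmetry arguments do not close. For the same reason, your claim that the integrand of $\mathcal{F}_{(-\ell,\ell)}-2\ell\varphi(\rho^+,\rho^-)$ ``becomes manifestly non-negative'' after subtracting chemical-potential terms is false for general competitors: the nonlocal piece $[\rho_1(x)-\rho_1(y)][\rho_2(y)-\rho_2(x)]$ has a sign only for oppositely monotone profiles, so boundedness below and compactness of a minimizing sequence on $\mathbb{R}$ need a genuinely different mechanism (again supplied by displacement convexity together with a truncation/translation normalization).

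By contrast, the later parts of your proposal are sound in outline and match the standard route: once a minimizer is in hand, (\ref{statcond}) follows from mass-preserving variations, $\bar\rho_i=\exp(C_i-\beta U*\bar\rho_{i+1})$ bootstraps to $C^\infty$, differentiating gives (\ref{e-l'}), the symmetry gives (\ref{oddness}), and exponential decay is obtained by linearizing at the pure phases and using that $\widehat U$ is entire (Paley--Wiener on a strip), with a contraction in an exponentially weighted norm for the nonlinear remainder. So the genuine gap is confined to existence/uniqueness/symmetry/monotonicity, where linear convexity is being used where only displacement convexity is available.
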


The aim of this paper is to show that for $\beta $ larger than a certain
critical value, the non homogeneous equilibria, the front solutions $(\bar{
\rho}_{1},\bar{\rho}_{2})$ are dynamically stable with respect to the
evolution (\ref{maineq}), while the homogeneous mixed phase
is unstable. This will show that a complete bifurcation scenario arises in
this model for any value of the Knudsen number, i.e. the ratio between the
mean free path and the range of the interaction potential.

In the rest of the paper, without loss of generality we fix the asymptotic
total density $\rho =\rho ^{+}+\rho ^{-}=2$. If $\beta <1$, the only
minimizer is the homogeneous equilibrium (mixed phase) $M_{\text{hom}}(\xi
)=(\mu _{\beta }(\xi ),\mu _{\beta }(\xi ))$. On the other hand, if $\beta >1
$, the pure phases $M_{\text{red}}=(\rho ^{+}\mu _{\beta }(\xi ),\rho
^{-}\mu _{\beta }(\xi ))$ and $M_{\text{blue}}=(\rho ^{-}\mu _{\beta }(\xi
),\rho ^{+}\mu _{\beta }(\xi ))$ are constant minimizers and $M_{\text{hom}}$
is a maximizer. Moreover, we have the non homogeneous equilibrium 
\begin{equation}
M_{\bar{\rho}}=(\bar{\rho}_{1}(x)\mu _{\beta }(\xi ),\bar{\rho}_{2}(x)\mu
_{\beta }(\xi )),  \label{eqfront}
\end{equation}
with $\bar{\rho}_{1}(x)+\bar{\rho}_{2}(x)\rightarrow 2$ as $x\rightarrow \pm
\infty $.

We assume that the initial datum for the dynamics (\ref{maineq}) is a small
perturbation of one of the above equilibria, denoted generically\textbf{\ }
by $M$: 
\begin{equation}
f_{i}(0)=M_{i}+\sqrt{M_{i}}g_{i}(0),  \label{init}
\end{equation}
with $g_{i}$ sufficiently small in a sense that will be specified later.

Let $f(t,x,\xi )$ be the solution to (\ref{maineq}) and define $g(t,x,\xi )$
by setting 
\begin{equation}
f_{i}(t,x,\xi )=M_{i}(x,\xi )+\sqrt{M_{i}(x,\xi )}g_{i}(x,\xi ,t).
\label{basic-split}
\end{equation}
The equation for the perturbation $g$ is 
\begin{eqnarray}
&&\phantom{...}\left\{ \partial _{t}+v\partial _{x}+F\left( M_{i+1}+\sqrt{M_{i+1}}
g_{i+1}\right) \partial _{v}+\nu (x,\xi )\right\} g_{i}  \label{maineqpert}
\notag\\&&=\beta F\left( \sqrt{M_{i+1}}g_{i+1}\right) \sqrt{M_{i}}
v +\sum_{j=1}^{2}K^{i,j}g_{j}+F\left( \sqrt{M_{i+1}}g_{i+1}\right)
g_{i}v\\&&+\Gamma (g_{i},g_{i})+\Gamma (g_{i},g_{i+1})\notag, 
\end{eqnarray}
where we have used the notation: 
\begin{eqnarray}
&&\nu (x,\xi )=\int_{\mathbb{R}^{3}\times \mathbf{S}^{2}}|(\xi -\xi ^{\prime
})\cdot \omega |\big(M_{1}(\xi ^{\prime })+M_{2}(\xi ^{\prime })\big)d\xi
^{\prime }d\omega ,  \notag  \label{definitions} \\
&&K^{i,i}g_{i}=\frac{1}{\sqrt{M_{i}}}Q_{\text{gain }}\left( \sqrt{M_{i}}
g_{i},M_{1}+M_{2}\right) +\frac{1}{\sqrt{M_{i}}}Q\left( M_{i},\sqrt{M_{i}}
g_{i}\right) ,  \notag \\
&& \\
&&K^{i,i+1}g_{i+1}=\frac{1}{\sqrt{M_{i}}}Q\left( M_{i},\sqrt{M_{i+1}}
g_{i+1}\right) ,  \notag \\
&&\Gamma (g_{i},g_{j})=\frac{1}{\sqrt{M_{i}}}Q\left( \sqrt{M_{i}}g_{i},\sqrt{
M_{j}}g_{j}\right) .  \notag
\end{eqnarray}

We will need the following symmetry condition on the initial data: 
\begin{equation}  \label{symm}
g_1(0,x,v,\zeta)=g_2(0,-x,-v,\zeta).
\end{equation}
We note that such a property is preserved by the time evolution. It plays a
crucial role in removing the obvious orbital instability of our problem
related to the translation invariance: indeed such an invariance is
explicitly broken when condition (\ref{symm}) is assumed.

The momentum, kinetic energy and particle number of each species are
conserved during collisions, while the sum of potential and kinetic energy
is conserved along the trajectories. Therefore the following quantities are
conserved under the evolution (\ref{maineq}):

\begin{itemize}
\item The masses of the perturbation $g$, 
\begin{equation*}
{\mathcal{M}}_i(g)=\int_\R dx\int_{\mathbb{R}^3}d\xi
(f_i(x,\xi)-M_i(x,\xi)), \quad i=1,2;
\end{equation*}

\item The energy of the perturbation $g$, 
\begin{eqnarray}
\mathcal{E}(g)&=&\int_{\mathbb{R}}dx\left[\int_{\mathbb{R}^{3}}d\xi \frac{
\xi ^{2}}{ 2}\big[(f_{1}+f_{2})-(M_1+M_2)\big]\right.  \notag \\
&+&\left.\int_{\mathbb{R}}dyU(|x-y|)\big[\rho _{f_{1}}(x)\rho
_{f_{2}}(y)-\rho _{M_{1}}(x)\rho _{M_{2}}(y)\big]\right],  \label{energy}
\end{eqnarray}
with $f_i=M_i+\sqrt{M_i}g_i$ and 
\begin{equation}
\rho _{f}(x)=\int_{\mathbb{R}^{3}}d\xi f(x,\xi ),  \label{density}
\end{equation}
the spatial density of the distribution $f$.
\end{itemize}

\noindent Moreover, by standard arguments it follows that the $H$-function
of the perturbation $g$, 
\begin{eqnarray}
H(g)&=&\int_{\mathbb{R}}dx\int_{\mathbb{R}^{3}}d\xi\Big[ \big(f_{1}(x,\xi
)\log f_{1}(x,\xi )+f_{2}(x,\xi )\log f_{2}(x,\xi )\big)  \notag \\
&&-\big(M_{1}(x,\xi )\log M_{1}(x,\xi )+M_{2}(x,\xi )\log M_{2}(x,\xi )\big) 
\Big]  \label{hfunction}
\end{eqnarray}
does not increase in the evolution (\ref{maineq}).

We notice that, since the spatial domain is $\mathbb{R}$, the total masses,
energy and $H$-function of the distribution $f$ are not well defined, while
above differences are finite. Since $H(g)$ does not increase during the
evolution and $\mathcal{E}(g)$ and the total masses ${\mathcal{M}} _{i}(g)$
are constant, any linear combination of them, with positive coefficient for the entropy, does
not increase. In particular, the following non-increasing \textit{
entropy-energy} functional is crucial to study the stability of the
equilibria: 
\begin{equation}
\mathcal{H}(g)=H(g)+\beta \mathcal{E}(g)-\sum_{i=1}^{2}{\mathcal{M}}
_{i}(g)\left\{ C_i+1+\ln \left( \frac{\beta }{2\pi }\right) ^{\frac{3}{2}
}\right\} .  \label{ent-en}
\end{equation}
The factors multiplying the masses are suitably chosen to cancel some linear
terms, as will be shown in next section. The factor $\beta $ in front of the
temperature is dictated by the free energy minimizing properties of the
equilibria.

In the next sections we shall use the following weighted norm for $p\in
\lbrack 1,+\infty ]$: 
\begin{equation*}
\Vert f\Vert _{L_{w}^{p}}=\left( \int_{\mathbb{R}}dx\int_{\mathbb{R}
^{3}}d\xi |w(\xi )f(x,\xi )|^{p}\right) ^{\frac{1}{p}},
\end{equation*}
for some positive weight function $w$. Moreover $L_{w,p}$ denotes the space
of the measurable functions om $\mathbb{R}\times \mathbb{R}^{3}$ with $\Vert
\,\cdot \,\Vert _{L^{p},w}$ finite. When notation will be unambiguous, we
will also omit the index $w$. Finally $\nabla _{x,v}$ denotes the couple $
(\partial _{x},\partial _{v})$.

\begin{theorem}[Stability]
\label{stabl} Assume $\beta >1$ {and $M=M_{\bar{\rho}}$}. Let $w(\xi
)=(\Sigma +|\xi |^{2})^{\gamma }$, for $\Sigma >0$ and $\gamma >3/2$. There
are $\delta >0$ and $\Sigma >0$, $C>0$ such that, if the initial datum $
f_{i}(0)=M_{i}+\sqrt{M_{i}}g_{i}(0)$ satisfies the symmetry condition (\ref
{symm}), the bound 
\begin{equation*}
\Vert wg(0)\Vert _{L^{\infty }}+\sqrt{\mathcal{H}(g(0))}<\delta 
\end{equation*}
and $\Vert \nabla _{x,v}g(0)\Vert _{L^{2}}<+\infty $, then the initial value
problem for (\ref{maineqpert}) has a unique global in time solution with 
\begin{eqnarray}
\sup_{0\leq t\leq \infty }\Vert wg(t)\Vert _{L^{\infty }} &\leq &C\{\Vert
wg(0)\Vert _{L^{\infty }}+\sqrt{\mathcal{H}(g(0))}\},  \label{hstability} \\
\Vert \nabla _{x,v}g(t)\Vert _{L^{2}} &\leq &e^{Ct}\Vert \nabla
_{x,v}g(0)\Vert _{L^{2}}.  \label{h1bound}
\end{eqnarray}
\end{theorem}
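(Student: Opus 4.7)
My overall plan has three ingredients: a coercivity estimate for the entropy-energy functional $\mathcal{H}$ around $M_{\bar\rho}$, a Guo-type $L^{\infty}_{w}$ bootstrap along the Vlasov characteristics, and a Gr\"onwall-type $L^{2}$ energy estimate for $\nabla_{x,v}g$ that allows continuation in time.

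I would first expand $\mathcal{H}(g)$ around $M_{\bar\rho}$. The Euler--Lagrange equations (\ref{statcond}) for $\bar\rho$ together with the specific constants subtracted in (\ref{ent-en}) cancel every linear contribution, and a direct computation gives the quadratic part
\begin{equation*}
\mathcal{H}_{2}(g)=\frac12\sum_{i=1}^{2}\int_{\mathbb{R}\times\mathbb{R}^{3}}g_{i}^{2}\,dx\,d\xi+\beta\iint_{\mathbb{R}^{2}}U(|x-y|)\sigma_{1}(x)\sigma_{2}(y)\,dx\,dy,
\end{equation*}
with $\sigma_{i}(x):=\int\sqrt{M_{i}(x,\xi)}\,g_{i}(x,\xi)\,d\xi$ and a cubic remainder bounded by $C\|wg\|_{L^{\infty}}\|g\|_{L^{2}}^{2}$. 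Decomposing $g_{i}=(\sigma_{i}/\sqrt{\bar\rho_{i}})\sqrt{\mu_{\beta}}+g_{i}^{\perp}$ with $\int\sqrt{\mu_{\beta}}\,g_{i}^{\perp}\,d\xi=0$ rewrites the quadratic form as $\mathcal{H}_{2}(g)=\tfrac12\sum_{i}\|g_{i}^{\perp}\|_{L^{2}}^{2}+\tfrac12\,\delta^{2}\hat{\mathcal{F}}|_{\bar\rho}(\sigma_{1},\sigma_{2})$, where the second variation of the excess free energy at its minimizer is non-negative. Its only null direction is the translation mode $(\bar\rho_{1}',\bar\rho_{2}')$, which by (\ref{oddness}) satisfies $\bar\rho_{1}'(x)=-\bar\rho_{2}'(-x)$. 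The symmetry (\ref{symm}), preserved by the flow, forces $\sigma_{1}(x)=\sigma_{2}(-x)$, which is incompatible with that translation direction since $\bar\rho_{1}$ is strictly monotone. Promoting this orthogonality into a uniform spectral gap $\mathcal{H}_{2}(g)\geq c_{0}\|g\|_{L^{2}}^{2}$ on the symmetric subspace---through a Fredholm/compactness argument exploiting $\rho^{-}\leq\bar\rho_{i}\leq\rho^{+}$ and the exponential stabilization of $\bar\rho$ given by Theorem \ref{inf}---is the main technical obstacle of the proof.

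Granted coercivity, the monotonicity of $\mathcal{H}$ under (\ref{maineqpert}), together with absorption of the cubic remainder for small $\|wg\|_{L^{\infty}}$, yields the uniform-in-time bound $\|g(t)\|_{L^{2}}^{2}\leq C\,\mathcal{H}(g(0))$. I would then promote this to (\ref{hstability}) via the mild formulation of (\ref{maineqpert}) along the characteristics of $\partial_{t}+v\partial_{x}+F(M_{i+1}+\sqrt{M_{i+1}}g_{i+1})\partial_{v}$. The collision frequency satisfies $\nu(x,\xi)\gtrsim(1+|\xi|)$ (uniformly in $x$, since $\bar\rho_{1}+\bar\rho_{2}$ is bounded below), and a Vidav-type double iteration on the compact gain operator $K=(K^{i,j})$ produces an $L^{2}\to L_{w}^{\infty}$ smoothing provided $\Sigma$ in the weight is chosen large enough, so that the linear part of $wg(t)$ is bounded by $\|wg(0)\|_{L^{\infty}}+C\sqrt{\mathcal{H}(g(0))}$. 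The nonlinear sources $\Gamma(g_{i},g_{j})$ and the self-Vlasov contribution $F(\sqrt{M_{i+1}}g_{i+1})g_{i}v$ are quadratic in $\|wg\|_{L^{\infty}}$ and are closed by a standard continuity/smallness argument, yielding (\ref{hstability}).

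Finally, (\ref{h1bound}) follows from an $L^{2}$ energy estimate after differentiating (\ref{maineqpert}) in $(x,v)$: the transport part is antisymmetric, the linearized collision operator is coercive on the microscopic part and absorbed by the uniform $L^{2}$ bound on the hydrodynamic part, the commutator terms involving derivatives of $M_{\bar\rho}$ are bounded by a time-independent constant thanks to the exponential decay of $\bar\rho'$ and its higher derivatives (Theorem \ref{inf}), and the nonlinear Vlasov and $\Gamma$ contributions are tamed by the uniform $L^{\infty}_{w}$ bound just established. This produces $\frac{d}{dt}\|\nabla_{x,v}g(t)\|_{L^{2}}^{2}\leq C\|\nabla_{x,v}g(t)\|_{L^{2}}^{2}$ and hence (\ref{h1bound}). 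Combined with standard local well-posedness in $L^{\infty}_{w}\cap H^{1}_{x,v}$, a continuation argument using the above a priori estimates delivers the claimed global solution.
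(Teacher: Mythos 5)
Your overall three-step architecture (entropy coercivity $\to$ $L^\infty_w$ bootstrap $\to$ $H^1$ Gr\"onwall) matches the paper's, but there are two genuine gaps in the middle of the argument.

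First, the entropy expansion does not deliver the pure $L^2$ bound $\|g(t)\|_{L^2}^2\le C\,\mathcal{H}(g(0))$ that you invoke. Your ``cubic remainder bounded by $C\|wg\|_{L^\infty}\|g\|_{L^2}^2$'' is not under control: writing $f_i=M_i+\sqrt{M_i}\,g_i$, the ratio $|f_i-M_i|/M_i=|g_i|/\sqrt{M_i}$ can be large for large $|\xi|$ even when $\|wg\|_{L^\infty}$ is small, because $w(\xi)\sqrt{M_i}\to 0$. Hence the Taylor remainder in the entropy, which carries an intermediate value $\tilde f_i$ in its denominator, cannot be absorbed uniformly. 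The paper handles this by splitting into $\{|f_i-M_i|\le\kappa M_i\}$ and $\{|f_i-M_i|>\kappa M_i\}$ and obtains a \emph{mixed} bound: $L^2$-type control of $f_i-M_i$ on the small-deviation set and $L^1$-type control on the large-deviation set (see Lemma \ref{entropy}). The authors in fact stress in the introduction that ``even a $L^2$ stability of $g$ is difficult to obtain directly,'' and the whole point of the mixed estimate is to circumvent exactly the step you assert. Your coercivity-plus-cubic-remainder route does not close.

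Second, and more seriously, your $L^\infty_w$ bootstrap (``Vidav-type double iteration... $L^2\to L^\infty_w$ smoothing'') skips the central technical difficulty. After the double Duhamel iteration, one must change variables from the intermediate velocity $v'$ to the spatial variable $y=X_j(s_1;X_i(s),v')$ along the \emph{curved} characteristics generated by the Vlasov force $F(M_{i+1}+\sqrt{M_{i+1}}g_{i+1})$. The Jacobian $\partial X_j/\partial v'$ can vanish, and precisely because the potential is not small, this cannot be treated as a perturbation of straight lines. Section 3 of the paper (Lemmas \ref{countable}--\ref{determinant}) is devoted to showing that the zero set of $\partial X^0/\partial v$ is countable, and then to producing, for $|x|$ large, a uniform lower bound $|\partial X^0/\partial v|\ge\varepsilon/2$ for $s\le t-\varepsilon$, and for $|x|$ bounded, open sets $O_{x_l}$ covering all of $[0,T_0]\times\{|v|\le N\}$ up to measure $\eta$ on which $|\partial X/\partial v|\ge m/2$. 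Without this, the change of variables in the double-iterated gain term has no justification, and the reduction to the entropy bound (which lives in physical space and velocity separately) cannot be made. The choice of $\Sigma$ large is used in the paper only to make $|w'/w|\le\varepsilon$; it does not substitute for the Jacobian analysis. This is precisely the novelty the authors highlight as being the first stability argument in the presence of a large external field; a proposal that does not address it has a hole at the crux of the proof.

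Your identification of the translation null mode and its removal via the symmetry $g_1(x,v,\zeta)=g_2(-x,-v,\zeta)$ plus the oddness $\bar\rho_1'(x)=-\bar\rho_2'(-x)$ is in the right spirit (the paper phrases it as orthogonality of $(n_1,n_2)$ to $\mathrm{Null}\,A$), and the $H^1$ Gr\"onwall estimate with the non-decaying constant from $\partial_x F(M_{i+1})$ is essentially what the paper does, so those parts are fine.
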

{\bf Remark 1.1}:  {\it Theorem \ref{stabl}
implies the stability of the non homogeneous equilibrium solution $M_{\bar{
\rho}}$ in the Vlasov-Boltzmann evolution (\ref{maineq}), with respect to
initial perturbations satisfying (\ref{symm}). It turns out that $H(g(0))>0$\ for $\Vert wg(0)\Vert _{L^{\infty }}$ small (see Lemma \ref{entropy}).}

Dynamical stability and rate of convergence for the same front
solutions has been established \cite{EGM} for the Vlasov-Fokker-Planck
dynamics: the Vlasov force is the same introduced here, but the collisions
are replaced by a Fokker-Plank operator modeling the contact with a
reservoir at inverse temperature $\beta $. A careful analysis of the \textit{
macroscopic} equation plays an important role.

To prove nonlinear stability, the first major mathematical difficulty we
encounter is the presence of a large amplitude potential $U$. To our
knowledge, so far there has been no published work on stability result in
the presence of a large external field in the Boltzmann theory. The main
problem is the collapse of Sobolev estimate in higher order energy norms.
Indeed, even upon taking one $x-$derivative, the $H^{1}$-norm might actually
grow in time due to the presence of the term $\partial
_{x}F(M_{i+1})\partial _{v}g_{i}$ in (\ref{xderi}). We are thus forced to
design a strategy of proof based on a weighted $L^{\infty }$ formulation
without any derivatives and get control of derivatives only afterwords.
Furthermore, unlike the previous linear Fokker-Planck interaction, the
nonlinear Boltzmann collisions make it difficult to analyze the equation
for the projection on the hydrodynamical modes in this case. In fact, even a 
$L^{2}$ stability of $g$ is difficult to obtain directly from the analysis
of eq. (\ref{maineqpert}). Instead, we avoid a direct study of eq. (\ref
{maineqpert}) and make crucial use of the fundamental entropy-energy $
\mathcal{H}(g)$ estimate to obtain a mixed $L^{1}-L^{2}$ type of stability
estimate, based on the spectral gap of the linearized free energy operator $
A $ (Lemma \ref{entropy}). We note that the spectral gap estimate relies in
an essential way on the minimizing properties of the front solution and the
symmetry condition is used to control the part of the solution in the null
space of the operator $A$. We then bootstrap such a $L^{2}$ stability to a $
L^{\infty }$ estimate to obtain pointwise stability estimate, by following
the curved trajectory induced by the force field $F(M_{i+1}+\sqrt{M_{i+1}}
g_{i+1})$. The success of such a strategy is somewhat surprising: the
perturbation of the field does not need to decay in time, and all the
analysis over the one dimensional trajectory is carried out in a \textit{\
finite time} interval $[0,T_{0}]$ (Lemma \ref{bound}). Yet this is still
sufficient for the global in time estimate due to the strong exponential
time decay from the collision frequency.

\vskip.2cm
\noindent{\bf Remark 1.2}:  {\it The same result also holds
when perturbing initially the equilibria $M_{\text{\rm red}}$
and $M_{\text{\rm blue}}$. Indeed the proof is even simpler in
this case, because the the analog of the operator $A$ has again a spectral
gap property, but its null space is trivial. Due to this, the symmetry
assumption (\ref{symm}) is no more needed.}

We next discuss the homogeneous equilibrium $M_{\text{hom}}=(\mu _{\beta },\mu _{\beta })$.

\begin{theorem}
Assume $\beta<1$ and $M=M_{\text{\rm hom}}$. There are $\delta >0$ and $
\Sigma >0$, $C>0$ such that, if the initial datum $f_{i}(0)=M_{i}+\sqrt{M_{i}
}g_{i}(0)$ satisfies the bound 
\begin{equation*}
\Vert wg_{0}\Vert _{L^{\infty }}+\sqrt{\mathcal{H}(g(0))}<\delta
\end{equation*}
and $\Vert \nabla _{x,v}g(0)\Vert _{L^{2}}<+\infty $, then the initial value
problem for (\ref{maineqpert}) has a unique global in time solution
satisfying the estimates (\ref{hstability}) and (\ref{h1bound}).
\end{theorem}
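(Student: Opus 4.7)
The strategy parallels the proof of Theorem \ref{stabl} but is in fact strictly simpler. The four steps are: (i) a quadratic lower bound $\mathcal{H}(g)\geq c_0\|g\|_{L^2}^2$ near $M_{\text{hom}}$, (ii) an $L^2$ stability estimate from the monotonicity of $\mathcal{H}$ along (\ref{maineq}), (iii) a bootstrap to the weighted $L^{\infty}$ bound (\ref{hstability}) via the mild form of (\ref{maineqpert}) on perturbed characteristics, and (iv) a Gr\"onwall energy estimate yielding (\ref{h1bound}).

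For step (i) I would expand $\mathcal{H}(g)$ to second order at $M_{\text{hom}}$. The choice of coefficients in (\ref{ent-en}) was designed precisely to kill every linear contribution, so at quadratic order
\begin{equation*}
\mathcal{H}(g) = \frac{1}{2}\sum_{i=1}^{2}\|g_i\|_{L^2}^2 + \beta\int_{\R^2}U(|x-y|)\sigma_1(x)\sigma_2(y)\,dx\,dy + O\bigl(\|wg\|_{L^{\infty}}\|g\|_{L^2}^2\bigr),
\end{equation*}
where $\sigma_i(x)=\int\sqrt{\mu_\beta}\,g_i(x,\xi)\,d\xi$ is the density fluctuation of species $i$. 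Young's convolution inequality with $\int U=1$ gives $\|U\ast\sigma_2\|_{L^2}\leq\|\sigma_2\|_{L^2}$, and Cauchy--Schwarz in $\xi$ together with $\int\mu_\beta=1$ yields $\|\sigma_i\|_{L^2}^2\leq\|g_i\|_{L^2}^2$. Combining,
\begin{equation*}
\mathcal{H}(g)\geq \frac{1-\beta}{2}\bigl(\|g_1\|_{L^2}^2+\|g_2\|_{L^2}^2\bigr) + O\bigl(\|wg\|_{L^{\infty}}\|g\|_{L^2}^2\bigr),
\end{equation*}
which is strictly positive precisely for $\beta<1$. In sharp contrast with the front case of Theorem \ref{stabl}, the quadratic form is positive-definite \emph{with no null direction} (the translation degeneracy of $M_{\text{hom}}$ is trivial), so the symmetry condition (\ref{symm}) plays no role. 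Provided $\|wg(0)\|_{L^{\infty}}$ is small, the cubic remainder is absorbed by a standard continuity argument and the non-increase of $\mathcal{H}$ then gives $\|g(t)\|_{L^2}\leq C\sqrt{\mathcal{H}(g(0))}$ for all $t\geq 0$, which is the conclusion of step (ii).

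Steps (iii) and (iv) follow essentially verbatim the scheme sketched after Theorem \ref{stabl}: integrate (\ref{maineqpert}) along the characteristics of the perturbed field $F(M_{i+1}+\sqrt{M_{i+1}}\,g_{i+1})$ over a short window $[0,T_0]$, iterate Duhamel to absorb the kernels $K^{i,j}$ through the $L^2$ bound just obtained by means of the $L^2\!\to\! L^{\infty}$ reconstruction for $Q_{\text{gain}}$, and propagate the estimate globally using the exponential time decay coming from the collision frequency $\nu(\xi)$. Two simplifications specific to the homogeneous equilibrium make this step, and the subsequent $H^1$ commutator estimate, notably easier than for the front: since $\rho_{M_{\text{hom},i+1}}\equiv 1$ one has $F(M_{\text{hom},i+1})=-\partial_x(U\ast 1)\equiv 0$, and $\partial_x M_{\text{hom}}\equiv 0$. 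These identities eliminate in particular the delicate term $\partial_x F(M_{i+1})\partial_v g_i$ of (\ref{maineqpert}) that was the principal obstruction in the front case. Step (iv) then reduces to commuting $\partial_x,\partial_v$ with (\ref{maineqpert}), estimating the resulting nonlinear commutators in $L^2$ via the already-established $L^{\infty}_w$ bound on $g$, and applying Gr\"onwall to obtain (\ref{h1bound}). The main obstacle of the whole argument is thus the coercivity estimate of step (i), and even that is elementary here, since translation invariance and the bound $|\widehat{U}(k)|\leq 1$ reduce it to the clean convex combination above.
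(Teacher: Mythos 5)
Your overall plan is the same as the paper's: the paper states in Remark~1.3 that the proof of this theorem is a simpler version of the proof of Theorem~\ref{stabl}, using Lemma~\ref{entropy} adapted as in Remark~2.1 and Lemma~\ref{spA} for $\rho=(1,1)$, $\beta<1$. You correctly identify the two simplifications specific to $M_{\text{hom}}$ --- the null space of $A$ is trivial so the symmetry condition (\ref{symm}) is unnecessary, and $F(M_{\text{hom},i+1})\equiv 0$, $\partial_x M_{\text{hom}}\equiv 0$ kill the problematic commutator $\partial_x F(M_{i+1})\partial_v g_i$ --- and your Young/Cauchy--Schwarz derivation of the coercivity constant $1-\beta$ is exactly the content of the second part of Lemma~\ref{spA}.

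However, step (i) as you state it has a genuine gap. You claim
\begin{equation*}
\mathcal{H}(g) = \tfrac{1}{2}\|g\|_{L^2}^2 + \beta\!\int\! U\,\sigma_1\sigma_2 + O\bigl(\|wg\|_{L^{\infty}}\|g\|_{L^2}^2\bigr),
\end{equation*}
and conclude the clean $L^2$ bound $\|g(t)\|_{L^2}\le C\sqrt{\mathcal{H}(g(0))}$. This Taylor-remainder estimate is not valid: the exact second-order term is $\sum_i\int (f_i-M_i)^2/(2\tilde f_i)$ with $\tilde f_i$ between $f_i$ and $M_i$, and replacing $\tilde f_i$ by $M_i$ requires the pointwise ratio $|f_i-M_i|/M_i = |g_i|/\sqrt{M_i}$ to be small. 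The hypothesis $\|wg\|_{L^\infty}\le\delta$ with a polynomial weight $w$ does \emph{not} give that, since $|g_i|\le \delta/w(\xi)$ decays only polynomially while $\sqrt{M_i}\sim e^{-\beta|\xi|^2/4}$ decays exponentially; for large $|\xi|$ the ratio is unbounded and the ``cubic remainder'' has no useful size. This is precisely why Lemma~\ref{entropy} introduces the indicator functions $\chi_i^{<}=\mathbf 1_{|f_i-M_i|\le\kappa M_i}$ and $\chi_i^{>}$ and concludes only a \emph{mixed} $L^1$--$L^2$ estimate,
\begin{equation*}
C\sum_i\int\Bigl\{\frac{(f_i-M_i)^2}{M_i}\chi_i^{<}+|f_i-M_i|\chi_i^{>}\Bigr\}\le \mathcal{H}(g(0)),
\end{equation*}
rather than a bound on $\|g\|_{L^2}$. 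The downstream argument in Lemma~\ref{bound} is designed to use exactly this mixed bound (e.g.\ in the estimate for $\|F(\sqrt{M_{i+1}}g_{i+1})\|_{L^\infty}$ one splits into an $L^2$ part on $\chi^{<}$ and an $L^1$ part on $\chi^{>}$), so you should invoke Lemma~\ref{entropy}/Remark~2.1 directly rather than a pure $L^2$ coercivity; with that substitution, your steps (iii)--(iv) go through as you describe.
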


\noindent{\bf Remark 1.3}:  {\it This implies the stability
of the mixed phase for $\beta <1$. The proof is again a simpler version of
the one for Theorem \ref{stabl} which will be omitted. In this case the
analog of operator $A$ has a spectral gap property (because $\beta <1$) and
a trivial null space, so that we do not need to require the symmetry
property (\ref{symm}).}

We have already noticed that, when $\beta >1$, the couple $(1,1)$
is a local maximizer for the free energy. In next theorem we establish the
dynamical instability of $M_{\text{hom}}$.

\begin{theorem}[Instability]
\label{ninst}Assume $\beta >1$. There exist constants $k_{0}>0$, $\theta >0$
, $C>0$, $c>0$ and a family of initial $\frac{2\pi }{k_{0}}-$periodic data $
f_{i}^{\delta }{(0)}=\mu _{\beta }+\sqrt{\mu _{\beta }}g_{i}^{\delta }{
(0)\geq 0}$, with $g^{\delta }(0)$ satisfying (\ref{symm}) and 
\begin{equation*}
\Vert \nabla _{x,v}g^{\delta }(0)\Vert _{L^{2}}+\Vert wg^{\delta }(0)\Vert
_{L^{\infty }}\leq C\delta ,
\end{equation*}
for $\delta $ sufficiently small, but the solution $g^{\delta }(t)$ to (\ref
{maineq}) satisfies 
\begin{equation*}
\sup_{0\leq t\leq T^{\delta }}\Vert wg^{\delta }(t)\Vert _{L^{\infty }}\geq
c\sup_{0\leq t\leq T^{\delta }}\Vert g^{\delta }(t)\Vert _{L^{2}}\geq
c\theta >0.
\end{equation*}
Here the escape time is
\begin{equation}
T^{\delta }=\frac{1}{\Re \lambda }\ln \frac{\theta }{\delta },
\label{tdelta}
\end{equation}
$\lambda $ is the eigenvalue with the largest real part for the linearized
Vlasov-Boltz\-mann system constructed in Theorem \ref{lininst} with $\Re 
\lambda >0$.
\end{theorem}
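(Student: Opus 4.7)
The plan is the classical Guo--Strauss nonlinear instability bootstrap: inject the unstable linear eigenmode as initial datum, write Duhamel around the linearization, and show that on $[0,T^\delta]$ the quadratic nonlinearity stays subdominant compared to the exponential linear growth. The main input is Theorem \ref{lininst}, which provides the eigenvalue $\lambda$ with $\Re\lambda>0$ of the linearization $L$ of (\ref{maineqpert}) around $M_{\text{hom}}$, restricted to $\tfrac{2\pi}{k_0}$-periodic functions, an associated eigenfunction $q$, and the semigroup bound $\|e^{tL}\|_{L^{2}\to L^{2}}+\|e^{tL}\|_{L^{\infty}_{w}\to L^{\infty}_{w}}\le Ce^{\Re\lambda\,t}$. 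Projecting $q$ onto the subspace where (\ref{symm}) holds (an $L$-invariant subspace, non-trivial since the symmetry commutes with $L$ and the unstable mode can be chosen symmetric) and normalizing so that $\|wq\|_{L^{\infty}}+\|\nabla_{x,v}q\|_{L^{2}}\le 1$ and $\|q\|_{L^{2}}\ge c_{*}$, I set $g^{\delta}(0)=\delta\,\Re q$. This is $\tfrac{2\pi}{k_{0}}$-periodic, satisfies (\ref{symm}), and since $\mu_{\beta}>0$ and $q$ is $w^{-1}$-controlled, it guarantees $f_{i}^{\delta}(0)\ge 0$ for $\delta$ small.

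Writing (\ref{maineqpert}) as $\partial_{t}g^{\delta}=Lg^{\delta}+N(g^{\delta})$, where $N$ collects the quadratic terms $F(\sqrt{\mu_{\beta}}g_{i+1})\partial_{v}g_{i}$, $F(\sqrt{\mu_{\beta}}g_{i+1})g_{i}v$, $\Gamma(g_{i},g_{i})$ and $\Gamma(g_{i},g_{i+1})$, Duhamel reads
\begin{equation*}
g^{\delta}(t)=e^{tL}g^{\delta}(0)+\int_{0}^{t}e^{(t-s)L}N(g^{\delta}(s))\,ds.
\end{equation*}
Define $T^{*}=\sup\{t\in[0,T^{\delta}]:\ \|wg^{\delta}(s)\|_{L^{\infty}}\le\varepsilon_{0}\text{ for all }s\le t\}$ with $\varepsilon_{0}$ a small universal constant. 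Standard collision-frequency estimates together with the pointwise bound $\|F(\sqrt{\mu_{\beta}}h)\|_{L^{\infty}}\le C\|wh\|_{L^{\infty}}$ yield, on $[0,T^{*}]$,
\begin{equation*}
\|N(g^{\delta}(s))\|_{L^{2}}+\|wN(g^{\delta}(s))\|_{L^{\infty}}\le C\,\|wg^{\delta}(s)\|_{L^{\infty}}\bigl(\|wg^{\delta}(s)\|_{L^{\infty}}+\|g^{\delta}(s)\|_{L^{2}}\bigr).
\end{equation*}
A Gronwall argument using the linear semigroup bound then gives $\|g^{\delta}(t)\|_{L^{2}}+\|wg^{\delta}(t)\|_{L^{\infty}}\le 2C\delta e^{\Re\lambda\,t}$ on $[0,T^{*}]$, and controls the Duhamel integral in $L^{2}$ by $C\delta^{2}e^{2\Re\lambda\,t}/\Re\lambda$.

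Combining with the linear lower bound $\|e^{tL}g^{\delta}(0)\|_{L^{2}}\ge c\delta e^{\Re\lambda\,t}$ coming from the projection of $g^{\delta}(0)$ on the $\lambda$-eigenspace yields
\begin{equation*}
\|g^{\delta}(t)\|_{L^{2}}\ge c\delta e^{\Re\lambda\,t}-C\delta^{2}e^{2\Re\lambda\,t}/\Re\lambda\quad\text{on }[0,T^{*}].
\end{equation*}
Choosing $\theta$ small enough that $C\theta/\Re\lambda\le c/2$ and $2C\theta\le\varepsilon_{0}$, the bootstrap tube cannot be exited before $T^{\delta}$, so $T^{*}=T^{\delta}$ and at $t=T^{\delta}$ (where $\delta e^{\Re\lambda\,T^{\delta}}=\theta$) one has $\|g^{\delta}(T^{\delta})\|_{L^{2}}\ge c\theta/2$. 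Since $g^{\delta}(t,\cdot)$ is $\tfrac{2\pi}{k_{0}}$-periodic in $x$ and $\int w^{-2}\,d\xi<\infty$, the weighted $L^{\infty}$-norm on the cell $[0,2\pi/k_{0}]\times\R^{3}$ bounds the $L^{2}$-norm from below up to a constant depending on $k_{0}$ and $\gamma$, giving the full chain $\sup_{[0,T^{\delta}]}\|wg^{\delta}\|_{L^{\infty}}\ge c\sup_{[0,T^{\delta}]}\|g^{\delta}\|_{L^{2}}\ge c\theta$ claimed in the statement. The main obstacle is the spectral decomposition behind Theorem \ref{lininst}: the generator $L$ has essential spectrum accumulating on $\{\Re z=-\nu_{0}\}$ through the transport--collision part, while the Vlasov cross-term $\beta F(\sqrt{\mu_{\beta}}g_{i+1})\sqrt{\mu_{\beta}}v$ pushes eigenvalues into the right half-plane precisely when $\beta>1$; one must isolate the unstable eigenvalue $\lambda$ on the $k_{0}$-Fourier mode from the rest of the spectrum and control the residual semigroup in both $L^{2}$ and $L^{\infty}_{w}$ by $e^{\Re\lambda\,t}$, which is exactly what Theorem \ref{lininst} supplies. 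Given that linear ingredient, the nonlinear bootstrap above is standard, requiring only the elementary check that $L$ and $N$ commute with both the $\tfrac{2\pi}{k_{0}}$-translation and with (\ref{symm}), so these structures persist under the flow on $[0,T^{\delta}]$.
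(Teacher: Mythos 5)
The overall Guo--Strauss strategy you lay out is the same as the paper's, but two of your key steps are genuinely wrong, and fixing them is precisely what most of Section 6 of the paper is devoted to.

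First, the positivity argument fails. You claim $f_i^\delta(0)=\mu_\beta+\delta\sqrt{\mu_\beta}\,\Re q\ge 0$ because $q$ is ``$w^{-1}$-controlled.'' But $w$ is a polynomial weight while $\mu_\beta$ is Gaussian, so $|q|\le C/w$ only gives $\delta\sqrt{\mu_\beta}|q|\le C\delta\sqrt{\mu_\beta}/w$, which does \emph{not} stay below $\mu_\beta$: the required inequality $C\delta\le w\sqrt{\mu_\beta}$ fails for large $|\xi|$ however small $\delta$ is, since $w\sqrt{\mu_\beta}\to 0$. The paper avoids this by replacing $R$ with a velocity-cutoff approximate eigenfunction $R^\delta$ (Lemma~\ref{cutoff}), compactly supported in $v$, with $\delta|R^\delta|\sqrt{\mu}\le\mu$ and $\|R-R^\delta\|_{L^2}\le\sqrt{\delta}$. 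This cutoff is not cosmetic; it is the only way to get admissible nonnegative initial data.

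Second, the semigroup bound you invoke, $\|e^{tL}\|_{L^2\to L^2}+\|e^{tL}\|_{L^\infty_w\to L^\infty_w}\le Ce^{\Re\lambda\,t}$, is not available and is not what Theorem~\ref{lininst} provides. What one actually has (Lemma~\ref{vidav}, via Vidav's theorem) is $\|e^{-t\mathcal L}\|_{L^2(\mathcal M)\to L^2(\mathcal M)}\le C_\Lambda e^{\Lambda t}$ for every $\Lambda>\Re\lambda$, with $C_\Lambda$ blowing up as $\Lambda\downarrow\Re\lambda$; there is no $L^\infty_w$ semigroup estimate at this sharp rate at all, which is why the paper proves the separate bootstrap Lemma~\ref{boot} transferring the $L^2$ growth rate to $\nabla_{x,v}g$ in $L^2$ and to $wg$ in $L^\infty$. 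Once the cutoff from the first point is introduced, the gap $\Lambda-\Re\lambda>0$ becomes genuinely dangerous: the semigroup applied to the error $R-R^\delta$ grows like $C_\Lambda\sqrt{\delta}\,e^{\Lambda t}$, and controlling it on $[0,T^\delta]$ requires the delicate balancing in the paper --- the choice $\Re\lambda<\Lambda<\tfrac32\Re\lambda$, the auxiliary time $T^{\delta\delta}=\tfrac{1}{\Lambda-\Re\lambda}|\ln(\zeta r/2C_\Lambda\sqrt{\delta})|$, and the verification $T^\delta\le T^{\delta\delta}$. Your simplified bound bypasses all of this, which means the argument as written does not close. You also silently assert a lower bound on $\|e^{-t\mathcal L}\Im R\|_{L^2}$ from ``projection onto the eigenspace,'' but $\Im R$ is not itself an eigenvector when $\lambda$ is complex; the paper establishes this as a separate lemma by a subsequence/contradiction argument.
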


\noindent {\bf Remark 1.4}: {\it Note that $T^{\delta }\rightarrow
\infty $ as $\delta \rightarrow 0$. We also observe that the critical value $
\beta =1$ escapes our analysis, since it is based on some strict
inequalities which collapse for the critical $\beta $. Furthermore, the
growing mode which we construct satisfies the symmetry condition (\ref{symm}).
Hence the instability is not a consequence of the absence of such a symmetry.}

\vskip.2cm To prove instability of the homogeneous state, we encounter a
second major difficulty. Even though such a homogeneous equilibrium is not a
local minimizer for the free-energy (\ref{excess}), it is not clear at all
if such a property leads to a dynamical instability along the time
evolution. As a matter of fact, the presence of (possibly strong)
stabilizing collision in the velocity space might damp out the instability.
It is thus fundamental to understand such a stabilizing collision effect.
Unfortunately, a direct linear stability analysis in the presence of the
collision effect is too complicated to draw any conclusion. We therefore
developed a new perturbation argument to establish the linear instability
around such a homogeneous steady state. We first observe that in the absence
of the collision effect, the homogeneous state is indeed dynamically
unstable, by an explicit analysis similar to the Penrose criterion in plasma
physics \cite{Pen}. It then follows, via a perturbation argument, that for
`weak' collision effects, such an instability should persist. The key is to
show that this is even true for arbitrarily `strong' collision effect. We
use an argument of contradiction and a method of continuation. In fact, if
instability should fail at some level of the collision effect, then a
neutral mode must occur. The interaction between the Vlasov force and the
collision effect forces any neutral mode to behave like a multiple of $M_{
\text{hom}}$ with a particular dispersion relation never satisfied for our
model (Theorem \ref{lininst}). To bootstrap such a linear instability into a
nonlinear one is delicate due to severe nonlinearity, and we follow the
program developed by Strauss and the second author over the years \cite
{GS1,GS2}.

We remark that the instability of the homogeneous equilibrium for VFP model
is still open because the techniques used in the present paper cannot apply
due to the unboundedness of the Fokker-Plank operator. On the other hand, in
this paper we cannot prove the convergence to the stable equilibrium, while
it was possible in the VFP model even to compute the rate.

The paper is organized as follows: In Section 2 we use the energy-entropy to derive a mixed $L^{1}-L^{2}$
estimate. In Section 3, we establish some lemmas on the characteristics
curves for the equations (\ref{maineq}). In Section 4 we establish the
nonlinear stability in $L^{\infty }$ norm. In Section 5 we construct a
growing mode for the linearized problem around the homogeneous equilibrium
and finally, in Section 6 we show that such a linear growing mode leads to
nonlinear instability.

\section{Entropy-Energy Estimate.}

In this section we use the conservation of energy and masses, and the
entropy inequality to obtain a priori estimates on the deviation of the
solution from equilibrium. A crucial role is played by the quadratic
approximation of the excess free energy functional $\hat{\mathcal{F}}$. We
need a few definitions and some notation.

Let $u=(u_{1},u_{2})$ be a couple of functions in $L_{2}(\mathbb{R})$ and denote $
\langle u,v\rangle =\sum_{i=1}^{2}(u_i,v_i)$. Given a
couple of densities $\rho=(\rho_1(x),\rho_2(x))$, we define the operator $A$
by setting 
\begin{equation*}
\langle u,A u\rangle :=\sum_{i=1}^{2}\int_{\mathbb{R}}\mathrm{d}
xu_{i}(x)(Au)_{i}(x)=\frac{1}{2}\frac{d^{2}}{ds^{2}}\hat{\mathcal{F}}({\ \rho
}+su)\big |_{s=0}\ .
\end{equation*}
In particular, when $\rho=(\bar\rho_1(x),\bar\rho_2(x))$, the action of the
operator $A$ on $u$ is 
\begin{equation}
(Au)_{1}=\frac{u_{1}}{\bar{\rho}_{1}}+\beta U\ast u_{2},\quad (Au)_{2}=\frac{
u_{2}}{\bar{\rho}_{2}}+\beta U\ast u_{1}\ .  \label{opa}
\end{equation}
Hence 
\begin{equation*}
\langle u,Au\rangle =\frac{1}{2}\sum_{i=1}^{2}\int_{\mathbb{R}}dx\frac{
u_{i}^{2}(x)}{\bar{\rho}_{i}(x)}+\beta
\int_{R}dx\int_{R}dyU(|x-y|)u_{1}(x)u_{2}(y).
\end{equation*}
Due to the minimizing properties of $\bar{\rho}$, this quadratic form is non
negative. Moreover, by (\ref{e-l'}) and (\ref{opa}), we see that $A\bar{\rho}
^{\prime }=0$, which shows that $\bar{\rho}^{\prime }$ is in the null space
of $A$. Indeed, one can show (see \cite{EGM} and references quoted therein)
the following

\begin{lemma}
\label{spA}Suppose $\beta >1$ and $\rho =(\bar{\rho}_{1},\bar{\rho}_{2})$.
Then there exist $\delta _{0}>0$ such that 
\begin{equation*}
\langle u,Au\rangle \geq \delta _{0}\langle {(I-\mathcal{P})}u,(I-\mathcal{P}
)u\rangle ,
\end{equation*}
where ${\mathcal{P}}$ is the projector on $\mathrm{Null}\,A$: 
\begin{equation*}
\mathrm{Null}\,A=\{u\in L^{2}(\mathbb{R})\times L^{2}(\mathbb{R})\,|\,u=c
\bar{\rho}^{\prime },c\in \mathbb{R}\}.
\end{equation*}
If either $\rho =(\rho ^{\pm },\rho ^{\mp })$, or $\rho =(1,1)$ with $\beta
<1$,  we have 
\begin{equation*}
\langle u,Au\rangle \geq \delta _{0}\langle u,u\rangle ,
\end{equation*}
and the null space of $A$ reduces to $\{0\}$.\newline
\end{lemma}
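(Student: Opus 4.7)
My plan is to prove the three cases separately. The two homogeneous cases (pure phase $\rho=(\rho^\pm,\rho^\mp)$ and subcritical mixed phase $\rho=(1,1)$ with $\beta<1$) reduce to a direct Fourier computation, while the front case, $\rho=(\bar\rho_1,\bar\rho_2)$ with $\beta>1$, is the main difficulty and requires a comparison with asymptotic (pure phase) operators combined with the uniqueness of the front from Theorem \ref{inf}.

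For the homogeneous cases $A$ is translation invariant, so Plancherel diagonalizes the quadratic form pointwise in $k$ as integration against the $2\times 2$ matrix
\[
M(k)=\begin{pmatrix}1/\rho_1 & \beta\hat U(k)\\ \beta\hat U(k) & 1/\rho_2\end{pmatrix}.
\]
Since $U\ge 0$ with $\int U=1$, $\hat U$ is real and $|\hat U(k)|\le \hat U(0)=1$. For $\rho=(\rho^\pm,\rho^\mp)$, $\det M(k)\ge(\rho^+\rho^-)^{-1}-\beta^2$, which is strictly positive because the strict convexity of the local free energy $\varphi$ at a minimizer forces $\rho^+\rho^-<\beta^{-2}$; together with the positive trace this yields a uniform eigenvalue bound. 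For $\rho=(1,1)$ with $\beta<1$, the eigenvalues of $M(k)$ are $1\pm\beta\hat U(k)\ge 1-\beta>0$. Both cases give trivial null space and an explicit $\delta_0$.

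For the front case, I would proceed in three steps. First, $A\ge 0$ is inherited from the minimizing property of $\bar\rho$, and $A\bar\rho'=0$ follows by differentiating the Euler-Lagrange relation (\ref{e-l'}) in $x$, so $\mathbb{R}\bar\rho'\subset\mathrm{Null}\,A$. Second, using the exponential convergence $\bar\rho_i(x)\to\rho^{\mp}$ as $x\to\pm\infty$ (Theorem \ref{inf}), I would compare $A$ with the asymptotic operators $A_\pm$ obtained by replacing $\bar\rho_i$ by its limit; by the homogeneous case each $A_\pm$ has spectrum in $[\delta_0',\infty)$ for some $\delta_0'>0$. A cutoff decomposition $u=\chi_R u+(1-\chi_R)u$ with $\chi_R$ smoothly supported in $[-R,R]$ shows that the tail contribution to $\langle u,Au\rangle$ is $\delta_0'$-coercive up to $o_R(1)$, so the spectrum of $A$ in $[0,\delta_0')$ consists only of eigenvalues of finite multiplicity. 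Third, the uniqueness of the front up to translation forces $\mathrm{Null}\,A=\mathbb{R}\bar\rho'$: any additional null direction $u\perp\bar\rho'$ would make the second variation of $\hat{\mathcal{F}}$ vanish along a non-translation perturbation, and a bifurcation/higher-order argument would then produce a minimizer not obtainable from $\bar\rho$ by translation, contradicting Theorem \ref{inf}. Standard spectral theory then yields the gap $\delta_0$ on $(\mathrm{Null}\,A)^\perp$.

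The main obstacle is making the essential-spectrum step rigorous, since the convolution $U\ast$ is bounded but not compact on $L^2(\mathbb{R})$, precluding a direct Weyl perturbation argument for $A-A_\pm$. The remedy is a quantitative cutoff localization exploiting the exponential decay of $\bar\rho_i-\rho^{\mp}$: on $|x|>R$ the multiplication $1/\bar\rho_i$ is uniformly close to $1/\rho^{\mp}$, so the tail quadratic form inherits $(\delta_0'-o_R(1))$-coercivity, while the bulk piece, concentrated on $[-R,R]$, is controlled by a compactness argument in which only finitely many directions can threaten coercivity and are then pinned down by uniqueness. A secondary delicate point is producing a genuinely distinct minimizer from an extra zero direction of $A$: a second-order flat direction does not by itself contradict local minimality, so one must invoke the global uniqueness in Theorem \ref{inf} rather than a purely local variational argument.
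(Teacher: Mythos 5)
The paper does not prove Lemma \ref{spA} at all --- it appeals to \cite{EGM} ``and references quoted therein'' --- so the comparison below is with the type of argument one finds there and in \cite{CCELM2}, and with what is needed to make your plan airtight.

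Your treatment of the two homogeneous cases is correct. For $\rho=(1,1)$ the eigenvalues of $M(k)$ are $1\pm\beta\hat U(k)\geq 1-\beta$. For $\rho=(\rho^{\pm},\rho^{\mp})$ you need $\beta^{2}\rho^{+}\rho^{-}<1$; this does hold, and it is cleanest to see it directly from the stationarity relation $\log(\rho^{+}/\rho^{-})=\beta(\rho^{+}-\rho^{-})$, which with $r=\rho^{+}/\rho^{-}$ gives $\beta^{2}\rho^{+}\rho^{-}=r\big(\frac{\log r}{r-1}\big)^{2}<1$ for $r>1$. Invoking ``strict convexity of $\varphi$ at the minimizer'' is a correct heuristic but you should state whether you mean the unconstrained Hessian $\big(\begin{smallmatrix}1/\rho^{+}&\beta\\ \beta&1/\rho^{-}\end{smallmatrix}\big)$ (which is what $M(0)$ is) or the constrained one (which gives a different inequality).

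For the front case there are two real gaps. \textbf{(i) The compactness step.} You correctly flag that $U\ast$ is not compact on $L^{2}(\mathbb{R})$, but the proposed remedy does not close the gap, because $A$ is a \emph{bounded} operator with no derivative: $\langle u,Au\rangle\to 0$ together with localization to $|x|\leq R$ does not by itself yield precompactness of the minimizing sequence, which is what your ``finitely many directions can threaten coercivity'' conclusion requires. What actually saves you is that $U$ has compact support and is bounded, so $U\ast$ restricted to $L^{2}([-R-1,R+1])$ is Hilbert--Schmidt, while the diagonal multiplication piece $\mathrm{diag}(1/\bar\rho_{1},1/\bar\rho_{2})$ is bounded below by $1/\rho^{+}$; using this you can indeed run a concentration--compactness argument, but none of those details appear in your sketch. \textbf{(ii) The null-space identification.} You acknowledge that a flat second-order direction does not produce a new minimizer, and the appeal to global uniqueness from Theorem \ref{inf} is not, as written, a proof. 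The standard and much cleaner route, and almost certainly the one used in the cited references, is a ground-state transform based on (\ref{e-l'}) and the strict monotonicity from Theorem \ref{inf}: setting $u_{i}=\bar\rho_{i}'\,p_{i}$ (note $\bar\rho_{1}'>0$, $\bar\rho_{2}'<0$) and using $\bar\rho_{i}'/\bar\rho_{i}=-\beta\,U\ast\bar\rho_{i+1}'$, one obtains the exact identity
\begin{equation*}
\langle u,Au\rangle \;=\; \beta\int_{\mathbb{R}}\!\int_{\mathbb{R}} U(|x-y|)\,\bar\rho_{1}'(x)\,|\bar\rho_{2}'(y)|\,\big(p_{1}(x)-p_{2}(y)\big)^{2}\,dx\,dy,
\end{equation*}
which is manifestly nonnegative and vanishes iff $p_{1}=p_{2}=\mathrm{const}$, i.e.\ iff $u\in\mathbb{R}\bar\rho'$. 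This eliminates the bifurcation argument entirely and also supplies a convenient starting point for the quantitative gap when combined with your comparison to the asymptotic operators at $x\to\pm\infty$. So the overall architecture of your plan is reasonable, but as written both the essential-spectrum step and the null-space step are only sketched where they need to be proved, and the second is most naturally handled by a different mechanism than the one you propose.
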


We prove the following lemma which plays a crucial role in the proof of the
stability of $M_{\bar{\rho}}$. Remind that we have adopted the notation $\xi
=(v,\zeta )$, with $\zeta \in \mathbb{R}^{2}$ and $v\in \mathbb{R}$ for the
velocity.

\begin{lemma}
\label{entropy} Let $M=M_{\bar{\rho}}$. Let $f_{1}(t,x,v,\zeta
)=f_{2}(t,-x,-v,\zeta )$ and that $\Vert g\Vert _{L^{\infty }}\leq \delta $
for some small $\delta $. Then there exists $C>0$ and $\kappa >0$, such
that 
\begin{multline*}
C\sum_{i=1,2}\int_{\R}dx \int_{\R^3}d\xi\Bigg\{ \frac{(f_{i}(t)-M_{i})^{2}}{M_{i}}\mathbf{1}
_{|f_{i}(t)-M_{i}|\leq \kappa M_{i}}+\\f_{i}(t)-M_{i}|\mathbf{1}
_{|f_{i}(t)-M_{i}|\geq \kappa M_{i}}\Bigg\} \leq \mathcal{H}(g(0)).
\end{multline*}
\end{lemma}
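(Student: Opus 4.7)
The plan is to rewrite $\mathcal{H}(g)$ via the Euler--Lagrange identities into a manifestly positive relative-entropy piece plus an indefinite Vlasov cross-term, and then to control the cross-term via the spectral gap of $A$ from Lemma~\ref{spA} by borrowing a small fraction of the entropy to reconstitute the density-scale quadratic form.

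First I would split $f_i\log f_i - M_i\log M_i = \Phi(f_i,M_i) + (f_i - M_i)(1+\log M_i)$ with $\Phi(a,b):=a\log(a/b)-(a-b)\ge 0$. Using $\log M_i = \log\bar\rho_i - (\beta/2)|\xi|^2 + (3/2)\log(\beta/(2\pi))$, the $|\xi|^2$ contribution of the linear term cancels the $\beta|\xi|^2/2$ piece of $\beta\mathcal{E}(g)$, while the $\log\bar\rho_i$ contribution combines with the Vlasov linear term $\beta\sum_i \int r_i\,(U\ast\bar\rho_{i+1})$ and collapses, through the Euler--Lagrange equation $\log\bar\rho_i+\beta U\ast\bar\rho_{i+1}=C_i$, to $\sum_i\mathcal{M}_i(g)\{C_i+1+(3/2)\log(\beta/(2\pi))\}$. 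This is exactly the mass counter-term subtracted in the definition of $\mathcal{H}$, so one is left with
\[
\mathcal{H}(g)=\sum_{i=1,2}\!\iint \Phi(f_i,M_i)\,dx\,d\xi
+\beta\!\iint U(|x-y|)\,r_1(x)\,r_2(y)\,dx\,dy,
\]
where $r_i(x):=\int d\xi\,(f_i-M_i)(x,\xi)$.

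Second, from $\phi(t):=t\log t-t+1\ge (t-1)^2/[2\max(t,1)]$ one extracts constants $c_1,c_2>0$ so that $\Phi(f,M)\ge c_1(f-M)^2 M^{-1}\mathbf{1}_{|f-M|\le \kappa M}+c_2|f-M|\mathbf{1}_{|f-M|>\kappa M}$; denoting the right-hand side of the lemma by $\mathcal{J}(g)$, this yields $\sum_i\!\iint \Phi\ge c_0\,\mathcal{J}(g)$ with $c_0=\min(c_1,c_2)>0$. Separately, Jensen's inequality in $\xi$ against the probability measure $\mu_\beta\,d\xi$ gives $\int d\xi\,\Phi(f_i,M_i)\ge \Phi(\rho_{f_i}(x),\bar\rho_i(x))$; combined with $|r_i|\le C\delta\sqrt{\bar\rho_i}$ (from $\|g\|_\infty\le\delta$ and $\int\sqrt{\mu_\beta}\,d\xi<\infty$) and a Taylor expansion of $\Phi(\bar\rho_i+r_i,\bar\rho_i)$ this produces the density-scale bound
\[
\sum_{i}\!\iint \Phi(f_i,M_i) \ge (1-O(\delta))\sum_i \int \frac{r_i^2}{2\bar\rho_i}\,dx.
\]

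Third, write $\mathcal{H}(g)=(1-\theta)\mathcal{H}(g)+\theta \mathcal{H}(g)$. On the first piece, the Csiszár--Kullback bound and Cauchy--Schwarz with $\int U=1$ yield $(1-\theta)\mathcal{H}(g)\ge (1-\theta)c_0\,\mathcal{J}(g)-(1-\theta)\beta\|r\|_{L^2}^2$. On the second piece, the density-scale bound combined with the cross-term reconstitutes $\langle r,Ar\rangle$ up to an $O(\delta)$ error. The symmetry $g_1(x,v,\zeta)=g_2(-x,-v,\zeta)$ from (\ref{symm}) and $M_1(x,\xi)=M_2(-x,-v,\zeta)$ forces $r_1(x)=r_2(-x)$; combined with the oddness $\bar\rho_1'(x)=-\bar\rho_2'(-x)$ from \eqref{oddness} this gives $\langle r,\bar\rho'\rangle=0$, so $r\perp\mathrm{Null}\,A$ and Lemma~\ref{spA} delivers $\langle r,Ar\rangle\ge \delta_0\|r\|_{L^2}^2$; hence $\theta\mathcal{H}(g)\ge (\theta\delta_0/2)\|r\|_{L^2}^2$ for $\delta$ small. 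Choosing $\theta$ with $\theta\delta_0/2\ge (1-\theta)\beta$ makes the two $\|r\|_{L^2}^2$ terms cancel and yields $\mathcal{H}(g)\ge (1-\theta)c_0\,\mathcal{J}(g)$; the $H$-theorem monotonicity $\mathcal{H}(g(t))\le\mathcal{H}(g(0))$ then gives the lemma. The principal obstacle is the sign-indefinite cross-term: a Csiszár--Kullback argument alone cannot prevent a nearly vanishing $\mathcal{H}(g)$ from coexisting with a large $\|r\|_{L^2}$, and it is therefore essential both to extract $\langle r,Ar\rangle$ from the entropy and to invoke the spectral gap, whose use in turn requires the symmetry hypothesis to kill the zero mode $\bar\rho'$.
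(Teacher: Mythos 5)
Your proof is correct, and it takes a genuinely different (and somewhat cleaner) route than the paper's, although the two share the same structural skeleton: cancel the linear terms via the Euler--Lagrange relation $\log\bar\rho_i+\beta U\ast\bar\rho_{i+1}=C_i$, reduce the remaining quadratic content to the density level, and invoke the spectral gap of $A$ together with the symmetry hypothesis to control the Vlasov cross-term. The differences are in the implementation. Where you use the exact identity
$\mathcal{H}(g)=\sum_i\iint\Phi(f_i,M_i)+\beta\iint U\,r_1 r_2$
with $\Phi(a,b)=a\log(a/b)-(a-b)$ and the elementary pointwise bound $\phi(t)\ge(t-1)^2/(2\max(t,1))$, the paper instead performs a second-order Taylor expansion with a Lagrange intermediate point $\tilde f_i$ and then estimates $(f_i-M_i)^2/(2\tilde f_i)$ from below case by case in the regions $|f_i-M_i|\lessgtr\kappa M_i$ --- same outcome, different bookkeeping. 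Where you reduce to the density level by Jensen's inequality, $\int\Phi(f_i,M_i)\,d\xi\ge\Phi(\rho_{f_i},\bar\rho_i)$, obtaining a bound in the full density fluctuation $r_i=\rho_{f_i}-\bar\rho_i$, the paper projects the \emph{truncated} fluctuation $(f_i-M_i)\chi_i^<$ onto $\sqrt{\mu_\beta}$ to get an auxiliary quantity $n_i$; this forces the paper to split the Vlasov cross-term into four pieces $(\chi^<\chi^<,\chi^<\chi^>,\ldots)$ and absorb the three small pieces using $\|g\|_{L^\infty}\le\delta$, a step your Jensen reduction avoids entirely. In exchange, you need the auxiliary weight split $\mathcal{H}=(1-\theta)\mathcal{H}+\theta\mathcal{H}$ to make the two $\|r\|_{L^2}^2$ contributions cancel, while the paper achieves positivity in one shot by tuning $\kappa$ against $\rho^-\delta_0$. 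Both approaches require $\delta\ll\kappa$ and the smallness of $r_i/\bar\rho_i$ supplied by $\|g\|_{L^\infty}\le\delta$, and both need the symmetry hypothesis (via $r_1(x)=r_2(-x)$ and $\bar\rho_1'(x)=-\bar\rho_2'(-x)$) to kill the zero mode of $A$. A small terminological nit: the step $\sum_i\iint\Phi\ge c_0\mathcal{J}(g)$ is an elementary pointwise relative-entropy bound, not really Csisz\'ar--Kullback (which is an $L^1$-type inequality); the derivation you give is right, only the label is slightly off. Also, the sharp Young/Cauchy--Schwarz constant on the cross-term is $\beta/2$ rather than $\beta$, though this affects nothing.
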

{\bf Remark 2.1}: {\it
Note that, when dealing
with $M_{\text{\rm red}}$,  $M_{\text{\rm blue}}$ and $M_{\text{\rm hom}}$, the functions $\bar\rho_1$ and $\bar\rho_2$ have to be
replaced by the constant values $(\rho^+,\rho^-)$, $(\rho^-,\rho^+) $ and $
(1,1)$ respectively. With this modifications the lemma still holds.}

\begin{proof}
Remember the notation $\rho _{f}(t,x)=\int_{\mathbb{R}^{3}}d\xi f(t,x,\xi )$. We may construct solutions (see \cite{G4}) such that 
\begin{equation*}
\mathcal{H}(g)\leq \mathcal{H}(g(0)).
\end{equation*}
We expand $\mathcal{H}(g)$ and use (\ref{statcond}) to cancel the linear
part of the expansion, which takes the form 
\begin{eqnarray*}
&&-\sum_{i=1}^{2}\int_{\R}dx\int_{\mathbb{R}^{3}}d\xi \ \{C_{i}+1+\ln (\frac{
\beta }{2\pi })^{\frac{3}{2}}\}\{f_{i}-M_{i}\}+ \\&&\sum_{i=1}^{2}\int_{\R}dx\int_{\mathbb{R}^{3}}d\xi \Bigg\{\frac{\beta |\xi |^{2}}{2}(f_{i}-M_{i})
+\{\ln M_{i}+1\}(f_{i}-M_{i})\\&&+\beta \{U\ast \bar{\rho}
_{i+1}\}(f_{i}-M_{i})\Bigg\}.
\end{eqnarray*}
Indeed, since $\ln M_{i}=$ $\ln (\frac{\beta }{2\pi })^{\frac{3}{2}}-\frac{
\beta |\xi |^{2}}{2}+\ln \bar{\rho}_{i}$, by (\ref{statcond}), the above
quantity is zero by construction. Therefore, we turn to the second order
expansion of $\mathcal{H}(g)$. For some $\tilde{f}_{i}$ between $M_{i}$ and $
f_{i}$, 
\begin{eqnarray}
&&\mathcal{H}(g)=\sum_{i=1}^{2}\int_{R}dx\int_{\mathbb{R}^{3}}d\xi \frac{
(f_{i}(t)-M_{i})^{2}}{2\tilde{f}_{i}}dxd\xi  \\
&&+\beta \int_{\mathbb{R}}dx\int_{\mathbb{R}}dy(\rho _{f_{1}}(t,x)-\bar{\rho}
_{1}(x))U(|x-y|)(\rho _{f_{2}}(t,x)-\bar{\rho}_{2}(x)).  \notag
\label{expand2}
\end{eqnarray}
For some small number $\kappa $ to be determined, we introduce the indicator
functions $\chi _{i}^{<}=\mathbf{1}_{|f_{i}(t)-M_{i}|\leq \kappa M_{i}}$ and 
$\chi _{i}^{>}=\mathbf{1}_{|f_{i}(t)-M_{i}|>\kappa M_{i}}$ and split the
first term into 
\begin{equation*}
\int_{\mathbb{R}}dx\int_{\mathbb{R}^{3}}d\xi \frac{(f_{i}(t)-M_{i})^{2}}{2
\tilde{f}_{i}}\chi _{i}^{>}+\int_{\mathbb{R}}dx\int_{\mathbb{R}^{3}}d\xi 
\frac{(f_{i}(t)-M_{i})^{2}}{2\tilde{f}_{i}}\chi _{i}^{<}
\end{equation*}
We first estimate $\displaystyle{\frac{(f_{i}(t)-M_{i})^{2}}{2\tilde{f}_{i}}}
$ in the case of $|f_{i}(t)-M_{i}|>\kappa M_{i}$. Notice that either $
f_{i}\geq (1+\kappa )M_{i}$, or $f_{i}\leq (1-\kappa )M_{i}$. If $f_{i}\geq
(1+\kappa )M_{i}$, $\tilde{f}_{i}(t)\leq f_{i}$, and we have 
\begin{equation*}
\frac{|f_{i}(t)-M_{i}|}{\tilde{f}_{i}(t)}\geq \frac{|f_{i}(t)-M_{i}|}{f_{i}}
=1-\frac{M_{i}}{f_{i}}\geq 1-\frac{1}{1+\kappa }=\frac{\kappa }{1+\kappa }.
\end{equation*}
In the second case $f_{i}\leq (1-\kappa )M_{i}$, $\tilde{f}_{i}(t)\leq M_{i}$
and 
\begin{equation*}
\frac{|f_{i}(t)-M_{i}|}{\tilde{f}_{i}(t)}\geq \frac{|f_{i}(t)-M_{i}|}{M_{i}}
=1-\frac{f_{i}}{M_{i}}\geq 1-\{1-\kappa \}=\kappa >\frac{\kappa }{1+\kappa }.
\end{equation*}
Combining these two cases and noticing $\tilde{f}_{i}\leq (1+\kappa )M_{i}$
for $|f_{i}(t)-M_{i}|\leq \kappa M_{i}$, we conclude 
\begin{eqnarray}
&&\int_{\mathbb{R}}dx\int_{\mathbb{R}^{3}}d\xi \frac{(f_{i}(t)-M_{i})^{2}}{2
\tilde{f}_{i}}\chi _{i}^{<}+\int_{\mathbb{R}}dx\int_{\mathbb{R}^{3}}d\xi 
\frac{(f_{i}(t)-M_{i})^{2}}{2\tilde{f}_{i}}\chi _{i}^{>}  \notag \\
&\geq &\int_{\mathbb{R}}dx\int_{\mathbb{R}^{3}}d\xi \frac{
(f_{i}(t)-M_{i})^{2}}{2(1+\kappa )M_{i}}\chi _{i}^{<}+\frac{\kappa }{
2(1+\kappa )}\int_{\mathbb{R}}dx\int_{\mathbb{R}^{3}}d\xi
|f_{i}(t)-M_{i}|\chi _{i}^{>} \notag \\\label{entobound} \\
&=&\frac{1}{2(1+\kappa )}\int \int_{\mathbb{R}}dx\int_{\mathbb{R}^{3}}d\xi
g_{i}^{2}\chi _{i}^{<}+\frac{\kappa }{2(1+\kappa )}\int_{\mathbb{R}}dx\int_{
\mathbb{R}^{3}}d\xi |f_{i}(t)-M_{i}|\chi _{i}^{>}  \notag \\
&\geq &\frac{1}{2(1+\kappa )}\int \int_{\mathbb{R}}dxn_{i}^{2}+\frac{\kappa 
}{2(1+\kappa )}\int_{\mathbb{R}}dx\int_{\mathbb{R}^{3}}d\xi
|f_{i}(t)-M_{i}|\chi _{i}^{>},  \notag
\end{eqnarray}
where we have set $n_{i}(t,x)\sqrt{\mu _{\beta }}=\mathbf{P}
[(f_{i}(t)-M_{i})\chi _{i}^{<}]$, and $\mathbf{P}$ denotes the $L_{\xi }^{2}$
-projection on $\sqrt{M_{i}}$: 
\begin{equation*}
\mathbf{P}f:=\left( \int_{\mathbb{R}^{3}}d\xi ^{\prime }f(\xi ^{\prime })
\sqrt{\mu _{\beta }(\xi ^{\prime })}\right) \sqrt{\mu _{\beta }(\xi )}.
\end{equation*}

We now split the potential contribution in (\ref{expand2}) to get 
\begin{eqnarray*}
&&\ \ \ \beta \int_{\mathbb{R}}dx\int_{\mathbb{R}}dy\int_{\mathbb{R}
^{3}}d\xi \int_{\mathbb{R}^{3}}d\eta \sqrt{M_{1}}g_{1}\chi _{1}^{<}U(|x-y|)
\sqrt{M_{2}}g_{2}\chi _{2}^{<} \\
&&+\beta \int_{\mathbb{R}}dx\int_{\mathbb{R}}dy\int_{\mathbb{R}^{3}}d\xi
\int_{\mathbb{R}^{3}}d\eta \ \sqrt{M_{1}}g_{1}\chi _{1}^{<}U(|x-y|)\sqrt{
M_{2}}g_{2}\chi _{2}^{>} \\
&&+\beta \int_{\mathbb{R}}dx\int_{\mathbb{R}}dy\int_{\mathbb{R}^{3}}d\xi
\int_{\mathbb{R}^{3}}d\eta \ \sqrt{M_{1}}g_{1}\chi _{1}^{>}U(|x-y|)\sqrt{
M_{2}}g_{2}\chi _{2}^{>} \\
&&+\beta \int_{\mathbb{R}}dx\int_{\mathbb{R}}dy\int_{\mathbb{R}^{3}}d\xi
\int_{\mathbb{R}^{3}}d\eta \ \sqrt{M_{1}}g_{1}\chi _{1}^{>}U(|x-y|)\sqrt{
M_{2}}g_{2}\chi _{2}^{<}.
\end{eqnarray*}
From our assumption $\Vert g\Vert _{L^{\infty }}\leq \delta $, the last
three terms are controlled by 
\begin{eqnarray*}
&&C_{\beta }\Big(\Vert g_{1}\Vert _{L^{\infty }}+\Vert g_{2}\Vert
_{L^{\infty }}\Big)\sum_{i=2}\int_{\mathbb{R}}dx\int_{\mathbb{R}^{3}}d\xi
|f_{i}(t)-M_{1}|\chi _{i}^{>} \\
&\leq &C_{\beta }\delta \sum_{i=1}^{2}\int_{\mathbb{R}}dx\int_{\mathbb{R}
^{3}}d\xi |f_{i}(t)-M_{i}|\mathbf{1}_{|f_{i}(t)-M_{i}|\geq \kappa M_{1}}.
\end{eqnarray*}
which is bounded by the second term in (\ref{entobound}) for $\delta
<<\kappa $. Since $\int d\xi \sqrt{M_{i}}g_{i}=\sqrt{\bar{
\rho}_{i}}n_{i}$, the first term can be written as 
\begin{equation*}
\beta \int_{R}dx\int_{R}dy\ n_{1}(t,x)\sqrt{\bar{\rho}_{1}(x)}
U(|x-y|)n_{2}(t,y)\sqrt{\bar{\rho}_{2}(y)}
\end{equation*}
We now combine it with the first term in (\ref{entobound}) to get 
\begin{eqnarray*}
&&\frac{1}{2(1+\kappa )}\int_{\mathbb{R}}dx(n_{1}^{2}+n_{2}^{2})\\&&+\beta \int_{
\mathbb{R}}dx\int_{\mathbb{R}}dyn_{1}(x)\sqrt{\bar{\rho}_{1}(x)}
U(|x-y|)n_{2}(y)\sqrt{\bar{\rho}_{2}(y)}dxdy \\
&&=\langle n\sqrt{\bar{\rho}},An\sqrt{\bar{\rho}}\rangle -\frac{\kappa }{
2(1+\kappa )}\int_{R}dx(n_{1}^{2}+n_{2}^{2}).
\end{eqnarray*}
Since $n_{1}(x)=n_{2}(-x)$ by our symmetry assumption and from $\bar{\rho}
_{1}^{\prime }(x)=-\bar{\rho}_{2}^{\prime }(-x)$, it follows that $
(n_{1},n_{2})$ is orthogonal to the null space of $A$. From the spectral
inequality for the operator $A$, since $\bar{\rho}_{i}(x)\geq \rho ^{-}$,
the above is bounded from below by 
\begin{eqnarray*}
&&\left( \rho ^{-}\delta _{0}-\frac{\kappa }{2(1+\kappa )}\right)
\int_{R}dx(n_{1}^{2}+n_{2}^{2}) \\
&\geq &c\sum_{i=1}^{2}\Vert \mathbf{P}(f_{i}(t)-M_{i})\mathbf{1}
_{|f_{i}(t)-M_{i}|\leq \kappa M_{i}}\Vert _{L^{2}}^{2},
\end{eqnarray*}
provided that we choose $\kappa $ sufficiently small. Then the lemma follows
by collecting the terms with $\delta <<\kappa $.
\end{proof}

\section{1-Dimensional Characteristics.}

We define the characteristics curves $[X_{i}(s;t,x,v),V_{i}(s;t,x,v)]$ for (
\ref{maineqpert}) passing through $(t,x,v)$ at $s=t$, such that 
\begin{eqnarray}  \label{basicode}
\frac{dX_{i}(s;t,x,v)}{ds} &=&V_{i}(s;t,x,v),  \label{ode} \\
\frac{dV_{i}(s;t,x,v)}{ds} &=&-\partial _{x}U\ast \int_{\mathbb{R}^{3}}d\xi
(M_{i+1}+\sqrt{M_{i+1}}g_{i+1})\equiv -\partial _{x}\phi (X_{i}(s;t,x,v)). 
\notag
\end{eqnarray}
We also define the unperturbed characteristics $
[X_{i}^{0}(s;t,x,v),V_{i}^{0}(s;t,x,v)]$ passing through $(t,x,v)$ at $s=t$,
such that 
\begin{eqnarray}
\frac{dX_{i}^{0}(s;t,x,v)}{ds} &=&V_{i}^{0}(s;t,x,v),  \label{ode2} \\
\frac{dV_{i}^{0}(s;t,x,v)}{ds} &=&-\partial _{x}U\ast \int M_{i+1}d\xi
\equiv -\partial _{x}\phi _{0}(X_{i}^{0}(s;t,x,v))  \notag
\end{eqnarray}
Our main goal is to study the zero set of $\displaystyle{\frac{\partial
X_{i}(s;t,x,v)}{\partial v}}$.

\begin{lemma}
\label{countable}For any $(t,x,v)$ with $v\neq 0$, the set of $\{s\in
\mathbb{R}:\displaystyle{\frac{\partial X_{i}^{0}(s;t,x,v)}{\partial v}}
=0\,\}$ is countable. 
\end{lemma}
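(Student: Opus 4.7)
The plan is to reduce the statement to the classical ODE fact that a nontrivial solution of a second-order linear scalar equation with continuous coefficient has only isolated zeros, hence a countable zero set. The problem is tailor-made for this: the unperturbed characteristic system (\ref{ode2}) is an autonomous Hamiltonian ODE driven by the smooth, $v$-independent potential $\phi_0(x) = U\ast\int_{\mathbb{R}^3}M_{i+1}(\cdot,\xi)\,d\xi$. Smoothness of $\phi_0$ follows from the smoothness of $U$ combined with smoothness of the spatial density (constant for $M_{\mathrm{hom}}$, $M_{\mathrm{red}}$, $M_{\mathrm{blue}}$, and equal to $\bar{\rho}_{i+1}\in C^\infty$ for $M_{\bar\rho}$ by Theorem \ref{inf}).

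I would then differentiate (\ref{ode2}) with respect to $v$ to obtain the variational equations for $u(s):=\partial_v X_i^0(s;t,x,v)$ and $w(s):=\partial_v V_i^0(s;t,x,v)$. Eliminating $w$ and using $\partial_v X_i^0(t;\cdot)=0$, $\partial_v V_i^0(t;\cdot)=1$ gives the scalar linear Cauchy problem
\begin{equation*}
u''(s)+\phi_0''\bigl(X_i^0(s;t,x,v)\bigr)\,u(s)=0,\qquad u(t)=0,\ u'(t)=1,
\end{equation*}
whose coefficient $s\mapsto\phi_0''(X_i^0(s;t,x,v))$ is continuous (indeed $C^\infty$) in $s$ because $X_i^0(\,\cdot\,;t,x,v)$ is $C^\infty$ in $s$ and $\phi_0\in C^\infty(\mathbb{R})$.

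The conclusion would then follow by a soft argument. Since $u'(t)=1\neq 0$, $u$ is not the trivial solution. At any $s_0$ with $u(s_0)=0$ one must have $u'(s_0)\neq 0$; otherwise the uniqueness theorem for the linear Cauchy problem above would force $u\equiv 0$, a contradiction. Hence every zero of $u$ is simple, and simple zeros of a $C^1$ function on $\mathbb{R}$ are isolated by the implicit function theorem. A set of isolated points in $\mathbb{R}$ admits at most one point per disjoint open interval of a countable base, so it is at most countable.

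I do not foresee any real obstacle; the substantive check is just $C^2$-regularity of $\phi_0$, which is immediate in all four cases of $M$. Incidentally, the hypothesis $v\neq 0$ plays no role in this unperturbed version, since the initial data $u(t)=0$, $u'(t)=1$ for the variational equation are independent of $v$. I expect the assumption $v\neq 0$ is retained only because it will be needed when this lemma is invoked to handle the full perturbed characteristics $X_i(s;t,x,v)$ in the stability arguments of later sections.
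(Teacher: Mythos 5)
Your proof is correct, but it takes a genuinely different route from the paper's. The paper exploits the Hamiltonian structure directly: differentiating the particle energy conservation law $\tfrac{1}{2}|V_i^0|^2 - \phi_0(X_i^0) = \tfrac{1}{2}|v|^2 - \phi_0(x)$ with respect to $v$ gives
\begin{equation*}
V_i^0\,\frac{\partial V_i^0}{\partial v} - \phi_0'(X_i^0)\,\frac{\partial X_i^0}{\partial v} = v,
\end{equation*}
so at any zero $s_0$ of $\partial_v X_i^0$ the left side collapses to $V_i^0\,\partial_v V_i^0 = v \neq 0$, forcing $\partial_v V_i^0(s_0) = \frac{d}{ds}\partial_v X_i^0\big|_{s_0} \neq 0$ and hence isolated zeros. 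You instead derive the variational equation $u'' + \phi_0''(X_i^0(s))u = 0$ with $u(t)=0$, $u'(t)=1$ and invoke uniqueness for linear second-order ODEs to conclude that every zero of the nontrivial solution $u$ is simple. Both arguments are sound and both reduce to ``zeros are isolated, hence countable.'' The interesting difference is your correct observation that your argument never uses $v\neq 0$: the initial data $u(t)=0$, $u'(t)=1$ hold for all $v$, so the variational-ODE proof gives the lemma for every $(t,x,v)$, whereas the paper's energy-conservation derivative degenerates at $v=0$ and genuinely requires the hypothesis. In practice this makes no difference downstream (in Lemma~\ref{determinant0} the Fubini argument only needs the zero set to be null for a.e.~$v$, and $\{v=0\}$ is already negligible), but your version is slightly more general and, arguably, the cleaner argument.
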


\begin{proof}
From the particle energy conservation for (\ref{ode2}): 
\begin{equation}
\frac{1}{2}|V_{i}^{0}(s;t,x,v)|^{2}-\phi _{0}(X_{i}^{0}(s;t,x,v))=\frac{1}{2}
|v|^{2}-\phi _{0}(x).  \label{particleenergy}
\end{equation}
Taking derivative with respect to $v$ yields 
\begin{equation*}
V_{i}^{0}(s;t,x,v)\frac{\partial V_{i}^{0}(s;t,x,v)}{\partial v}-\phi
_{0}^{\prime }\frac{\partial X_{i}^{0}(s;t,x,v)}{\partial v}=v.
\end{equation*}
Assume $v\neq 0$. If there is $s_{0}$ such that $\displaystyle{\frac{
\partial X_{i}^{0}(s_{0};t,x,v)}{\partial v}}=0$, then necessarily 
\begin{equation*}
V_{i}^{0}(s_{0};t,x,v)\frac{\partial V_{i}^{0}(s_{0};t,x,v)}{\partial v}\neq
0,
\end{equation*}
because, for such an $s_{0}$ it reduces to $v$ by previous equation and we
assumed $v\neq 0$. Hence 
\begin{equation*}
\displaystyle{\frac{\partial V_{i}^{0}(s_{0};t,x,v)}{\partial v}}=
\displaystyle{\frac{d}{ds}\frac{\partial X_{i}^{0}(s;t,x,v)}{\partial v}\Big|
_{s=s_{0}}}\neq 0
\end{equation*}
and therefore $\displaystyle{\frac{\partial X_{i}^{0}(s;t,x,v)}{\partial v}}
\neq 0$ in a neighborhood of $s=s_{0}$. This implies that for $v\neq 0$,
the set $\{s:\displaystyle{\frac{\partial X_{i}^{0}(s;t,x,v)}{\partial v}}
=0\}$ is countable. 
\end{proof}

\begin{lemma}
\label{determinant0}Fix $T_{0}>0$ and $N>0$. Let $|v|\leq N$.

\begin{enumerate}
\item For any $\varepsilon >0$, there exists $L_{\varepsilon }$ sufficiently
large so that, if $|x|\geq L_{\varepsilon }$, then for $0\leq s\leq
t-\varepsilon $, 
\begin{equation*}
\frac{\partial X_{i}^{0}(s;t,x,v)}{\partial v}<-\frac{\varepsilon }{2}<0.
\end{equation*}

\item For any $\eta >0$, there exist $P$ finite points $|x_{k}|\leq
L_{\varepsilon }$ ($1\leq k\leq P$) and corresponding open sets 
\begin{equation*}
O_{x_{k}}=\bigcup_{(n,o)\in I_{k}}\{a_{n}<s<b_{n}\}\times \{c_{o}<v<d_{o}\}
\end{equation*}
with the property 
\begin{equation*}
|[0,T_{0}]\times \{|v|\leq N\}\cap O_{x_{k}}^{c}|<\eta ,
\end{equation*}
so that there exists $m>0$ and, for any $|x|\leq L_{\varepsilon }$, there
exists $l\in \{1,\dots ,P\}$ 
\begin{equation*}
\Big|\frac{\partial X_{i}^{0}(s;t,x,v)}{\partial v}\Big|>m>0.
\end{equation*}
for $(s,v)\in O_{x_{l}}$.
\end{enumerate}
\end{lemma}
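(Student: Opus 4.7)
The two parts rest on the same regularity input: the potential $\phi_0=U*\bar\rho_{i+1}$ is smooth on $\mathbb{R}$ and all its derivatives decay exponentially at infinity, by Theorem~\ref{inf} together with the compact support of $U$.

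For Part 1, I would first observe that along a backward trajectory the particle--energy conservation identity \eqref{particleenergy} yields the uniform bound $|V_i^0(s;t,x,v)|\leq\Lambda:=\sqrt{N^2+2\|\phi_0\|_\infty}$ for $|v|\leq N$, so $|X_i^0(s;t,x,v)-x|\leq\Lambda T_0$ on $[t-T_0,t]$. Hence, choosing $L_\varepsilon$ large enough, the whole trajectory stays in a region where $|\phi_0''|<\mu$ for a prescribed small $\mu>0$. Then the variational system
\begin{equation*}
\frac{d}{ds}\frac{\partial X_i^0}{\partial v}=\frac{\partial V_i^0}{\partial v},\qquad
\frac{d}{ds}\frac{\partial V_i^0}{\partial v}=-\phi_0''(X_i^0)\,\frac{\partial X_i^0}{\partial v},
\end{equation*}
with terminal data $\partial X_i^0/\partial v(t)=0$ and $\partial V_i^0/\partial v(t)=1$, is a small perturbation of free streaming. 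A Gronwall argument backward from $s=t$ gives
\begin{equation*}
\frac{\partial X_i^0(s;t,x,v)}{\partial v}=-(t-s)+O(\mu T_0^{3}),
\end{equation*}
and picking $L_\varepsilon$ so that the error is smaller than $\varepsilon/2$ yields the claim for $t-s\geq\varepsilon$.

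For Part 2, the plan is to promote Lemma~\ref{countable} from a pointwise to a quantitative statement and then to use compactness in $x$. By smooth dependence of the flow on initial data, the function $F_x(s,v):=\partial X_i^0(s;t,x,v)/\partial v$ is $C^\infty$ in $(s,v)$ for each fixed $x$. The proof of Lemma~\ref{countable} actually shows more than countability: at every zero of $F_x$ with $v\neq 0$, the $s$--derivative $\partial_s F_x=\partial V_i^0/\partial v$ is nonzero; hence by the implicit function theorem the zero set of $F_x$ inside the compact region $[0,T_0]\times\{\tau\leq|v|\leq N\}$ is a smooth $1$--dimensional submanifold, and in particular has two--dimensional Lebesgue measure zero. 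Excising a horizontal strip $\{|v|<\tau\}$ of measure $\leq\eta/4$ together with an open tubular neighborhood of this zero set, also of measure $\leq\eta/4$, produces an open set $O_x\subset[0,T_0]\times\{|v|\leq N\}$ with $|O_x^c|<\eta/2$, on which $|F_x|\geq 2m_x>0$ follows by continuity and compactness of the closed bad set.

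The final step is uniformity in $x$. By smooth dependence of the characteristics on initial data the map $x\mapsto F_x$ is continuous from $[-L_\varepsilon,L_\varepsilon]$ into $C^0([0,T_0]\times\{|v|\leq N\})$, so for $x'$ in a small enough neighborhood of $x$ the bound $|F_{x'}|\geq m_x$ persists on $O_x$; compactness of $[-L_\varepsilon,L_\varepsilon]$ yields a finite subcover providing the points $x_1,\dots,x_P$ and the uniform constant $m:=\tfrac12\min_k m_{x_k}>0$. The main obstacle is quantitative: the size of the tubular neighborhood excised at each $x$ depends on the local geometry of the zero set of $F_x$, and one must coordinate the measure bookkeeping with the continuity modulus in $x$ to close the finite subcover argument; no idea beyond the ones just sketched is needed, however.
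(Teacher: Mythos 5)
Your proof is essentially correct and follows the same strategy as the paper: Part 1 rests on trajectory localization plus decay of $\phi_0''$ to treat the variational ODE for $\partial_v X_i^0$ perturbatively (you use a Gronwall argument where the paper uses a second-order Taylor expansion with Lagrange remainder, but both are fine); Part 2 rests on a measure-zero zero set, excision of a small open neighborhood, a lower bound on the compact remainder by continuity, continuity in $x$, and a finite subcover (you invoke the implicit-function theorem and explicitly excise a $\{|v|<\tau\}$ strip, whereas the paper gets measure zero directly from Fubini applied to the countability of Lemma~\ref{countable}, the line $v=0$ contributing nothing).

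Two small points are worth flagging. First, your $O_x$, being the complement of excised open sets inside a compact rectangle, is closed rather than open, and you do not address the specific form $\bigcup\{a_n<s<b_n\}\times\{c_o<v<d_o\}$ required by the statement; the paper handles this at the very end by covering the compact good set with finitely many small open rectangles on which the lower bound $|\partial_v X_i^0|>m$ persists. Second, the ``quantitative obstacle'' you worry about at the end does not actually arise: the measure constraint $|O_{x_k}^c|<\eta$ is imposed only on the fixed set $O_{x_k}$ once $x_k$ is chosen, and for $x'$ near $x_k$ one only needs the bound $|F_{x'}|>m$ to persist on that same fixed $O_{x_k}$, which follows directly from $\|F_{x'}-F_{x_k}\|_{C^0}<m_{x_k}$ for $|x'-x_k|<\Delta_k$. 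There is no requirement that the bad set of $F_{x'}$ lie inside $O_{x_k}^c$, so no coordination between the $x'$-dependent geometry and the continuity modulus is needed; the finite subcover supplies $m=\min_k m_{x_k}>0$ without further ado.
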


\begin{proof}
For any $\varepsilon >0$, from (\ref{ode2}) and (\ref{particleenergy} ) , 
\begin{eqnarray*}
|V_{i}^{0}(s;t,x,v)| &\leq &|v|+2\sqrt{\Vert \phi _{0}\Vert _{L^{\infty }}}
\leq N+C, \\
|X_{i}^{0}(s;t,x,v)-x| &\leq &T_{0}\{N+C\}.
\end{eqnarray*}
By choosing $L_{\varepsilon }$ (depending on $T_{0}$ and $N$) large enough,
for $|x|\geq L_{\varepsilon }$, $|v|\leq N$, and $0\leq s\leq T_{0}$, 
\begin{equation}
|X_{i}^{0}(s;t,x,v)|\geq \frac{L_{\varepsilon }}{2}.  \label{xlarge}
\end{equation}
From (\ref{ode2}), we have 
\begin{equation}
\frac{d^{2}}{ds^{2}}\frac{\partial X_{i}^{0}(s;t,x,v)}{\partial v}=-\partial
_{xx}\phi _{0}(X_{i}^{0}(s;t,x,v))\frac{\partial X_{i}^{0}(s;t,x,v)}{
\partial v}  \label{odexv}
\end{equation}
and we deduce that for $|s|\leq T_{0}$, 
\begin{equation}
\Big|\frac{\partial X_{i}^{0}(s;t,x,v)}{\partial v}\Big|\leq C_{T_{0}}.
\label{x0v}
\end{equation}
By the Taylor expansion for $s$, we get 
\begin{eqnarray*}
\frac{\partial X_{i}^{0}(s;t,x,v)}{\partial v} &=&\frac{\partial
X_{i}^{0}(s;t,x,v)}{\partial v}\Big|_{s=t}+(s-t)\frac{d}{ds}\frac{\partial
X_{i}^{0}(s;t,x,v)}{\partial v}\Big|_{s=t} \\
&+&\frac{(s-t)^{2}}{2}\frac{d^{2}}{ds^{2}}\frac{\partial X_{i}^{0}(\bar{s}
;t,x,v)}{\partial v} \\
&=&(s-t)+\frac{(s-t)^{2}}{2}\frac{d^{2}}{ds^{2}}\frac{\partial X_{i}^{0}(
\bar{s};t,x,v)}{\partial v}
\end{eqnarray*}
for some $t-T_{0}\leq \bar{s}\leq t$. Since the densities $\bar{\rho}_{i}$
tend to the their asymptotic values at infinity, $\lim_{y\rightarrow \infty
}\partial _{xx}\phi _{0}(y)=0$. Therefore, using again (\ref{odexv}) with $s=
\bar{s}$ and (\ref{x0v}), we have 
\begin{eqnarray*}
\frac{\partial X_{i}^{0}(s;t,x,v)}{\partial v} &\leq &(s-t)\big(
1-(t-s)C_{T_{0}}\sup_{|y|\geq \frac{L_{\varepsilon }}{2}}|\partial _{xx}\phi
_{0}(y)|\big) \\
&\leq &\frac{s-t}{2}<0,
\end{eqnarray*}
by choosing $L_{\varepsilon }$ sufficiently large. Part (1) thus follows.

To prove part (2), for $|x|\leq L_{\varepsilon }$, introduce the zero set of 
$\frac{\partial X_{i}^{0}(s;t,x,v)}{\partial v}$ as 
\begin{equation*}
Z_{x}=\{t-T_{0}\leq s\leq T_{0},|v|\leq N:\frac{\partial X_{i}^{0}(s;t,x,v)}{
\partial v}=0\}.
\end{equation*}
Then from the Fubini Theorem and Lemma \ref{countable}, 
\begin{equation*}
|Z_{x}|=\int_{-N}^{N}\left\{ \int_{t-T_{0}}^{t}\mathbf{1}_{\{s,v:\frac{
\partial X_{i}^{0}(s;t,x,v)}{\partial v}=0\}}ds\right\} dv=0.
\end{equation*}
Therefore, there exists an open set $\Omega _{x}$ such that $Z_{x}\subset
\Omega _{x}$ with $|\Omega _{x}|<\displaystyle{\frac{\eta }{2}}$. Clearly $
\frac{\partial X_{i}^{0}(s;t,x,v)}{\partial v}\neq 0$ over the compact set $
[t-T_{0},t]\times \{|v|\leq N\}\cap \Omega _{x}^{c}$. By the continuity in $
s $ and $v$, there exists $m_{x}>0$ such that over $[t-T_{0},t]\times
\{|v|\leq N\}\cap \Omega _{x}^{c}$. 
\begin{equation*}
\Big|\frac{\partial X_{i}^{0}(s;t,x,v)}{\partial v}\Big|>4m_{x}>0.
\end{equation*}
Furthermore, from the continuity in $x$, we have an open set of $(x-\Delta
_{x},x+\Delta _{x})$ such that 
\begin{equation*}
\Big|\frac{\partial X_{i}^{0}(s;t,x^{\prime },v)}{\partial v}\Big|>2m_{x}>0
\end{equation*}
for all $x^{\prime }\in (x-\Delta _{x},x+\Delta _{x})$, $(s,v)\in \lbrack
t-T_{0},t]\times \{|v|\leq N\}\cap \Omega _{x}^{c}$. Such $(x-\Delta
_{x},x+\Delta _{x})$ forms an open covering for $|x|\leq L_{\varepsilon }$,
hence there is a finite subcovering $\{(x_{k}-\Delta _{k},x_{k}+\Delta
_{k}),k=1,\dots ,P\}$ for $|x|\leq L_{\varepsilon }$. For any $|x|\leq
L_{\varepsilon }$, there exists $\Omega _{x_{l}}$ such that $x\in
(x_{l}-\Delta _{l},x_{l}+\Delta _{l})$ and for $[t-T_{0},t]\times \{|v|\leq
N\}\cap \Omega _{x_{l}}^{c}$, 
\begin{equation*}
\Big|\frac{\partial X_{i}^{0}(s;t,x,v)}{\partial v}\Big|>2m=\min_{1\leq
k\leq P}2m_{k}>0.
\end{equation*}
We finally choose an (finite) open covering of $[t-T_{0},t]\times \{|v|\leq
N\}\cap \Omega _{x_{k}}^{c}$ of the form $O_{x_{k}}=\bigcup_{n,o}
\{a_{n}<s<b_{n}\}\times \{c_{o}<v<d_{o}\}$ with \thinspace $
|a_{n}-b_{n}|+|c_{o}-d_{o}|$ sufficiently small so that over $O_{x_{k}}$ 
\begin{equation*}
\left\vert \frac{\partial X_{i}^{0}(s;t,x,v)}{\partial v}\right\vert >m>0
\end{equation*}
\end{proof}

\begin{lemma}
\label{determinant}Fix $T_{0}>0$ and $N>0$. Let $|v|\leq N$. For any $
\varepsilon >0$, recall $L_{\varepsilon }$, $O_{x_{k}}$, $1\leq k\leq P$
constructed in Lemma \ref{determinant0} . There exists $\delta >0$ such that
if $\Vert g\Vert _{L^{2}}<\delta $,

\begin{enumerate}
\item If $|x|\geq L_{\varepsilon }$ then for $0\leq s\leq t-\varepsilon $, 
\begin{equation*}
\frac{\partial X_{i}(s;t,x,v)}{\partial v}<-\frac{\varepsilon }{2}.
\end{equation*}

\item For any $|x|\leq L_{\varepsilon }$, there exists $l$ such that $x\in
(x_{l}-\Delta _{l},x_{l}+\Delta _{l})$ and for all $(s,v)\in O_{x_{l}}$ 
\begin{equation*}
\left\vert \frac{\partial X_{i}(s;t,x,v)}{\partial v}\right\vert >\frac{m}{2}
>0.
\end{equation*}
\end{enumerate}
\end{lemma}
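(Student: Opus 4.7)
The plan is a direct perturbation argument off Lemma \ref{determinant0}: I would show that $\partial X_i/\partial v$ differs from the unperturbed $\partial X_i^0/\partial v$ by an amount controlled by $\|g\|_{L^2}$ on the finite interval $[0,T_0]$, and then pick $\delta$ small enough that the strict-sign conclusions of Lemma \ref{determinant0} survive this perturbation.

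The key preliminary step is a uniform pointwise control on the force perturbation. Writing
\[
\partial_x\phi(y)-\partial_x\phi_0(y)=\int_{\R} U'(y-z)\int_{\R^3}\sqrt{M_{i+1}(z,\xi)}\,g_{i+1}(z,\xi)\,d\xi\,dz,
\]
Cauchy--Schwarz in $\xi$ (using $\int M_{i+1}\,d\xi=\bar\rho_{i+1}\leq\rho^+$) followed by Cauchy--Schwarz in $z$ (using the boundedness and compact support of $U'$) yield $\|\partial_x(\phi-\phi_0)\|_{L^\infty_x}\leq C\|g\|_{L^2}$. The same computation with $U''$ in place of $U'$ gives $\|\partial_{xx}(\phi-\phi_0)\|_{L^\infty_x}\leq C\|g\|_{L^2}$. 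Combined with the uniform bounds $\|\partial_{xx}\phi_0\|_{L^\infty}\leq\|U''\|_{L^1}\|\bar\rho_{i+1}\|_{L^\infty}\leq C$ and similarly for $\partial_{xxx}\phi_0$, these provide all the force-related constants needed below.

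Subtracting \eqref{ode2} from \eqref{ode}, the resulting linear system for $(X_i-X_i^0,V_i-V_i^0)$ vanishes at $s=t$; its forcing $-[\partial_x\phi(X_i)-\partial_x\phi_0(X_i^0)]$ splits as $-[\partial_x\phi(X_i)-\partial_x\phi(X_i^0)]-[\partial_x\phi(X_i^0)-\partial_x\phi_0(X_i^0)]$, and is bounded by $\|\partial_{xx}\phi\|_{L^\infty}|X_i-X_i^0|+C\|g\|_{L^2}$ via a mean value argument on the first bracket and the $L^\infty$ control above on the second. A first Gr\"onwall argument on $[0,T_0]$ then gives $\sup_{0\leq s\leq T_0}|X_i(s)-X_i^0(s)|\leq C_{T_0}\|g\|_{L^2}$. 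Differentiating \eqref{ode} and \eqref{ode2} in $v$ produces the corresponding second-order equations; their difference has inhomogeneity $[\partial_{xx}\phi_0(X_i^0)-\partial_{xx}\phi(X_i)]\,\partial_v X_i^0$, which is at most $C\|g\|_{L^2}$ after using $|X_i-X_i^0|\leq C\|g\|_{L^2}$, the $L^\infty$ bound on $\partial_{xx}(\phi-\phi_0)$, and \eqref{x0v}. A second Gr\"onwall then yields the central estimate
\[
\sup_{0\leq s\leq T_0}\left|\frac{\partial X_i(s;t,x,v)}{\partial v}-\frac{\partial X_i^0(s;t,x,v)}{\partial v}\right|\leq C_{T_0}\|g\|_{L^2}.
\]

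Both conclusions follow by choosing $\delta$ small. For part (1), I would enlarge $L_\varepsilon$ in Lemma \ref{determinant0} beyond the threshold used there so that the computation in its proof actually yields $\partial_v X_i^0\leq 3(s-t)/4\leq -3\varepsilon/4$ on the relevant set (which is possible since $\sup_{|y|\geq L_\varepsilon/2}|\partial_{xx}\phi_0(y)|\to 0$), and then impose $C_{T_0}\delta<\varepsilon/4$. For part (2), the bound $|\partial_v X_i^0|>m$ on $O_{x_l}$ from Lemma \ref{determinant0}(2), together with the central estimate, gives $|\partial_v X_i|>m-C_{T_0}\delta>m/2$ once $\delta<m/(2C_{T_0})$. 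The main technical point is the uniform-in-$x$ control of $\partial_x(\phi-\phi_0)$ and $\partial_{xx}(\phi-\phi_0)$ in $L^\infty$ purely by $\|g\|_{L^2}$; beyond that the lemma is a soft consequence of two standard Gr\"onwall steps.
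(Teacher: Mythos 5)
Your proposal is correct and follows essentially the same route as the paper: a perturbation argument showing $\sup_{0\le s\le T_0}|\partial_v X_i-\partial_v X_i^0|\le C_{T_0}\|g\|_{L^2}$ and then choosing $\delta$ small (the paper phrases the comparison via a Duhamel/variation-of-constants formula, whereas you subtract the ODEs directly and run Gr\"onwall twice, but the content is the same). Your remark on part (1) -- that Lemma \ref{determinant0} as stated gives exactly $-\varepsilon/2$ with no slack at $s=t-\varepsilon$, so one should really build in a margin such as $-3\varepsilon/4$ when constructing $L_\varepsilon$ -- is a small but genuine tightening that the paper's terse "choose $\|g\|_{L^2}$ small" glosses over.
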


\begin{proof}
Denote the solution operator of (\ref{ode2}) and (\ref{ode}) by $G_{0}$ and $
G$. By the Duhamel principle, we have 
\begin{multline}
\binom{X_{i}(s;t,x,v)}{V_{i}(s;t,x,v)}=e^{G_{0}(s-t)}\binom{x}{v}
\\+\int_{t}^{s}e^{G_{0}(s-\tau )}\binom{0}{-\partial _{x}U\ast \{\int \sqrt{
M_{i+1}}g_{i+1}d\xi ^{\prime }\}(\tau )}d\tau .  \label{xv}
\end{multline}
Notice that 
\begin{eqnarray*}
&&\phantom{...}\left\vert \frac{\partial }{\partial v}\partial _{x}U\ast \{\int \sqrt{
M_{i+1}}g_{i+1}d\xi ^{\prime }\}(\tau )\right\vert \\&&=\left\vert \partial
_{xx}U\ast \{\int \sqrt{M_{i+1}}g_{i+1}d\xi ^{\prime }\}\frac{\partial
X_{i}(\tau ;t,x,v)}{\partial v}\right\vert \\
&&\le C\Vert g\Vert _{L^{2}}\left\vert \frac{\partial X_{i}(\tau ;t,x,v)}{
\partial v}\right\vert .
\end{eqnarray*}
It thus follows from taking derivative of $v$ in (\ref{xv}) and by the
Gronwall's lemma that $0\leq s\leq T_{0}$ 
\begin{equation*}
\left\vert \frac{\partial X_{i}(s;t,x,v)}{\partial v}\right\vert \leq
e^{CT_{0}}.
\end{equation*}
We now use (\ref{xv}) again to get 
\begin{equation*}
\left\vert \frac{\partial X_{i}(s;t,x,v)}{\partial v}-\frac{\partial
X_{i}^{0}(s;t,x,v)}{\partial v}\right\vert \leq C_{T_{0}}\Vert g\Vert
_{L^{2}}.
\end{equation*}
Hence, we deduce our lemma by choosing $\Vert g\Vert _{L^{2}}$ sufficiently
small.
\end{proof}

\section{Weighted $L^{\infty }$ Stability.}

In this section we use the entropy-energy bound and the estimates on the
characteristics to show that the perturbation $g$ of the non homogeneous
equilibrium $M_{\bar{\rho}}$, is arbitrarily small at any positive time in a
suitable weighted $L_{\infty }$ norm, provided that it is initially
sufficiently small, thus showing the stability of the non homogeneous
equilibrium. We use the weight function $w(\xi )=\big(\Sigma +|\xi |^{2}\big)
^{\gamma }$. with $\Sigma $ a positive constant to be chosen later and $
\gamma >\frac{3}{2}$.

\begin{lemma}
\label{bound}Let $h=wg$. There exist $T_{0}>0$ and $\delta >0$ such that, if 
$\Vert h\Vert _{L^{\infty }}<\delta $, then 
\begin{equation*}
\Vert h(T_{0})\Vert _{L^{\infty }}\leq \frac{1}{2}\Vert h(0)\Vert
_{L^{\infty }}+C_{T_{0}}\sqrt{\mathcal{H}(g(0))}.
\end{equation*}
\end{lemma}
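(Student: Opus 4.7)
The plan is to propagate $h=wg$ along the perturbed characteristics $(X_i(s;t,x,v),V_i(s;t,x,v))$ from Section~3 and to exploit two competing mechanisms: the exponential decay supplied by the collision frequency $\nu$ (which beats the initial $L^\infty$ datum once $T_0$ is large) and a double-Duhamel combined with a change of variables that converts the linear gain term $\sum_j K^{i,j}g_j$ into an $L^2$-type quantity, then controlled by $\sqrt{\mathcal H(g(0))}$ via Lemma~\ref{entropy}. Concretely, I would multiply (\ref{maineqpert}) by $w$ and integrate along characteristics to obtain
$$h_i(T_0,x,v)=e^{-\int_0^{T_0}\nu\,d\tau}\,h_i(0,X_i(0),V_i(0)) + \int_0^{T_0} e^{-\int_s^{T_0}\nu\,d\tau}\bigl[w\cdot\mathrm{RHS}\bigr]\bigl(s,X_i(s),V_i(s)\bigr)\,ds.$$
Since $\nu(x,\xi)\ge\nu_0>0$ uniformly, the first term contributes at most $e^{-\nu_0 T_0}\|h(0)\|_{L^\infty}\le\tfrac{1}{2}\|h(0)\|_{L^\infty}$ once $T_0$ is chosen large.

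The RHS of (\ref{maineqpert}) splits into several pieces. The nonlinear terms $F(\sqrt{M_{i+1}}g_{i+1})g_iv$ and $\Gamma(g_i,g_j)$ admit the standard weighted Boltzmann bound $|w\cdot\mathrm{nonlin}|\le C\nu\|h\|_{L^\infty}^{2}$ (valid for $\gamma>3/2$), which under $\|h\|_{L^\infty}<\delta$ is absorbable into the left-hand side by Gronwall. The linear Vlasov source $\beta F(\sqrt{M_{i+1}}g_{i+1})\sqrt{M_i}\,v$ is tame because the Gaussian $\sqrt{M_i}$ annihilates the polynomial weight: Young's inequality applied to the convolution defining $F$ bounds its contribution by $C\|g\|_{L^2}$, hence by $C\sqrt{\mathcal H(g(0))}$ through Lemma~\ref{entropy}. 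The delicate term is $\sum_j K^{i,j}g_j$, which carries no smallness and must be treated by iteration: substituting the analogous characteristic representation for $g_j$ inside the $s$-integral against $K^{i,j}$ produces a doubly iterated kernel $\mathbf{k}(\xi,\xi')\mathbf{k}(\xi',\xi'')$ acting on $g_l(\tau,Y_j(\tau),\xi'')$, where the Grad-type estimate $|w(\xi)K(\xi,\xi')w^{-1}(\xi')|\le\mathbf{k}(\xi,\xi')$ yields an integrable kernel in the velocity variables.

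The crucial reduction is then to change the velocity variable $v$ to the spatial variable $X_i(s;\cdot,\cdot,v)$ in the inner integral, thereby trading the pointwise value of $g_l$ for its $L^2$-norm in $x$. Lemma~\ref{determinant} furnishes precisely the quantitative non-degeneracy needed: for $|x|\ge L_\varepsilon$ or $(s,v)\in O_{x_l}$ the Jacobian $|\partial X_i/\partial v|$ is bounded below by $m/2$, so the change of variables is admissible and produces $\|g\|_{L^2}$ in the spatial variable; on the complementary set of $(s,v)$, which has arbitrarily small measure, and at large $|\xi|$, the fast decay of the weight $w$ and the smallness of $\|h\|_{L^\infty}$ make the contribution negligible. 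The resulting $L^2$-quantity is then dominated by $\sqrt{\mathcal H(g(0))}$ via Lemma~\ref{entropy}, with the indicator-$\chi_i^>$ piece yielding an $L^1$ bound that is absorbed by a further smallness condition on $\delta$.

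The main obstacle is carrying out this $v\mapsto X_i(s)$ change of variables for the nonlinear dynamics: unlike in free transport, the trajectories curve under the Vlasov force, and one needs quantitative control that the perturbation $g$ does not destroy the non-degeneracy of $\partial X_i^0/\partial v$. This is exactly the content of Lemma~\ref{determinant}, whose hypothesis $\|g\|_{L^2}<\delta$ is supplied by Lemma~\ref{entropy} combined with the $L^\infty$-smallness of $h$. Choosing $T_0$ so that $e^{-\nu_0 T_0}\le 1/2$, then $\eta$, $\delta$, and $\Sigma^{-1}$ small in turn (each choice taming an absorbable remainder), yields the asserted estimate with a constant $C_{T_0}$ depending on the finite time horizon.
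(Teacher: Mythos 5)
Your proposal is correct and follows essentially the same route as the paper: Duhamel along the perturbed characteristics with the collision frequency $\nu\ge\nu_0$ supplying the factor $\frac12$ for large $T_0$, a second Duhamel iteration turning $K^{i,j}$ into a double kernel, the change of variables $v\mapsto X_j(s_1)$ justified by the Jacobian lower bounds of Lemma \ref{determinant}, and the resulting mixed $L^1$--$L^2$ quantity controlled by $\sqrt{\mathcal H(g(0))}$ through Lemma \ref{entropy}. The only details you gloss over (the smooth compactly supported approximation $\mathbf{k}_N$ of the singular kernel before changing variables, and the large-velocity case analysis) are exactly the technical steps the paper supplies, in the same order of quantifiers $T_0$, $N$, $\varepsilon$, $\eta$.
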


\begin{proof}
We first write the equation for $h=wg$ from (\ref{maineqpert}): 
\begin{eqnarray}  \label{linear}
&&\left( \partial _{t}+v\partial _{x}+F(M_{i+1}+\sqrt{M_{i+1}}\frac{h_{i+1}}{
w}\right) \partial _{v}+\nu (x,\xi )\Big)h_{i}=  \notag  \label{h} \\
&&F\left( M_{i+1}+\sqrt{M_{i+1}}\frac{h_{i+1}}{w}\right) \frac{w^{\prime }}{
w}h_{i}+wF\left( \sqrt{M_{I+1}}g_{i+1}\right) v\sqrt{M_{i}} +\sum_{j=1,2}K_{w}^{ij}h_{j}\notag\\
&&+F\left( \sqrt{M_{i+1}}\frac{h_{i+1}}{w}
\right) vh_{i}+w\Gamma \left( \frac{h_{i}}{w},\frac{h_{i}}{w}\right)
+w\Gamma \left( \frac{h_{i}}{w},\frac{h_{i+1}}{w}\right) ,
\end{eqnarray}
where ${K}_{w}^{ij}(\cdot )=wK^{ij}\left( \displaystyle{\frac{\cdot }{w}}
\right) $. We note that for $\Sigma \geq 1$ 
\begin{equation}
\frac{w(\xi )}{w(\xi ^{\prime })}=\frac{[\Sigma +|\xi |^{2}]^{\gamma }}{
[\Sigma +|\xi ^{\prime }|^{2}]^{\gamma }}\leq C_{\gamma }\frac{[\Sigma +|\xi
^{\prime }|^{2}]^{\gamma }+|\xi ^{\prime }-\xi |^{2\gamma }}{[\Sigma +|\xi
^{\prime }|^{2}]^{\gamma }}\leq C_{\gamma }[1+|\xi ^{\prime }-\xi
|^{2}]^{\gamma }.  \label{weight}
\end{equation}
For any $(t,x,\xi )$, integrating along its backward trajectory (\ref{ode})
\begin{equation*}
\lbrack X_{i}(s),V_{i}(s)]=[X_{i}(s;t,x,v),V_{i}(s;t,x,v)]
\end{equation*}
we can express $h_{i}(t,x,\xi )$ as 
\begin{eqnarray}
&&h_{i}(0,X_{i}(0;t,x,v),V_{i}(0;t,x,v),\zeta )+  \notag \\
&&\int_{0}^{t}e^{\int_{t}^{s}\nu _{i}(\tau )d\tau }\{F(M_{i+1}+\sqrt{M_{i+1}}
g_{i+1})\frac{w^{\prime }}{w}h_{i}\}(s,X_{i}(s),V_{i}(s),\zeta )ds  \notag \\
&&+\int_{0}^{t}e^{\int_{t}^{s}\nu _{i}(\tau )d\tau }\{F(\sqrt{M_{i+1}}
g_{i+1})V_{i}(s)\sqrt{M_{i}}\}(s,X_{i}(s),V_{i}(s),\zeta )ds  \notag \\
&&+\sum_{j=1}^{2}\int_{0}^{t}e^{\int_{t}^{s}\nu _{i}(\tau )d\tau }\left(
\sum_{j=1,2}K_{w}^{i,j}h_{j}\right) (s,X_{i}(s),V_{i}(s),\zeta )ds  \notag \\
&&+\int_{0}^{t}e^{\int_{t}^{s}\nu _{i}(\tau )d\tau }\{F(\sqrt{M_{i+1}}
g_{i+1})vh_{i}\}(s,X_{i}(s),V_{i}(s),\zeta )  \notag \\
&&+\int_{0}^{t}e^{\int_{t}^{s}\nu _{i}(\tau )d\tau }w\Big[\Gamma \Big(\frac{
h_{i}}{w},\frac{h_{i}}{w}\Big)+\Gamma \Big(\frac{h_{i}}{w},\frac{h_{i+1}}{w}\Big)\Big](s,X_{i}(s),V_{i}(s),\zeta )ds.  \label{duhamel}
\end{eqnarray}

We have set $\nu _{i}(\tau )\equiv \nu (V_{i}(\tau ),\zeta )\geq \nu _{0}>0$
. Fix a small constant $\varepsilon >0$. We can choose $\Sigma $ large so
that $|\frac{w^{\prime }}{w}|\leq \varepsilon $. Since $\Vert F(M_{i+1}+
\sqrt{M_{i+1}}g_{i+1})\Vert _{L^{\infty }}\leq C$ if $\Vert h\Vert
_{L^{\infty }}$ is small, the second term in (\ref{duhamel}) is bounded by 
\begin{equation*}
C\varepsilon e^{-\frac{\nu _{0}t}{2}}\sup_{0\leq s\leq T_{0}}\{e^{\frac{\nu
_{0}s}{2}}\Vert h(s)\Vert _{L^{\infty }}\}.
\end{equation*}

For the third term in (\ref{duhamel}), we split 
\begin{equation*}
g_{i+1}=g_{i+1}\mathbf{1}_{|f_{i}(t)-M_{i}|\geq \kappa M_{i}}+g_{i+1}\mathbf{
1}_{|f_{i}(t)-M_{i}|\geq \kappa M_{i}}.
\end{equation*}
Since $U$ is smooth, 
\begin{multline*}
\Vert F(\sqrt{M_{i+1}}g_{i+1})\Vert _{L^{\infty }}\\\leq C\{\Vert \sqrt{M_{i+1}
}g_{i+1}\mathbf{1}_{|f_{i}(t)-M_{i}|\leq \kappa M_{i}}\Vert _{L^{2}}+\Vert 
\sqrt{M_{i+1}}g_{i+1}\mathbf{1}_{|f_{i}(t)-M_{i}|\geq \kappa M_{i}}\Vert
_{L^{1}}\},
\end{multline*}
by Lemma \ref{entropy}, 
\begin{eqnarray}
&&\Vert F(\sqrt{M_{i+1}}g_{i+1})V_{i}(s)\sqrt{M_{i}}(s,X_{i}(s),V_{i}(s),
\zeta )\Vert _{L^{\infty }}  \notag \\
&\leq &C\big(\mathcal{H}(g(0))+\sqrt{\mathcal{H}(g(0))}\big).  \label{split}
\end{eqnarray}

For the fifth term, we note that, since for hard spheres $\nu _{i}(s)\geq
\nu _{0}(1+|\zeta |+|V_{i}(s)(s)|$, it follows that $\displaystyle{\frac{
|V_{i}(s)|}{\nu _{i}(s)}<\frac{1}{\nu _{0}}}$. Moreover, $\displaystyle{
\int_{0}^{t}ds\frac{\nu _{i}(s)}{2}\text{\textrm{e}}^{\int_{t}^{s}(\nu
_{i}(\tau )/2)d\tau }\leq 1}$. Therefore, 
\begin{equation*}
\left\vert \int_{0}^{t}e^{\int_{t}^{s}\nu _{i}(\tau )d\tau }F(\sqrt{M_{i+1}}
g_{i+1})V_{i}(s)h_{i}ds\right\vert \leq Ce^{-\nu _{0}t}\sup_{0\leq s\leq
T_{0}}\{e^{\frac{\nu _{0}s}{2}}\Vert h(s)\Vert _{L^{\infty }}\}^{2}.
\end{equation*}
For the last term in (\ref{duhamel}), by Lemma 10 of \cite{G2}, it follows 
\begin{equation*}
\left\vert w\Gamma \left( \frac{h_{i}}{w},\frac{h_{i}}{w}\right) (\xi
)\right\vert +\left\vert w\Gamma \left( \frac{h_{i}}{w},\frac{h_{i+1}}{w}
\right) (\xi )\right\vert \leq C\nu (\xi )\Vert h\Vert _{L^{\infty }}^{2}.
\end{equation*}
We therefore get the bound for the last term by 
\begin{multline*}
\int_{0}^{t}e^{\int_{t}^{s}\nu (\tau )d\tau }\nu _{i}(s)\Vert h(s)\Vert
_{L^{\infty }}^{2}ds\\ \leq C\{\sup_{0\leq s\leq T_{0}}e^{\frac{\nu _{0}s}{2}
}\Vert h(s)\Vert _{L^{\infty }}\}^{2} \int_{0}^{t}e^{\int_{t}^{s}\nu
_{i}(\tau )d\tau }\nu _{i}(s)e^{-\nu _{0}s}ds.
\end{multline*}
Note that $\frac{d}{ds}[e^{\int_{t}^{s}\nu _{i}(\tau )d\tau
}]=e^{\int_{t}^{s}\nu _{i}(\tau )d\tau }\nu _{i}(s)$. Integrating by parts
yields 
\begin{eqnarray*}
\int_{0}^{t}e^{\int_{t}^{s}\nu _{i}(\tau )d\tau }\nu _{i}(s)e^{-\nu _{0}s}ds
&=&\Big(e^{\int_{t}^{s}\nu _{i}(\tau )d\tau }e^{-\nu _{0}s}\Big)\Big|
_{s=0}^{s=t}+\nu _{0}\int_{0}^{t}e^{\int_{t}^{s}\nu _{i}(\tau )d\tau
}e^{-\nu _{0}s}ds \\
&\leq &C(1+t)e^{-\nu _{0}t}.
\end{eqnarray*}

We shall mainly concentrate on the fourth term in (\ref{duhamel}). Let $
\mathbf{k}^{i,j}(\xi ,\xi ^{\prime })$ be the corresponding kernel
associated with $K_{w}^{ij}$ in (\ref{linear}). We now use (\ref{duhamel})
for $h_{j}(s,X_{i}(s),\xi ^{\prime })$ again to evaluate 
\begin{equation*}
\{K_{w}^{i,j}h_{j}\}(s,X_{i}(s),V_{i}(s),\zeta )=\int \mathbf{k}
_{w}^{i,j}(V_{i}(s),\zeta ,\xi ^{\prime })h_{j}(s,X_{i}(s),\xi ^{\prime
})d\xi ^{\prime }.
\end{equation*}
Denote 
\begin{equation*}
\lbrack X_{j}(s_{1}),V_{j}(s_{1})]\equiv \lbrack
X_{j}(s_{1};X_{i}(s;t,x,v),v^{\prime }),V_{j}(s_{1};X_{i}(s;t,x,v),v^{\prime
})].
\end{equation*}
We can bound the fourth term in (\ref{duhamel}) by the sum on $j$ of
\begin{eqnarray}
&&\int_{0}^{t}e^{\int_{t}^{s}\nu _{i}(\tau )d\tau +\int_{s}^{0}\nu _{j}(\tau
)d\tau }\int_{\ \mathbb{R}^{3}}|\mathbf{k}_{w}^{i,j}(V_{i}(s),\zeta ,\xi
^{\prime })|h_{j}(0,X_{j}(0),V_{j}(0),\zeta ^{\prime })|d\xi ^{\prime }ds 
\notag \\
&&+\int_{0}^{t}\int_{s_{1}}^{s}e^{\int_{t}^{s}\nu _{i}(\tau )d\tau
+\int_{s}^{s_{1}}\nu _{j}(\tau )d\tau }\int_{\ \mathbb{R}^{3}}|\mathbf{k}
_{w}^{i,j}(V_{i}(s),\zeta ,\xi ^{\prime })|  \notag \\
&&\times \{F(M_{j+1}+\sqrt{M_{j+1}}g_{j+1})\frac{w^{\prime }}{w}
h_{j}\}(s_{1},X_{j}(s_{1}),V_{j}(s_{1}),\zeta ^{\prime })d\xi ^{\prime
}dsds_{1}  \notag \\
&&+\int_{0}^{t}\int_{s_{1}}^{s}e^{\int_{t}^{s}\nu _{i}(\tau )d\tau
+\int_{s}^{s_{1}}\nu _{j}(\tau )d\tau }\int_{\ \mathbb{R}^{3}}|\mathbf{k}
_{w}^{i,j}(V_{i}(s),\zeta ,\xi ^{\prime })|  \notag \\
&&\times \{F(\sqrt{M_{j+1}}g_{j+1})v\sqrt{M_{j}}
\}(s_{1},X_{j}(s_{1}),V_{j}(s_{1}),\zeta ^{\prime })d\xi ^{\prime }ds  \notag
\\
&&+\sum_{k}\int_{0}^{t}\int_{s_{1}}^{s}\int_{\ \mathbb{R}^{3}\times \ 
\mathbb{R}^{3}}e^{\int_{t}^{s}\nu _{i}(\tau )d\tau +\int_{s}^{s_{1}}\nu
_{j}(\tau )d\tau }\big|\mathbf{k}_{w}^{i,j}(V_{i}(s),\zeta ,\xi ^{\prime })\notag\\&&\times
\mathbf{k}_{w}^{j,k}(V_{j}(s_{1}),\zeta ^{\prime },\xi ^{\prime \prime })\big| 
  h_{k}(s_{1},X_{j}(s_{1}),\xi ^{\prime \prime })|d\xi ^{\prime }d\xi
^{\prime \prime }dsds_{1}  \notag \\
&&+\int_{0}^{t}\int_{s_{1}}^{s}e^{\int_{t}^{s}\nu _{i}(\tau )d\tau
+\int_{s}^{s_{1}}\nu _{j}(\tau )d\tau }\int_{\ \mathbb{R}^{3}}|\mathbf{k}
_{w}^{i,j}(V_{i}(s),\zeta ,\xi ^{\prime })|  \notag \\
&&\times \{F(\sqrt{M_{j+1}}g_{j+1})vh_{j}\}(s_{1},X_{j}(s_{1}),V_{j}(s_{1}),
\zeta ^{\prime })d\xi ^{\prime }dsds_{1}  \notag \\
&&+\int_{0}^{t}\int_{s_{1}}^{s}e^{\int_{t}^{s}\nu _{i}(\tau )d\tau
+\int_{s}^{s_{1}}\nu _{j}(\tau )d\tau }\int_{\ \mathbb{R}^{3}}|\mathbf{k}
_{w}^{i,j}(V_{i}(s),\zeta ,\xi ^{\prime })|  \label{double} \\
&&\times w\Big\{\Gamma \Big(\frac{h_{j}}{w},\frac{h_{j}}{w}\Big)+\Gamma \Big(
\frac{h_{j}}{w},\frac{h_{j+1}}{w}\Big)\Big\}(s,X_{j}(s_{1}),V_{j}(s_{1}))d
\xi ^{\prime }dsds_{1}.  \notag
\end{eqnarray}

We will make an extended use of Lemma 7 of \cite{G2}, which we report here
for reader's convenience: For hard spheres, the usual Grad estimates imply: 
\begin{equation}
|\mathbf{k}_{i,j}(\xi,\xi^{\prime})|\le
C\{|\xi-\xi^{\prime}|+|\xi-\xi^{\prime}|^{-1}\}\text{\textrm{e}}^{-\frac 1 8
|\xi-\xi^{\prime}|^2-\frac 1 8 \frac{\|\xi|^2-|\xi^{\prime}|^2|^2}{
|\xi-\xi^{\prime}|^2}}.
\end{equation}

\begin{lemma}[Lemma 7 of \protect\cite{G2}]
\label{lemma7}There are $\e>0$ and $C>0$ such that 
\begin{equation}
\int_{\mathbb{R}^{3}}d\xi ^{\prime }\frac{w(\xi )}{w(\xi ^{\prime })}\{|\xi
-\xi ^{\prime }|+|\xi -\xi ^{\prime }|^{-1}\}\text{\textrm{e}}^{-\frac{1-\e}{
8}|\xi -\xi ^{\prime }|^{2}-\frac{1-\e}{8}\frac{\Vert \xi |^{2}-|\xi
^{\prime }|^{2}|^{2}}{|\xi -\xi ^{\prime }|^{2}}}\leq \frac{C}{1+|\xi |}.
\end{equation}
\end{lemma}

By Lemma \ref{lemma7}, we obtain the crucial estimate 
\begin{equation}
\int_{\ \mathbb{R}^{3}}\mathbf{k}_{w}^{i.j}|(\xi ,\xi ^{\prime })|d\xi
^{\prime }<\frac{C}{1+|\xi |}  \label{integral}
\end{equation}
uniformly in $\Sigma $. 
Since $\nu _{i}\geq \nu _{0}$, by taking $L^{\infty }$ norm for $h$ and (\ref
{integral}), we bound the first term in (\ref{double}) by $Cte^{-\nu
_{0}t}\Vert h_{0}\Vert _{L^{\infty }}$, and the second term by 
\begin{equation*}
\varepsilon Ce^{-\frac{\nu _{0}}{2}t}\sup_{0\leq s\leq T_{0}}\{e^{\frac{\nu
_{0}}{2}s}\Vert h(s)\Vert _{L^{\infty }}\}.
\end{equation*}
By (\ref{integral}), the third term is bounded by $C\mathcal{H}(g(0))$ as in
(\ref{split}), and the last two nonlinear terms are bounded by 
\begin{equation*}
C\{1+t\}e^{-\nu _{0}t}\{\sup_{0\leq s\leq T_{0}}e^{\frac{\nu _{0}}{2}s}\Vert
h(s)\Vert _{L^{\infty }}^{{}}\}^{2}.
\end{equation*}

We now concentrate on the fourth term in (\ref{double}), which will be
estimated along the same lines of the proof of Theorem 20 in \cite{G2}.

\textbf{CASE 1:} For $|\xi |\geq N_{T_{0}}$, we know that from (\ref{ode}) 
\begin{equation*}
|V_{i}(s;t,x,v)-v|\leq |s-t|C\leq CT_{0},
\end{equation*}
By Lemma \ref{lemma7} and (\ref{weight}), for $N_{T_{0}}$ large 
\begin{multline*}
\int \int |\mathbf{k}_{w}^{i,j}(V_{i}(s),\zeta ,\xi ^{\prime })\mathbf{k}
_{w}^{j,k}(V_{j}(s_{1}),\zeta ^{\prime },\xi ^{\prime \prime })|d\xi
^{\prime }d\xi ^{\prime \prime }\\\leq \frac{C_{{}}}{1+|V_{i}(s)|+|\zeta |}
\leq \frac{C}{N-CT_{0}},
\end{multline*}
we therefore can find an upper bound for the fourth term in (\ref{double})
by ($N>>T_{0}$) 
\begin{multline*}
\frac{C}{N}\int_{0}^{t}e^{-\nu _{0}(t-s)}\times \int_{0}^{s}e^{-\nu
_{0}(s-s_{1})}\Vert h(s_{1})\Vert _{L^{\infty }}ds_{1}ds\\\leq \frac{Ce^{-
\frac{\nu _{0}}{2}t}}{N}\sup_{0\leq s\leq T_{0}}e^{\frac{\nu _{0}}{2}s}\Vert
h(s)\Vert _{L^{\infty }}.
\end{multline*}

\textbf{CASE 2:}\textit{\ }For $|\xi |\leq N$, $|\xi ^{\prime }|\geq 2N$, or 
$|\xi ^{\prime }|\leq 2N$, $|\xi ^{\prime \prime }|\geq 3N$. Notice that we
have either $|\xi ^{\prime }-\xi |\geq N$ or $|\xi ^{\prime }-\xi ^{\prime
\prime }|\geq N$. This implies that 
\begin{eqnarray*}
|v^{\prime }-V_{i}(s;t,x,v)| &\geq &|v^{\prime }-v|-|v-V_{i}(s;t,x,v)| \\
&\geq &|v^{\prime }-v|-CT_{0}, \\
|v^{\prime \prime }-V_{j}(s_{1};X_{i}(s;t,x,v),v^{\prime })| &\geq
&|v^{\prime \prime }-v^{\prime }|-|v^{\prime
}-V_{j}(s_{1};X_{i}(s;t,x,v),v^{\prime })| \\
&\geq &|v^{\prime \prime }-v^{\prime }|-CT_{0}.
\end{eqnarray*}
Therefore, either one of the following are valid correspondingly for some $
\sigma >0$: 
\begin{eqnarray*}
|\mathbf{k}_{w}^{i,j}(V_{i}(s),\zeta ,\xi ^{\prime })| &\leq &C_{T_{0}}e^{-
\frac{\sigma }{8}N^{2}}|\mathbf{k}_{w}^{i,j}(V_{i}(s),\zeta ,\xi ^{\prime
})e^{\frac{\sigma }{8}\{|V_{i}(s)-v^{\prime }|^{2}+|\zeta -\zeta ^{\prime
}|^{2}\}}|,\text{ \ \ } \\
\text{\ \ \ }|\mathbf{k}_{w}^{j,k}(V_{j}(s_{1}),\zeta ^{\prime },\xi
^{\prime \prime })| &\leq &C_{T_{0}}e^{-\frac{\sigma }{8}N^{2}}|\mathbf{k}
_{w}^{j,k}(V_{j}(s_{1}),\zeta ^{\prime },\xi ^{\prime \prime })\times\\&&{ e}^{\frac{
\sigma }{8}\{|V_{j}(s_{1})-v^{\prime \prime }|^{2}+|\zeta ^{\prime \prime
}-\zeta ^{\prime }|^{2}\}}|.
\end{eqnarray*}
From Lemma \ref{lemma7}, 
\begin{multline}
\int |\mathbf{k}_{w}^{i,j}(V_{i}(s),\zeta ,\xi ^{\prime })e^{\frac{\sigma }{8
}\{|V_{i}(s)-v^{\prime }|^{2}+|\zeta -\zeta ^{\prime }|^{2}\}}|d\xi ^{\prime
}\\+\int |\mathbf{k}_{w}^{j,k}(V_{j}(s_{1}),\zeta ^{\prime },\xi ^{\prime
\prime })e^{\frac{\sigma }{8}\{|V_{j}(s_{1})-v^{\prime \prime }|^{2}+|\zeta
^{\prime \prime }-\zeta ^{\prime }|^{2}\}}|d\xi ^{\prime \prime }<+\infty .
\end{multline}
We use this bound to combine the cases of $|\xi ^{\prime }-\xi |\geq N$ or $
|\xi ^{\prime }-\xi ^{\prime \prime }|\geq N$ as:
\begin{equation*}
\int_{0}^{t}\int_{0}^{s_{1}}\left\{ \int_{|\xi |\leq N,|\xi ^{\prime }|\geq
2N,\text{ \ \ }}+\int_{|\xi ^{\prime }|\leq 2N,|\xi ^{\prime \prime }|\geq
3N}\right\}
\end{equation*}
We first integrate $\xi ^{\prime }$ for the first integral and apply (\ref
{integral}) to integrate $\mathbf{k}_{w}^{j,k}$ over $\xi ^{\prime \prime }$.
We then integrate $\xi ^{\prime \prime }$ for the second integral and apply (
\ref{integral}) to integrate $\mathbf{k}_{w}^{i,j}$ over $\xi ^{\prime }$.
We thus find an upper bound 
\begin{eqnarray}\label{inflowstep3}
&&C\int_{0}^{t}\int_{0}^{s_{1}}\sup_{\xi }\int_{|\xi |\leq N,|\xi ^{\prime
}|\geq 2N,\text{ \ \ }}|\mathbf{k}_{w}^{ij}(V_{i}(s),\zeta ,\xi ^{\prime
})|d\xi ^{\prime }\notag\\&&+\sup_{\xi ^{\prime }}\int_{|\xi ^{\prime }|\leq 2N,|\xi
^{\prime \prime }|\geq 3N}|\mathbf{k}_{w}^{ij}(V_{j}(s_{1}),\zeta ^{\prime
},\xi ^{\prime \prime })|d\xi ^{\prime \prime }  \\
&\leq &\frac{C_{\eta }}{\kappa ^{2}}e^{-\frac{\eta }{8}N^{2}}\int_{0}^{t}
\int_{0}^{s_{1}}e^{-\nu _{0}(t-s_{1})}\Vert h(s_{1})\Vert _{L^{\infty
}}ds_{1}ds  \notag \\
&\leq &C_{\eta }e^{-\frac{\eta }{8}N^{2}}e^{-\frac{\nu _{0}}{2}t}\sup_{0\leq
s\leq t}\{e^{\frac{\nu _{0}}{2}s}\Vert h(s)\Vert _{L^{\infty }}\}\notag.
\end{eqnarray}

\textbf{CASE 3.} $|\xi |\leq N$, $|\xi ^{\prime }|\leq 2N,|\xi ^{\prime
\prime }|\leq 3N$. This is the last remaining case because if $|\xi ^{\prime
}|>2N$, it is included in Case 2; while if $|\xi ^{\prime \prime }|>3N$,
either $|\xi ^{\prime }|\leq 2N$ or $|\xi ^{\prime }|\geq 2N$ are also
included in Case 2. We now can bound the second term in (\ref{double}) by 
\begin{equation*}
C\int_{0}^{t}\int_{B}\int_{0}^{s}e^{-\nu _{0}(t-s_{1})}|\mathbf{k}
_{w}^{i,j}(V_{i}(s),\zeta ,\xi ^{\prime })\mathbf{k}_{w}^{j,k}(V_{j}(s_{1}),
\zeta ^{\prime },\xi ^{\prime \prime })h_{k}(s_{1},X_{j}(s_{1}),\xi ^{\prime
\prime })|
\end{equation*}
where $B=\{|\xi ^{\prime }|\leq 2N$, $|\xi ^{\prime \prime }|\leq 3N\}$. We
notice that $\mathbf{k}_{w}^{i,j}(\xi ,\xi ^{\prime })$ has a possible
integrable singularity of the type $\frac{1}{|\xi -\xi ^{\prime }|}$. We can
choose $\mathbf{k}_{N}^{i,j}(\xi ,\xi ^{\prime })$ smooth with compact
support such that 
\begin{equation}
\sup_{|p|\leq 3N}\int_{|\xi ^{\prime }|\leq 3N}|\mathbf{k}_{N}^{i,j}(p,\xi
^{\prime })-\mathbf{k}_{w}^{i,j}(p,\xi ^{\prime })|d\xi ^{\prime }\leq \frac{
1}{N}.  \label{approximate}
\end{equation}
Split $\mathbf{k}_{w}^{ij}(V_{i}(s),\zeta ,\xi ^{\prime })\mathbf{k}
_{w}^{j,k}(V_{j}(s_{1}),\zeta ^{\prime },\xi ^{\prime \prime })$ into 
\begin{eqnarray*}
&&\{\mathbf{k}_{w}^{i,j}(V_{i}(s),\zeta ,\xi ^{\prime })-\mathbf{k}
_{N}^{i,j}(V_{i}(s),\zeta ^{\prime },\xi ^{\prime })\}\mathbf{k}
_{w}^{j,k}(V_{j}(s_{1}),\zeta ^{\prime },\xi ^{\prime \prime }) \\
&&+\{\mathbf{k}_{w}^{i,j}(V_{j}(s_{1}),\zeta ,\xi ^{^{\prime }})-\mathbf{k}
_{N}^{i,j}(V_{j}(s_{1}),\zeta ,\xi ^{\prime })\}\mathbf{k}
_{N}^{j,k}(V_{j}(s),\zeta ^{\prime },\xi ^{\prime }) \\
&&+\mathbf{k}_{N}^{i,j}(V_{i}(s),\zeta ,\xi ^{\prime })\mathbf{k}
_{N}^{j,k}(V_{j}(s_{1}),\zeta ^{\prime },\xi ^{\prime \prime }).
\end{eqnarray*}
We then integrate the first term above in $\xi ^{\prime \prime }$ and the
second term above in $\xi ^{\prime }$. By (\ref{integral}), we can use such
an approximation (\ref{approximate}) to bound the $s_{1},s$ integration by 
\begin{eqnarray}
&&\frac{Ce^{-\frac{\nu _{0}}{2}t}}{N}\sup_{0\leq s\leq t}\{e^{\frac{\nu _{0}
}{2}s}\Vert h(s)\Vert _{L^{\infty }}\}  \notag \\
&&\times \left\{ \sup_{|\xi ^{\prime }|\leq 2N}\int |\mathbf{k}
_{w}^{i,j}(V_{j}(s_{1}),\zeta ^{\prime },\xi ^{\prime \prime })|d\xi
^{\prime \prime }+\sup_{|\xi |\leq 2N}\int |\mathbf{k}_{w}^{j,k}(V_{i}(s),
\zeta ,\xi ^{\prime })|d\xi ^{\prime }\}\right\}  \label{inflowstep41} \\
&&+C\int_{0}^{t}\int_{B}\int_{s_{1}}^{s}e^{-\nu _{0}(t-s_{1})}\mathbf{k}
_{N}^{ij}(V_{i}(s),\zeta ,\xi ^{\prime })\mathbf{k}_{N}^{ij}(V_{j}(s_{1}),
\xi ^{\prime \prime })|h_{j}(s,X_{j}(s_{1}),\zeta ^{\prime },\xi ^{\prime
\prime })|.  \notag
\end{eqnarray}
The first term above is further bounded by $\frac{Ce^{-\frac{\nu _{0}}{2}t}}{
N}\sup_{0\leq s\leq t}\{e^{\frac{\nu _{0}}{2}s}\Vert h(s)\Vert _{L^{\infty
}}\}$.

\ \ \ \ \ \ \ \ Fix $\varepsilon >0$. We use now Lemma \ref{determinant} for
the last main contribution in (\ref{inflowstep41}) for which we separate two
cases $|X_{j}(s;t,x,v)|\geq L_{\varepsilon }$ and $|X_{j}(s;t,x,v)|\leq
L_{\varepsilon }$ where $L_{\varepsilon }$ is given in Lemma \ref
{determinant}.

In the case $|X_{j}(s;t,x,v)|\geq L_{\varepsilon }$, we bound it by 
\begin{eqnarray*}
&&C_{N}\int_{0}^{t}\int_{B}\int_{0}^{s}e^{-\nu
_{0}(t-s_{1})}|h_{k}(s,X_{j}(s_{1}),\xi ^{\prime \prime })|\mathbf{1}
_{|X_{j}(s_{1})|\geq L_{\varepsilon }}dsd\xi ^{\prime \prime }d\xi ^{\prime
}ds_{1} \\
&\leq &C_{N}\{\int_{0}^{t-\varepsilon }+\int_{t-\varepsilon }^{t}\}.
\end{eqnarray*}
The second integral is bounded by $C\varepsilon e^{-\frac{\nu _{0}}{2}
t}\sup_{0\leq s\leq t}\{e^{\frac{\nu _{0}}{2}s}\Vert h(s)\Vert _{L^{\infty
}}\}$. In the first integral, since $s\leq t-\varepsilon $, by Lemma \ref
{determinant}, we can make a change of variable 
\begin{equation}
y=X_{j}(s_{1})=X_{j}(s_{1};X_{i}(s;t,x,v),v^{\prime })  \label{change}
\end{equation}
because $|\frac{dy}{dv^{\prime }}|\geq \frac{\varepsilon }{2}$. We observe
since $\Vert \partial _{x}\phi \Vert _{L^{\infty }}\leq C$, that from (\ref{ode})
\begin{eqnarray*}
|v^{\prime }-V_{j}(\tau )| &\leq &\int_{\tau }^{s}\Vert \partial _{x}\phi
\Vert _{L^{\infty }}d\tau \leq T_{0}\Vert \partial _{x}\phi \Vert
_{L^{\infty }}, \\
|y-X_{i}(s)| &\leq &\int_{s_{1}}^{s}|V_{j}(\tau )|d\tau \leq
T_{0}(|v^{\prime }|+T_{0}\Vert \partial _{x}\phi \Vert _{L^{\infty }})\leq
C_{T_{0},N}
\end{eqnarray*}
for $|v^{\prime }|\leq 2N$. By first integrating over $\zeta ^{\prime }$ and using
the change of variable (\ref{change}) 
\begin{eqnarray*}
&&\int_{0}^{t-\varepsilon }\int_{B}\int_{0}^{s}e^{-\nu
_{0}(t-s_{1})}|h_{k}(s_{1},X_{j}(s_{1}),\xi ^{\prime \prime })|\mathbf{1}
_{|X_{j}(s_{1})|\geq L_{\varepsilon }}dsd\xi ^{\prime \prime }d\xi ^{\prime
}ds_{1} \\
&\leq &\frac{C}{\varepsilon }\int_{0}^{t-\varepsilon }\int_{|y-X_{i}(s)|\leq
C_{T_{0},N}}\int_{|\xi ^{\prime \prime }|\leq 3N}\int_{0}^{s}e^{-\nu
_{0}(t-s_{1})}|h_{k}(s_{1},y,\xi ^{\prime \prime })|dsd\xi ^{\prime \prime
}dyds_{1} \\
&\leq &\frac{C_{N}}{\varepsilon }\sup_{0\leq s_{1}\leq
T_{0}}\int_{|y-X_{i}(s)|\leq C_{T_{0},N}}\int_{|\xi ^{\prime \prime }|\leq
3N}|h_{k}(s_{1},y,\xi ^{\prime \prime })|ds_{1}d\xi ^{\prime \prime }dy \\
&=&\int_{|f_{k}(t)-M_{k}|\geq \kappa M_{j}}+\int_{|f_{k}(t)-M_{k}|\leq
\kappa M_{j}} \\
&\leq &C_{T_{0},N,\varepsilon }\{\mathcal{H}(g(0))+\sqrt{\mathcal{H}(g(0))}
\}.
\end{eqnarray*}
We\ have used the fact $h_{k}=\frac{w(f_{k}-M_{k})}{\sqrt{M_{k}}}$, (which
is bounded by $f_{k}-M_{k}$ for $|\xi ^{\prime \prime }|\leq 3N$), and
applied Lemma \ref{entropy}.

For $|X_{i}(s;t,x,v)|\leq L_{\varepsilon }$, for any $\eta >0$, we again
employ Lemma \ref{determinant} to find $O_{x_{l}}$ such that 
\begin{eqnarray*}
&&\int_{0}^{T_{0}}\int_{B}\int_{0}^{T_{0}}e^{-\nu
_{0}(t-s_{1})}|h_{k}(s_{1},X_{j}(s_{1}),\xi ^{\prime \prime })|\mathbf{1}
_{|X_{j}(s)|\leq L_{\varepsilon }}dsd\xi ^{\prime \prime }d\xi ^{\prime
}ds_{1} \\
&=&\int_{0}^{T_{0}}\int_{B}\int_{0}^{T_{0}}\mathbf{1}_{O_{x_{l}}^{c}}e^{-\nu
_{0}(t-s_{1})}|h_{k}(s_{1},X_{j}(s_{1}),\xi ^{\prime \prime })|\mathbf{1}
_{|X_{j}(s)|\leq L_{\varepsilon }}dsd\xi ^{\prime \prime }d\xi ^{\prime
}ds_{1} \\
&+&\int_{0}^{T_{0}}\int_{B}\int_{0}^{T_{0}}\mathbf{1}_{O_{x_{l}}^{{}}}e^{-
\nu _{0}(t-s_{1})}|h_{k}(s_{1},X_{j}(s_{1}),\xi ^{\prime \prime })|\mathbf{1}
_{|X_{j}(s)|\leq L_{\varepsilon }}dsd\xi ^{\prime \prime }d\xi ^{\prime
}ds_{1}.
\end{eqnarray*}
Since $|[0,T_{0}]\times \lbrack -N,N]\cap O_{x_{l}}^{c}|<\eta $, the first
part is bounded by 
\begin{equation*}
C_{T_{0},N,\varepsilon }\eta e^{-\frac{\nu _{0}}{2}t}\sup_{0\leq s\leq
t}\{e^{\frac{\nu _{0}}{2}s}\Vert h(s)\Vert _{L^{\infty }}\}.
\end{equation*}
The second part is bounded by
\begin{equation*}
C_{T_{0},N,\varepsilon }\int_{0}^{T_{0}}\int_{0}^{T_{0}}\int_{B}\mathbf{1}
_{O_{x_{l}}}|h_{k}(s_{1},X_{j}(s_{1}),\xi ^{\prime \prime })dsds_{1}d\xi
^{\prime }d\xi ^{\prime \prime }.
\end{equation*}
Since $|\frac{\partial X_{j}(s_{1};X_{i}(s;t,x,v),v^{\prime })}{\partial
v^{\prime }}|>m_{\eta }/2$ on $O_{x_{l}}$ from Lemma \ref{determinant}, we
can make a (local) change of variable $
y=X_{j}(s_{1})=X_{j}(s_{1};X_{i}(s;t,x,v),v^{\prime })$ to get 
\begin{eqnarray*}
&&C_{T_{0},N,\varepsilon }\int_{0}^{T_{0}}\int_{0}^{T_{0}}\int_{B}\mathbf{1}
_{O_{x_{l}}^{c}}|h_{k}(s_{1},X_{j}(s_{1}),\xi ^{\prime \prime })dsds_{1}d\xi
^{\prime }d\xi ^{\prime \prime } \\
&=&C_{T_{0},N,\varepsilon
}\sum^{I_{l}}C_{T_{0},N}\int_{0}^{T_{0}}\int_{0}^{T_{0}}
\int_{|y-X_{i}(s_{1})|\leq C_{T_{0},N}}\times\\&&\int_{|\xi ^{\prime \prime }|\leq
3N}h_{k}(s_{1},y,\xi ^{\prime \prime })dsds_{1}dyd\xi ^{\prime \prime } \\
&=&\int_{|f_{k}(t)-M_{k}|\leq \kappa M_{k}}+\int_{|f_{k}(t)-M_{k}|\geq
\kappa M_{k}} \\
&\leq &C_{T_{0},N,\varepsilon ,\eta }\{\mathcal{H}(g(0))+\sqrt{\mathcal{H}
(g(0))}\}.
\end{eqnarray*}

Collecting terms, we conclude 
\begin{eqnarray*}
\sup_{0\leq s\leq T_{0}}e^{\frac{\nu _{0}}{2}t}\Vert h(s)\Vert _{L^{\infty
}} &&\leq C(1+T_{0})\Vert h(0)\Vert _{L^{\infty }} \\
&&+\{\frac{C_{T_{0}}}{N}+C_{N,T_{0}}\varepsilon +C_{N,T_{0},\varepsilon
}\eta \}\sup_{0\leq s\leq T_{0}}\{e^{\frac{\nu _{0}}{2}s}\Vert h(s)\Vert
_{L^{\infty }}\} \\
&&+C\{\sup_{0\leq s\leq T_{0}}e^{\frac{\nu _{0}}{2}s}\Vert h(s)\Vert
_{L^{\infty }}\}^{2}+C_{T_{0},N,\varepsilon ,\eta }\sqrt{\mathcal{H}(g(0))}.
\end{eqnarray*}
Assume $\sup_{0\leq s\leq T_{0}}\Vert h(s)\Vert _{L^{\infty }}$ is
sufficiently small. We first choose $T_{0}$ sufficiently large so that 
\begin{equation*}
2C(1+T_{0})e^{-\frac{\nu _{0}}{2}T_{0}}\leq \frac{1}{2},
\end{equation*}
then $N$ sufficiently large, then $\varepsilon $ sufficiently small, finally 
$\eta $ small to conclude our lemma.
\end{proof}

\begin{proof}
of Theorem \ref{stabl}: Assume $\sup_{0\leq t\leq \infty }\Vert h(t)\Vert
_{L^{\infty }}$ is small. We first establish (\ref{hstability}). Choose any $
n=0,1,2,3,\dots$ and apply Lemma \ref{bound} repeatedly to get 
\begin{eqnarray*}
\Vert h(nT_{0})\Vert _{L^{\infty }} &\leq &\frac{1}{2}\Vert
h(\{n-1\}T_{0})\Vert _{L^{\infty }}+C_{T_{0}}\sqrt{\mathcal{H}(g(0))} \\
&\leq &\frac{1}{4}\Vert h(\{n-2\}T_{0})\Vert _{L^{\infty }}+\frac{1}{2}
C_{T_{0}}\sqrt{\mathcal{H}(g(0))}+C_{T_{0}}\sqrt{\mathcal{H}(g(0))} \\
&\leq &\dots \\
&\leq &\frac{1}{2^{n}}\Vert h_{0}\Vert _{L^{\infty }}+C_{T_{0}}\sqrt{
\mathcal{H}(g(0))}\{1+\frac{1}{2}+\frac{1}{4}+\dots\} \\
&\leq &\frac{1}{2^{n}}\Vert h_{0}\Vert _{L^{\infty }}+2C_{T_{0}}\sqrt{
\mathcal{H}(g(0))}.
\end{eqnarray*}
For any $t$, we can find $n$ such that $nT_{0}\leq t\leq \{n+1\}T_{0}$, and
from $L^{\infty }$ estimate from $[0,T_{0}]$, we conclude (\ref{hstability})
by 
\begin{equation*}
\Vert h(t)\Vert _{L^{\infty }}\leq C_{T_{0}}\Vert h(nT_{0})\Vert \leq
C\{\Vert h_{0}\Vert _{L^{\infty }}+\sqrt{\mathcal{H}(g(0))}\}.
\end{equation*}

To prove (\ref{h1bound}), we take $x$ and $v$ derivatives to get 
\begin{eqnarray}
&&\{\partial _{t}+v\partial _{x}+F(M_{i+1}+\sqrt{M_{i+1}}g_{i+1})\partial
_{v}+\nu (\xi )\}\partial _{x}g_{i}-K^{i,j}\partial _{x}g_{j}  \notag \\
&=&-\partial _{x}F(M_{i+1}+\sqrt{M_{i+1}}g_{i+1})\partial _{v}g_{i}+\beta
\partial _{x}F(\sqrt{M_{i+1}}g_{i+1})v\partial _{x}\sqrt{M_{i}}+  \notag \\
&&+\partial _{x}\{F(\sqrt{M_{j}}g_{j})vg_{i}\}+\partial _{x}\{\Gamma
(g_{i},g_{i})+\Gamma (g_{i},g_{j})\};  \label{xderi} \\
&&\{\partial _{t}+v\partial _{x}+F(M_{j}+\sqrt{M_{j}}g_{j})\partial _{v}+\nu
(\xi )\}\partial _{v}g_{i}-\partial _{v}\{K^{i,j}g_{j}\}+\{\partial _{v}\nu
(\xi )\}g_{i}  \notag \\
&=&-\partial _{x}g_{i}+\beta \partial _{x}F(\sqrt{M_{j}}g_{j})v\sqrt{M_{i}}
+\beta \partial _{x}F(\sqrt{M_{j}}g_{j})v\partial _{x}\sqrt{M_{i}} \notag \\
&&+F(\sqrt{
M_{j}}g_{j})\partial _{v}\{vg_{i}\} +\partial _{v}\{\Gamma (g_{i},g_{i})+\Gamma (g_{i},g_{j})\}  \label{vderi}
\end{eqnarray}
where $K^{i,j}$ has similar property as $K_{i}$ in \cite{G1} (see Lemma 2.2
in \cite{G1}, p. 1109. In particular, $\Vert \partial
_{v}\{K^{i,j}g_{j}\}\partial _{v}g_{i}\Vert _{L^{1}}\leq \frac{1}{2}\Vert
\partial _{v}g\Vert _{\nu }^{2}+C\Vert g\Vert _{L^{2}}^{2}$ so that a
positive dissipation for $\partial _{v}g_{i}$ occur for small $\Vert h\Vert
_{L^{\infty }}$ in (\ref{vderi}). Notice that $L=\nu -K\geq 0$. We take
inner product with $\partial _{x}g_{i}$ and $\partial _{v}g_{i}$
respectively, following the procedures in \cite{G1} to get:
\begin{eqnarray}
&&\frac{d}{dt}\frac{1}{2}\Vert \partial _{x}g\Vert _{L^{2}}^{2} \leq
C\{\Vert \partial _{x}F(M_{i+1})\Vert _{L^{\infty }}+\Vert h\Vert
_{L^{\infty }}\}\Vert \nabla _{x,v}g\Vert _{L^{2}}^{2}+C\Vert g\Vert
_{L^{2}}^{2}.  \label{h1estimate} \notag\\&&
\frac{d}{dt}\frac{1}{2}\Vert \partial _{v}g\Vert _{L^{2}}^{2}+\frac{1}{4}
\Vert \partial _{v}g\Vert _{\nu }^{2} \leq C\Vert \partial _{x}g\Vert
_{L^{2}}^{2}+C\Vert g\Vert _{L^{2}}^{2}.  
\end{eqnarray}
Hence (\ref{h1bound}) follows from the Gronwall Lemma since $
\sup_{0\leq t\leq \infty }\Vert h(t)\Vert _{L^{\infty }}$ is bounded by (\ref
{hstability}). With such an estimate, we obtain the uniqueness by taking $
L^{2}$ estimate for the difference for (\ref{maineqpert}) because the most
difficult term $F(M_{i+1}+\sqrt{M_{i+1}}g_{i+1})\partial _{v}g_{i}$ can be
handled.
\end{proof}

\section{Linear Instability: Growing Mode.}

In this section we study the linearization of the equation (\ref{maineq})
around the homogeneous equilibrium $M_{\text{\rm hom}}=(\mu_{\beta},\mu_{\beta})$. In the sequel we omit the index $\beta$ for sake of shortness: $\mu =\mu _{\beta }=(\frac{\beta }{2\pi }
)^{3/2}e^{-\beta \frac{|\xi |^{2}}{2}}$. When $M$ is replaced by $M_{\text{
hom}}=(\mu ,\mu )$ in (\ref{maineqpert}), we get the following linearized
Vlasov-Boltzmann system: 
\begin{equation}
\partial _{t}{g}+\mathcal{L}{g}=0,  \label{linearvb}
\end{equation}
where ${\ g}=(g_{1},g_{2})$, 
\begin{equation*}
(\mathcal{L}{g})_{i}=v\partial _{x}g_{i}-\beta F(\sqrt{\mu }g_{i+1})v\sqrt{
\mu }-L_{i}{g}
\end{equation*}
and 
\begin{equation*}
L_{i}g=\frac{1}{\sqrt{\mu }}\Big(Q(\sqrt{\mu }g_{i},2\mu )+Q(\mu ,\sqrt{\mu }
(g_{1}+g_{2}))\Big).
\end{equation*}
We seek an exponential growing mode for such a system when $\beta >1$. To
this end, we consider a family of systems 
\begin{equation}
\partial _{t}{g}+\mathcal{L}^{\alpha }{g}=0,  \label{family}
\end{equation}
\begin{equation*}
(\mathcal{L}^{\alpha }{\ g})_{i}=v\partial _{x}g_{i}-\beta F(\sqrt{\mu }
g_{i+1})v\sqrt{\mu }-\alpha L_{i}g
\end{equation*}
and show that there is a growing mode for all $\alpha >0$.We shall first
seek a growing mode of the form $g_{1}=g_{2}=g$, so that the system (\ref
{family}) reduces to the single equation 
\begin{equation}
\{\partial _{t}+v\partial _{x}\}g-\beta F(\sqrt{\mu }g)v\sqrt{\mu }=\alpha
Lg.  \label{eigen}
\end{equation}
with $Lg=\frac{2}{\sqrt{\mu }}\Big(Q(\sqrt{\mu }g,\mu )+Q(\mu ,\sqrt{\mu }g
\Big)$. We seek a growing mode periodic in $x$, so we assume periodic
dependence on space and exponential in time: $g(t,x,\xi )=e^{\lambda t}e^{
\mathbf{i}kx}q(\xi )$. From the definition of $F$ (see (\ref{vlasovforce})),
(\ref{eigen}) becomes: 
\begin{equation}
\{\lambda +\mathbf{i}vk\}q-\beta k\mathbf{i}\hat{U}(k)\left\{ \int q\sqrt{
\mu }d\xi \right\} v\sqrt{\mu }=\alpha Lq.  \label{kmode}
\end{equation}
Equivalently, $q(\xi )$ is an eigenfunction for the operator $\mathcal{T}
^{\alpha }$ with eigenvalue $\lambda :$ 
\begin{equation}
(\mathcal{T}^{\alpha }q)(\xi )=\mathbf{i}vkq(\xi )-\beta k\mathbf{i}\hat{U}
(k)\left\{ \int q(\xi )\sqrt{\mu }d\xi \right\} v\sqrt{\mu }-\alpha Lq(\xi ).
\label{talpha}
\end{equation}

\begin{lemma}
\label{lambdalarge}Let $\beta >1$. There exists sufficiently small $\alpha
>0 $ such that there is an eigenfunction ${q(}\xi )$ to $\mathcal{T}^{\alpha
}$ with $\Re \lambda >0$.
\end{lemma}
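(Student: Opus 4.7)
The plan is to convert the eigenvalue problem for $\mathcal{T}^\alpha$ into a scalar dispersion relation, solve it for $\alpha=0$ via a Penrose-type argument, and then perturb. For $\Re\lambda>0$ the operator $\mathcal{A}_\alpha(\lambda):=\lambda+\mathbf{i}vk-\alpha L$ is invertible on (weighted) $L^2_\xi$: since $-L$ is non-negative, $\Re\langle \mathcal{A}_\alpha(\lambda)q,q\rangle\geq\Re\lambda\,\|q\|^2$, giving $\|\mathcal{A}_\alpha(\lambda)^{-1}\|\leq 1/\Re\lambda$. Setting $c=\int q\sqrt{\mu}\,d\xi$, the eigenvalue equation (\ref{kmode}) is equivalent to $q=c\,\beta k\mathbf{i}\hat U(k)\,\mathcal{A}_\alpha(\lambda)^{-1}[v\sqrt{\mu}]$. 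Pairing with $\sqrt{\mu}$ and cancelling $c$ yields the dispersion relation
\begin{equation*}
D_\alpha(\lambda,k):=1-\beta k\mathbf{i}\hat U(k)\,\bigl\langle \mathcal{A}_\alpha(\lambda)^{-1}[v\sqrt{\mu}],\sqrt{\mu}\bigr\rangle=0.
\end{equation*}
A zero $\lambda$ of $D_\alpha(\cdot,k)$ with $\Re\lambda>0$ produces the desired eigenfunction $q$ (one still has to check $c\neq 0$, but this is automatic since $\lambda q=\mathcal{T}^\alpha q$ with $c=0$ forces $q=0$ by the coercivity above).

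For $\alpha=0$ the dispersion function becomes $D_0(\lambda,k)=1-\beta k\mathbf{i}\hat U(k)\int \frac{v\mu}{\lambda+\mathbf{i}vk}\,d\xi$. Integrating the $\zeta$ variable out, choosing $\lambda=\sigma>0$ real, and using the parity in $v$ to cancel the real part of the integrand, this simplifies to
\begin{equation*}
D_0(\sigma,k)=1-\beta k^2\hat U(k)\int_{\mathbb{R}}\frac{v^2\,M_\beta(v)}{\sigma^2+v^2k^2}\,dv,
\end{equation*}
where $M_\beta$ is the $1$D marginal of $\mu$. As $\sigma\downarrow 0$ the integral tends to $k^{-2}\int M_\beta\,dv=k^{-2}$, hence $D_0(0^+,k)=1-\beta\hat U(k)$; as $\sigma\to\infty$, $D_0(\sigma,k)\to 1$. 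Since $\hat U(0)=\int U(|x|)\,dx=1$ and $\hat U$ is continuous, the hypothesis $\beta>1$ lets me pick $k_0>0$ small with $\beta\hat U(k_0)>1$, so $D_0(0^+,k_0)<0$. The intermediate value theorem then provides a real root $\lambda_0=\sigma_0>0$ of $D_0(\cdot,k_0)$, i.e.\ a growing mode for the collisionless Vlasov equation (this is just the Penrose criterion alluded to in the introduction).

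It remains to perturb in $\alpha$ at fixed $k=k_0$. Writing the second resolvent identity
\begin{equation*}
\mathcal{A}_\alpha(\lambda)^{-1}=\mathcal{A}_0(\lambda)^{-1}+\alpha\,\mathcal{A}_\alpha(\lambda)^{-1}L\,\mathcal{A}_0(\lambda)^{-1},
\end{equation*}
and noting that $\mathcal{A}_0(\lambda)^{-1}[v\sqrt{\mu}]=v\sqrt{\mu}/(\lambda+\mathbf{i}vk)$ decays like $\sqrt{\mu}$ and hence lies in the domain of $L$ with $L[\mathcal{A}_0(\lambda)^{-1}v\sqrt{\mu}]\in L^2$, I get $|D_\alpha(\lambda,k_0)-D_0(\lambda,k_0)|\leq C(\delta)\alpha$ uniformly on any compact subset of $\{\Re\lambda\geq\delta\}$. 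The function $D_0(\cdot,k_0)$ is analytic in the right half-plane and has an isolated zero at $\lambda_0$, which is simple because the integral $\int\frac{v^2 M_\beta}{(\sigma^2+v^2k_0^2)^2}\,dv>0$ shows $\partial_\sigma D_0(\sigma_0,k_0)\neq 0$. Rouché's theorem applied to a small disk around $\lambda_0$ inside $\{\Re\lambda>0\}$ therefore produces, for all $\alpha$ sufficiently small, a zero $\lambda(\alpha)$ of $D_\alpha(\cdot,k_0)$ with $\Re\lambda(\alpha)>0$, completing the proof.

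The main technical obstacle is the perturbation step, because the hard-sphere operator $L=K-\nu$ is unbounded in $L^2_\xi$, so $\|\alpha L\,\mathcal{A}_0^{-1}\|$ is not small as an operator norm. The saving feature is that the quantity entering the dispersion relation is $L$ applied to the specific vector $v\sqrt{\mu}/(\lambda+\mathbf{i}vk)$, whose Gaussian decay tames $\nu$; so one never needs a global operator perturbation bound, only control of a single matrix element, which suffices to close the Rouché argument.
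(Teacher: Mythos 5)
Your proof is correct, and it agrees with the paper's argument for the collisionless step but follows a genuinely different route at the perturbation step. Both treatments reduce, at $\alpha=0$, to the Penrose-type dispersion function $D_0(\lambda,k)=1-\beta k^2\hat U(k)\int \frac{v^2\mu}{\lambda^2+v^2k^2}\,d\xi$ and use $D_0(0^+,k)=1-\beta\hat U(k)$ together with $\hat U(0)=1$, $\beta>1$ to produce a real root $\lambda_0>0$ at a small $k_0$. The paper then restores the collisions abstractly: it proves $\|\mathcal{T}^0 q\|_{L^2}\geq C_1\|\nu q\|_{L^2}-C_2\|q\|_{L^2}$, deduces that $L$ is $\mathcal{T}^0$-bounded, and invokes Kato's perturbation theory for isolated eigenvalues. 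You instead package the whole problem into the scalar dispersion relation $D_\alpha(\lambda,k_0)=1-\beta k_0\mathbf{i}\hat U(k_0)\langle\mathcal{A}_\alpha(\lambda)^{-1}[v\sqrt{\mu}],\sqrt{\mu}\rangle$, get $\|\mathcal{A}_\alpha(\lambda)^{-1}\|\leq 1/\Re\lambda$ from dissipativity, bound $|D_\alpha-D_0|\leq C(\delta)\alpha$ via the second resolvent identity, check simplicity of $\lambda_0$ from $\partial_\sigma D_0(\sigma_0,k_0)>0$, and close with Rouch\'e. Your route is more elementary and self-contained: it bypasses spectral perturbation theory (in particular it does not need $\lambda_0$ to be shown to be an isolated point of the spectrum of $\mathcal{T}^0$, a point the paper leaves implicit) and it makes transparent that the unboundedness of $L$ is harmless because only the single matrix element involving $L\,\mathcal{A}_0(\lambda)^{-1}[v\sqrt{\mu}]$ --- with Gaussian decay built in --- ever enters, not an operator norm of $L$. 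The paper's argument is shorter but leans on the Kato machinery and the auxiliary lower bound for $\|\mathcal{T}^0 q\|$, neither of which you need.
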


\begin{proof}
We first study the eigenvalue problem for the \textit{unperturbed} operator $
\mathcal{T}^{0}$ for $\alpha =0:$ 
\begin{equation}
(\lambda +\mathbf{i}vk)q-\beta k\mathbf{i}\hat{U}(k)\left\{ \int_{\mathbb{R}
^{3}}q\sqrt{\mu }d\xi \right\} v\sqrt{\mu }=0.  \label{alpha=0}
\end{equation}
This is similar to the Penrose dispersion relation for the unperturbed
Vlasov-Poisson system (\cite{Pen}) for a collisionless plasma. From (\ref
{alpha=0}), we obtain: \begin{equation*}
q=\displaystyle{\frac{\beta k\hat{U}(k)\mathbf{i}\left\{ \displaystyle{\int_{
\mathbb{R}^{3}}q\sqrt{\mu }d\xi }\right\} v\sqrt{\mu }}{(\lambda +\mathbf{i}
vk)}}.
\end{equation*}
Normalizing $\displaystyle{\ \int_{\mathbb{R}^{3}}q\sqrt{\mu }d\xi =1}$, we
deduce that 
\begin{equation*}
\beta \mathbf{i}\int_{\mathbb{R}^{3}}\frac{\hat{U}(k)kv\mu (\xi )}{(\lambda +
\mathbf{i}vk)}d\xi =1.
\end{equation*}
Multiply and divide by $(\lambda -\mathbf{i}vk)$, take the imaginary part
and then integrate on $\xi $. By consistency we must have: 
\begin{equation*}
\beta \int_{\mathbb{R}^{3}}\frac{v^{2}\hat{U}(k)k^{2}\mu (\xi )}{\lambda
^{2}+k^{2}v^{2}}d\xi =1.
\end{equation*}
Define $F(\lambda ,k)\equiv \displaystyle{\beta \int_{\mathbb{R}^{3}}\frac{
v^{2}\hat{U}(k)k^{2}\mu (\xi )}{\lambda ^{2}+k^{2}v^{2}}d\xi }$. Clearly $
F(0,k)=\beta \hat{U}(k)$. Since $\hat{U}(0)=1$ and $\beta >1$, there is $
k_{0}$ sufficiently small so that 
\begin{equation}
F(0,k_{0})=\beta \hat{U}(k_{0})>1.  \label{k0}
\end{equation}
Moreover, $\lim_{\lambda \rightarrow \infty }F(\lambda ,k_{0})=0$ for any $
k_{0}\neq 0$. Hence there exists a real number (eigenvalue) $\lambda >0$
such that $F(\lambda ,k_{0})=1$ and $q={\frac{\beta k\hat{U}(k)\mathbf{i}v
\sqrt{\mu }}{(\lambda +\mathbf{i}vk)}}$ is the eigenfunction.

We now fix $k=k_{0}$ and return to (\ref{kmode}).

It can be proved (see for example \cite{G3}) that $\Vert Lq\Vert
_{L^{2}}\leq C\Vert \nu q\Vert _{L^{2}}$. Moreover, for hard spheres, there
are constants $C_{1}$ and $C_{2}>0$ such that 
\begin{eqnarray}
\Vert \mathcal{T}^{0}q\Vert _{L^{2}} &=&\Vert \mathbf{i}k_{0}vq-\beta k_{0}
\mathbf{i}\hat{U}(k_{0})\{{\int_{\mathbb{R}^{3}}q\sqrt{\mu }d\xi \}v}\sqrt{
\mu }\Vert _{L^{2}}  \notag \\
&\geq &C_{1}\Vert \nu q\Vert _{L^{2}}-C_{2}\Vert q\Vert _{L^{2}}.
\label{tlower}
\end{eqnarray}
This implies $\Vert Lq\Vert _{L^{2}}\leq C\{\Vert \mathcal{T}^{0}q\Vert
_{L^{2}}+\Vert q\Vert _{L^{2}}\}$ so the perturbation $L$ is $\mathcal{T}
^{0} $-bounded. Since 
\begin{equation*}
\mathcal{T}^{\alpha }=\mathcal{T}^{0}-\alpha L,
\end{equation*}
we deduce from Kato's book (P206, \cite{K}) that, for $\alpha $ small, there
is an eigenvalue $\lambda $ and an eigenfunction $q(v,\zeta )$ for (\ref
{kmode}) with positive real part $\lambda $ for (\ref{kmode}).
\end{proof}

\begin{theorem}
\label{lininst} Let $\beta >1$. 
Then there is a $\frac{2\pi }{k_{0}}$-periodic eigenvector $(\tilde{g}
,\tilde{g})$ for $-\mathcal{L}$ such that $\Re \lambda
>0$ and $\tilde{g}(x,v,\zeta )=\tilde{g}(-x,-v,\zeta )$.
\end{theorem}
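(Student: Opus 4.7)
The plan is to extend the small-$\alpha$ growing mode of Lemma~\ref{lambdalarge} all the way to $\alpha=1$ by a continuation-with-contradiction argument, and then to symmetrize to obtain the required parity. Set
$$\mathcal{A}=\{\alpha\in(0,1]:\mathcal{T}^{\alpha}\text{ has an eigenvalue with }\Re\lambda>0\text{ at wavenumber }k_{0}\}.$$
By Lemma~\ref{lambdalarge}, $(0,\alpha_{0})\subset\mathcal{A}$ for some $\alpha_{0}>0$. Estimate (\ref{tlower}) shows that $\alpha L$ is relatively $\mathcal{T}^{0}$-bounded uniformly in $\alpha\in[0,1]$, so the branch $\alpha\mapsto(\lambda(\alpha),q_{\alpha})$ extends by Kato analytic perturbation theory as long as $\lambda(\alpha)$ stays isolated from the rest of the spectrum; a standard $L^{2}$-energy identity on the eigenequation furnishes an a priori bound on $\lambda(\alpha)$. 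Thus $\mathcal{A}$ is open in $(0,1]$, and if $1\notin\mathcal{A}$ one would have $\alpha^{*}:=\sup\mathcal{A}<1$ with the branch crossing the imaginary axis at $\alpha=\alpha^{*}$, yielding a \emph{neutral} eigenpair $(\mathbf{i}\tau_{*},q_{*})$, $\tau_{*}\in\mathbb{R}$, $q_{*}\not\equiv 0$, solving
\begin{equation}\label{neutralmode}
\mathbf{i}(\tau_{*}+vk_{0})q_{*}-\beta k_{0}\mathbf{i}\hat U(k_{0})\,\langle q_{*},\sqrt{\mu}\rangle\,v\sqrt{\mu}=\alpha^{*}Lq_{*}.
\end{equation}

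The core of the proof is to rule out \eqref{neutralmode}. Let $P$ denote the orthogonal projection onto $\ker L$ in $L^{2}(d\xi)$. Since the Vlasov source $v\sqrt{\mu}\in\ker L$, applying $(I-P)$ to \eqref{neutralmode} gives $\alpha^{*}L(I-P)q_{*}=\mathbf{i}(I-P)[(\tau_{*}+vk_{0})q_{*}]$. Combining this with the real-part identity from testing \eqref{neutralmode} against $q_{*}$, namely $\alpha^{*}\langle -Lq_{*},q_{*}\rangle=-\beta k_{0}\hat U(k_{0})\Im(A\bar B)$ with $A=\langle q_{*},\sqrt{\mu}\rangle$ and $B=\langle q_{*},v\sqrt{\mu}\rangle$, and exploiting the spectral gap of $-L$ on $(I-P)L^{2}$, one forces $(I-P)q_{*}=0$, i.e.\ $q_{*}=(a+b\cdot\xi+c|\xi|^{2})\sqrt{\mu}$. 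With $Lq_{*}=0$, \eqref{neutralmode} becomes the pointwise polynomial identity $(\tau_{*}+vk_{0})q_{*}=\beta k_{0}\hat U(k_{0})Av\sqrt{\mu}$. Matching monomials in $(v,\zeta)$ on both sides annihilates $c$, the $\zeta$-components of $b$, and the $v^{2}$- and $v^{3}$-terms, reducing to $q_{*}=a\sqrt{\mu}$; the surviving identity $\tau_{*}a+ak_{0}v=\beta k_{0}\hat U(k_{0})av$ forces $\tau_{*}=0$ and $\beta\hat U(k_{0})=1$, contradicting \eqref{k0}. Hence $\mathcal{A}=(0,1]$ and $\mathcal{T}^{1}$ possesses a growing eigenpair $(\lambda,q)$ with $\Re\lambda>0$.

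For the final symmetrization, observe that $\mathcal{L}$ commutes with the parity $\sigma:(x,v,\zeta)\mapsto(-x,-v,\zeta)$: the transport $v\partial_{x}$ is $\sigma$-invariant; $U(|x-y|)$ is even in $x-y$, so the Vlasov force transforms oddly in $x$ and the product $Fv\sqrt{\mu}$ is $\sigma$-even; and the Boltzmann kernel is invariant under $v\mapsto -v$ by the substitution $(\omega_{1},v')\mapsto(-\omega_{1},-v')$. Consequently, if $e^{\mathbf{i}k_{0}x}q(v,\zeta)$ solves the linearized system with eigenvalue $\lambda$, then so does $e^{-\mathbf{i}k_{0}x}q(-v,\zeta)$. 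Set
$$\tilde g(x,v,\zeta)=e^{\mathbf{i}k_{0}x}q(v,\zeta)+e^{-\mathbf{i}k_{0}x}q(-v,\zeta).$$
This function is $\frac{2\pi}{k_{0}}$-periodic and verifies $\tilde g(-x,-v,\zeta)=\tilde g(x,v,\zeta)$ by direct inspection; by linearity $(\tilde g,\tilde g)$ is an eigenvector of $-\mathcal{L}$ with the same eigenvalue $\lambda$, $\Re\lambda>0$, as required.

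The principal obstacle is the reduction $q_{*}\in\ker L$: a priori the transport $vk_{0}q_{*}$ couples the kernel and non-kernel parts of $q_{*}$, and the cancellation mechanism requires exploiting simultaneously the dissipation of $L$ on $(I-P)L^{2}$, the rank-one structure of the Vlasov term in $\ker L$, and the positivity $\beta\hat U(k_{0})-1>0$ from the Penrose choice of $k_{0}$.
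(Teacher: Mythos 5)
Your overall strategy coincides with the paper's: you define the same set of $\alpha$'s and run a continuation-with-contradiction to push the small-$\alpha$ growing mode of Lemma~\ref{lambdalarge} to $\alpha=1$, you derive a purported neutral mode at the supremum, you reduce it to a collision invariant and then to a multiple of $\sqrt{\mu}$, and you symmetrize exactly as in the paper via $q(v,\zeta)e^{\mathbf{i}k_0x}+q(-v,\zeta)e^{-\mathbf{i}k_0 x}$. The monomial-matching step and the final contradiction with $\beta\hat U(k_0)=1$ versus \eqref{k0} are both correct.

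However, there is a genuine gap at the pivotal step ``one forces $(I-P)q_*=0$.'' Testing \eqref{neutralmode} against $q_*$ and taking real parts gives, as you write,
\begin{equation*}
\alpha^*\langle -Lq_*,q_*\rangle=-\beta k_0\hat U(k_0)\,\Im(A\bar B),
\end{equation*}
and the left-hand side is $\geq c\|(I-P)q_*\|_\nu^2\geq 0$ by the spectral gap of $-L$. But nothing you have said controls the sign, let alone the value, of $\Im(A\bar B)$; the projected equation $\alpha^*L(I-P)q_*=\mathbf{i}(I-P)[(\tau_*+vk_0)q_*]$ and the ``rank-one structure of the Vlasov term'' do not by themselves deliver this. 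The missing ingredient is the zeroth ($\sqrt{\mu}$-)moment of the eigenequation, i.e.\ mass conservation: multiplying \eqref{neutralmode} by $\sqrt{\mu}$ and integrating in $\xi$ kills both $Lq_*$ and the Vlasov source (the latter because $\int v\mu\,d\xi=0$), leaving
\begin{equation*}
\lambda_*A+\mathbf{i}k_0B=0,\qquad \lambda_*=\mathbf{i}\tau_*.
\end{equation*}
Hence $B=-\tau_*A/k_0$ and $A\bar B=-\tau_*|A|^2/k_0$ is real, so $\Im(A\bar B)=0$. Only now does the spectral gap give $\langle Lq_*,q_*\rangle=0$, hence $q_*\in\ker L$, and the rest of your argument closes. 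This is precisely the ``continuity equation'' the paper uses; without it the claim $(I-P)q_*=0$ is unsupported. A secondary, more minor point: the existence of the neutral limit at $\alpha^*$ is asserted via Kato perturbation theory plus an a priori bound on $\lambda(\alpha)$, but you should also say a word about why the normalized eigenfunctions $q_\alpha$ do not degenerate weakly to zero; the paper resolves this by showing strong convergence in $L^2_\nu$ using the lower bound \eqref{tlower} and the fact that the macroscopic projection $\mathbf{P}q_\alpha$ converges strongly.
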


\begin{proof}
We fix $k_{0}$ as in Lemma \ref{lambdalarge}. The idea is to define, for the
family of equations (\ref{family}) 
\begin{eqnarray*}
\alpha _{0}=\sup_{\alpha }\Big\{\alpha &:&\text{there is an eigenvalue with
positive real part for }\mathcal{T}^{\alpha } \\
&\ &\text{ with a $\frac{2\pi }{k_{0}}$~--~periodic eigenvector}\Big\}.
\end{eqnarray*}
By Lemma \ref{lambdalarge}, for $\alpha $ sufficiently small this set is not
empty. We want to show that $\alpha _{0}=+\infty $.

We prove it by contradiction. Suppose $\alpha _{0}<+\infty $. We claim that,
if there is such a finite $\alpha _{0}>0$, then there is an eigenvalue $
\lambda _{0}$ with an eigenfunction $q_{0}$ with $\Vert q_{0}\Vert _{\nu }=
\sqrt{\int \nu |q_{0}|^{2}d\xi }=1$, such that $\Re \lambda _{0}=0$ and 
\begin{equation}
(\lambda _{0}+\mathbf{i}vk_{0})q_{0}-\beta k_{0}\mathbf{i}\hat{U}
(k_{0})\left\{ \int_{\mathbb{R}^{3}}q_{0}\sqrt{\mu }d\xi \right\} v\sqrt{\mu 
}=\alpha _{0}Lq_{0}.  \label{0eigen}
\end{equation}
\textit{Proof of the claim:} In fact, by (\ref{tlower}), choose a family of
eigenfunctions $q_{\alpha }\in L^{2}$ such that $\Vert q_{\alpha }\Vert
_{\nu }=1$, $\Re \lambda _{\alpha }>0$, as $\alpha \rightarrow \alpha
_{0}$ and 
\begin{equation}
(\lambda _{\alpha }+\mathbf{i}vk_{0})q_{\alpha }-\beta k_{0}\mathbf{i}\hat{U}
(k_{0})\left\{ \int_{\mathbb{R}^{3}}q_{\alpha }\sqrt{\mu }d\xi \right\} v
\sqrt{\mu }=\alpha Lq_{\alpha }.  \label{aeigen}
\end{equation}
Notice that both $(Lq_{\alpha },\bar{q}_{\alpha })$\ and $(\mathbf{i}$ $
vk_{0}q_{\alpha },\bar{q}_{\alpha })$ are bounded by $C\Vert q_{\alpha
}\Vert _{\nu }^{2}$. As $\alpha \rightarrow \alpha _{0}$, taking $L^{2}$
inner product with $\bar{q}_{a}$ for (\ref{aeigen}), we deduce that $
|\lambda _{\alpha }|$ is bounded for $\alpha \rightarrow \alpha _{0}$. Hence 
$\lim_{\alpha \rightarrow \alpha _{0}}\lambda _{\alpha }=\lambda _{0}$ (up
to subsequences) with $\Re \lambda _{0}\geq 0$. We now prove that $
\lambda _{0}$ is an eigenvalue so that $\Re \lambda _{0}=0$ by the
definition of $\alpha _{0}$ and the claim is proven. Clearly, we may assume
that $\lim q_{\alpha }=q_{0}$ weakly in $L^{2}$ and (\ref{0eigen}) are valid
as $\alpha \rightarrow \alpha _{0}$. We only need to show that $\lim
q_{\alpha }=q_{0}$ strongly so that $\Vert q_{0}\Vert _{\nu }=1$, and $q_{0}$
is an eigenfunction. Denote $\mathbf{P}h=\{1,v,|v|^{2}\}\sqrt{\mu }$.
Clearly $\mathbf{P}q_{\alpha }\rightarrow \mathbf{P}q_{0}$ strongly in $
L^{2}$. It thus is left to show that $(\mathbf{I}-\mathbf{P})q_{\alpha
}\rightarrow (\mathbf{I}-\mathbf{P})q_{0}$ strongly in $L^{2}$. We subtract (
\ref{aeigen}) from (\ref{0eigen}) to get 
\begin{eqnarray*}
&&(\lambda _{\alpha }-\lambda _{0})q_{\alpha }+(\alpha -\alpha
_{0})Lq_{\alpha }+(\lambda _{0}+\mathbf{i}k_{0}v)(q_{\alpha }-q_{0}) \\
&&-\beta k_{0}\mathbf{i}\hat{U}(k_{0})\left\{ \int_{\mathbb{R}
^{3}}\{q_{\alpha }-q_{0}\}\sqrt{\mu }d\xi \right\} v\sqrt{\mu }+\alpha
_{0}L(q_{\alpha }-q_{0}) \\
&=&0.
\end{eqnarray*}
We take the $L^{2}$ inner product with $\bar{q}_{\alpha }-\bar{q}_{0}$ and
then take the real part. Since $\Re \lambda _{0}\Vert q_{\alpha
}-q_{0}\Vert ^{2}\geq 0$, and $\mathbf{i}k_{0}\int v|q_{\alpha
}-q_{0}|^{2}d\xi $ is purely imaginary, we obtain: 
\begin{eqnarray*}
\alpha _{0}\langle L(g^{\alpha }-g),(\bar{g}^{\alpha }-\bar{g})\rangle &\leq
&(|\lambda _{\alpha }-\lambda _{0}|+|\alpha -\alpha _{0}|)\Vert q_{\alpha
}\Vert _{\nu }\cdot \Vert q_{\alpha }-q_{0}\Vert _{\nu } \\
&&+C|\int_{\mathbb{R}^{3}}\{q_{\alpha }-q_{0}\}\sqrt{\mu }d\xi |\cdot \Vert
(g^{\alpha }-g)\Vert .
\end{eqnarray*}
Therefore, $(\mathbf{I}-\mathbf{P})\{g^{\alpha }-g\}\rightarrow 0$ in $
L_{\nu }^{2}$ and $\Vert g\Vert _{\nu }=1$ and our claim follows.

Hence, $\lambda _{0}$ is purely imaginary. Actually we show that it is $0$.
To do this we take inner product with $\bar{q}_{0}$ in (\ref{0eigen}) to get 
\begin{equation*}
\lambda _{0}\Vert q_{0}\Vert _{2}^{2}-\beta k_{0}\mathbf{i}\hat{U}
(k_{0})\left\{ \int_{\mathbb{R}^{3}}q_{0}\sqrt{\mu }d\xi \right\} \left\{
\int_{\mathbb{R}^{3}}\bar{q}_{0}v\sqrt{\mu }d\xi \right\} +\alpha
_{0}\langle Lg,\bar{g}\rangle =0.
\end{equation*}
But by integrating $\sqrt{\mu }\times $(\ref{0eigen}) over $\xi $, we obtain
the continuity equation 
\begin{equation*}
\lambda _{0}\int_{\mathbb{R}^{3}}q_{0}v\sqrt{\mu }d\xi +k_{0}\mathbf{i}\int_{
\mathbb{R}^{3}}q_{0}v\sqrt{\mu }d\xi =0.
\end{equation*}
Therefore, $k_{0}\mathbf{i}\int_{\mathbb{R}^{3}}\bar{q}_{0}v\sqrt{\mu }d\xi =
\bar{\lambda}_{0}\int_{\mathbb{R}^{3}}\bar{q}_{0}v\sqrt{\mu }d\xi $ and 
\begin{equation}
\lambda _{0}\Vert g\Vert _{2}^{2}-\beta \bar{\lambda}_{0}\hat{U}
(k_{0})\left\vert \int_{\mathbb{R}^{3}}q_{0}\sqrt{\mu }d\xi \right\vert
^{2}+\alpha _{0}\langle Lg,\bar{g}\rangle =0.  \label{real}
\end{equation}
Since $\hat{U}(k_{0})$ is real and $\lambda _{0}$ is purely imaginary,
taking the real part of (\ref{real}) we conclude that $\alpha _{0}\langle
Lq_{0},\bar{q}_{0}\rangle =0$. Therefore, $q_{0}$ is a linear combination of
the collision invariants $\sqrt{\mu },\xi \sqrt{\mu }$ and $|\xi ^{2}|^{2}
\sqrt{\mu }$~and $\alpha _{0}Lq_{0}$ vanishes in (\ref{0eigen}). Now (\ref
{0eigen}) reduces to a pure Vlasov equation, and we deduce that 
\begin{equation*}
q_{0}(\xi )=\frac{\beta \mathbf{i}k_{0}\hat{U}(k_{0})\{\int q_{0}(\xi )\sqrt{
\mu }d\xi \}v\sqrt{\mu }}{\lambda _{0}+\mathbf{i}vjk_{0}}.
\end{equation*}
Since $\int q_{0}(\xi )\sqrt{\mu }d\xi \neq 0$, this is compatible with the
condition that $q_{0}$ is a combination of collision invariants if and only
if $\lambda _{0}=0$. Thus $q_{0}(\xi )=\beta \hat{U}(k_{0})\sqrt{\mu }$ and
hence 
\begin{equation}
\beta \hat{U}(k_{0})=1  \label{equal1}
\end{equation}
which is a contradiction to (\ref{k0}).

Hence we denote $(q(v,\zeta )e^{\mathbf{i}kx},\lambda )$ a pair of
eigenfucntion and eigenvalue for the operator $\mathcal{T}^{1}$. Note that 
$\hat{U}(-k_{0})=\hat{U}(k_{0})$, and $\{Lq\}(-v,\zeta )=L\{q(-v,\zeta )\}$
from the rotation invariance of the collision operator. Letting $
v\rightarrow -v$ in $\mathcal{T}$, we get 
\begin{equation*}
\{\lambda -\mathbf{i}vk_{0}\}q(-v,\zeta )+\beta k_{0}\mathbf{i}\hat{U}
(-k_{0})\left\{ \int q(-v,\zeta )\sqrt{\mu }d\xi \right\} v\sqrt{\mu }
=L\{q(-v,\zeta )\}.
\end{equation*}

This implies that $q(-v,\zeta )e^{-\mathbf{i}kx}$ would satisfy the same (
\ref{eigen}) with the same $\lambda $. We therefore get 
\begin{eqnarray*}
&&\tilde{g}(x,\xi ) =q(v,\zeta )e^{\mathbf{i}k_{0}x}+q(-v,\zeta )e^{-\mathbf{
i}k_{0}x} \\
&=&\{\Re q(v,\zeta )+\Re q(-v,\zeta )\}\cos k_{0}x-\{\Im
q(v,\zeta )-\Im q(-v,\zeta )\}\sin k_{0}x \\
&+&\mathbf{i}\{\Re q(v,\zeta )-\Re q(-v,\zeta )\}\sin
k_{0}x+\{q(v,\zeta )+\Im q(-v,\zeta )\}\cos k_{0}x.
\end{eqnarray*}
is also an (non-zero) eigenfunction since $\int q(v,\zeta )d\xi =1$.
Clearly, $\tilde{{g}}(x,v,\zeta )=\tilde{g}(-x,-v,\zeta )$. Hence, for the
corresponding period $P=\frac{2\pi }{k_{0}}$, such that $\tilde{{g}}=(\tilde{
g}(x,\xi ),\tilde{g}(x,\xi ))$ is an eigenvector for $-\mathcal{L}$. Our
lemma thus follows.
\end{proof}

\section{Nonlinear Instability.}

In order to establish the non linear instability, we need several lemmas on
the properties of the fastest linear growing mode. First of all we need to
establish the smoothness and long time behavior for the growing mode. Recall the
definition of the operator $\mathcal{L}$ (\ref{linearvb}). We define 
\begin{equation*}
\mathcal{M}=\{{g}=[g_{1},g_{2}]\in L^{2}\,|\,g_{1}(x,v,\zeta
)=g_{2}(-x,-v,\zeta )\}
\end{equation*}
and $\|\,\cdot\,\|_{L^2{\mathcal{M}}}$ will denote the $L^2$ norm on this set. We have the following lemmas:

\begin{lemma}
\label{vidav}Let $\beta >1$. Then for $k_{0}$ sufficiently small, for all $
\delta >0$, the spectrum of $-\mathcal{L}$ in $\{\Re \lambda >\delta \}$
consists of a finite number of eigenvalues of finite multiplicity. If $
\lambda _{1}$ denotes an eigenvalue with maximal real part, and $\Lambda
>\max \{0,\Re \lambda _{1}\}$, then there exists $C_{\Lambda }>0$ such
that, for any $g_0\in \mathcal{M}$, 
\begin{equation*}
\|e^{-t\mathcal{L}}g_{0}\|_{L^{2}(\mathcal{M})}\leq C_{\Lambda }e^{\Lambda
t}\|g_{0}\|_{L^{2}(\mathcal{M})}.
\end{equation*}
\end{lemma}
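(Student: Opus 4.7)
The plan is to apply a classical Vidav-type argument: decompose $-\cl$ into a dissipative free-streaming part plus a relatively compact perturbation, and then use a Dyson--Phillips series to transfer compactness from iterated remainders to spectral properties of the semigroup. Specifically, write $\cl = \cl_0 + \ck$, where $(\cl_0 g)_i = v\pt_x g_i + \nu(\xi) g_i$ is the free-streaming plus absorption part and $\ck$ collects the Vlasov term $-\beta F(\sqrt{\mu}g_{i+1})v\sqrt{\mu}$ together with the gain part $K$ of the linearized collision operator (so that $L_i g = -\nu g_i + K g$). The free semigroup is explicit, $(e^{-t\cl_0}g)(x,\xi) = e^{-\nu(\xi)t}g(x-vt,\xi)$, and satisfies $\|e^{-t\cl_0}\|_{L^{2}(\cm)} \leq e^{-\nu_0 t}$. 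Since $\ck$ is bounded on $L^2(\cm)$ by standard Grad-type estimates on $K$ and the boundedness and smoothness of $U$, $-\cl$ generates a $C_0$-semigroup on $L^2(\cm)$; the symmetry subspace $\cm$ is invariant under both $\cl_0$ and $\ck$, so we may work throughout on $L^2(\cm)$.

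Next, iterate the Duhamel formula
\begin{equation*}
e^{-t\cl} = \sum_{n=0}^{N-1} W_n(t) + R_N(t), \qquad W_0(t) = e^{-t\cl_0},\quad W_{n+1}(t) = -\int_0^t e^{-(t-s)\cl_0}\,\ck\, W_n(s)\,ds,
\end{equation*}
and show that for $N$ large enough each $W_n(t)$, and hence $R_N(t)$, is compact on $L^2(\cm)$ for every $t>0$. Each factor of $\ck$ contributes regularization: the Vlasov piece is a convolution against the smooth, integrable kernel $\pt_x U$ projected onto $v\sqrt{\mu}$ in velocity, so it is rank-one in $v$ and Hilbert--Schmidt in $x$ against a Schwartz factor; the Grad kernel $k(\xi,\xi')$ underlying $K$ is compactly smoothing in velocity. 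Interposing the free streaming $e^{-t\cl_0}$ converts $v$-smoothing into $x$-smoothing via the characteristic change of variables $y = x-vt$. After finitely many iterations the kernel of $W_n(t)$ becomes Hilbert--Schmidt on $L^2(\R_x \times \R^3_\xi)^2$, yielding compactness.

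The compactness of the tail gives, via the stability of the essential growth bound under compact perturbations of semigroup generators, that the essential spectrum of $-\cl$ is contained in $\{\Re\lambda \leq -\nu_0\}$; hence in any half-plane $\{\Re\lambda > \delta\}$ with $\delta > -\nu_0$ the spectrum of $-\cl$ consists of finitely many eigenvalues of finite algebraic multiplicity, in particular for every $\delta > 0$. Fix $\Lambda > \max\{0,\Re\lambda_1\}$ and choose $\varepsilon>0$ so small that no eigenvalue lies in $\{\Re\lambda \geq \Lambda\}$. Let $P$ be the Riesz spectral projection onto the finite-dimensional sum of generalized eigenspaces for eigenvalues with $\Re\lambda \in [\Lambda-\varepsilon,\Re\lambda_1]$; on $\mathrm{Range}\,P$ the semigroup is a matrix exponential of norm bounded by $C e^{(\Re\lambda_1)t}\leq Ce^{\Lambda t}$, while on $\mathrm{Range}(I-P)$ the finite-eigenvalue structure combined with the essential growth bound yields $\|e^{-t\cl}(I-P)\|_{L^{2}(\cm)} \leq C_\Lambda e^{(\Lambda-\varepsilon/2)t}$. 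Summing the two contributions gives the stated estimate.

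The hard part will be establishing the compactness of $W_n(t)$ on the unbounded spatial domain: one cannot invoke Rellich's theorem directly, so the argument must exploit the integrability and decay of $\pt_x U$ together with the Hilbert--Schmidt structure of the Grad kernel $k(\xi,\xi')$ and the $v$-to-$x$ regularization provided by the free streaming. The restriction to $\cm$ plays no essential role in the spectral analysis, since $\cm$ is a closed invariant subspace; it is used only to ensure that the growing eigenvector constructed in Theorem \ref{lininst} lies in the symmetry class needed for the nonlinear instability argument.
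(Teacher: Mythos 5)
Your overall strategy is sound — split $-\mathcal{L}$ into a dissipative semigroup generator plus a relatively compact perturbation, and use a Vidav-type result to conclude. But you take a genuinely different, and considerably heavier, decomposition than the paper does. You set $\mathcal{L}_0 = v\partial_x + \nu$ and lump the Grad gain operator $K$ together with the Vlasov term into $\mathcal{K}$, which forces you to iterate the Duhamel/Dyson--Phillips series and establish compactness of a high iterate $W_n(t)$ (a velocity-averaging type argument). The paper instead takes $\mathcal{A} = v\partial_x - L$, i.e.\ it keeps the \emph{entire} linearized Boltzmann operator $L$ inside the dissipative part, which is contractive on $L^2$ because transport is skew-adjoint and $-L\ge 0$; the compact perturbation $\mathcal{K}$ is then \emph{only} the Vlasov force $\beta F(\sqrt{\mu}\,\cdot)v\sqrt{\mu}$. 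This $\mathcal{K}$ is manifestly compact on the periodic domain: it is the composition of a rank-one projection in velocity (onto the span of $v\sqrt{\mu}$) with convolution against the smooth finite-range kernel $\partial_x U$ on the circle, hence Hilbert--Schmidt. With a compact, bounded $\mathcal{K}$ a single application of Vidav's lemma \cite{V} gives the statement, with no iteration needed. So the paper's route is both shorter and avoids entirely the ``hard part'' you flag at the end.

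There is, however, a real gap in your proposal, independent of the choice of decomposition: you treat the spatial domain as the whole line $\mathbb{R}$ and explicitly worry that Rellich fails. But Lemma~\ref{vidav} is stated ``for $k_0$ sufficiently small'' precisely because the operator $-\mathcal{L}$ in this section acts on $\tfrac{2\pi}{k_0}$-periodic functions: the growing mode is constructed as $e^{\lambda t}e^{\mathbf{i}k_0 x}q(\xi)$ in Theorem~\ref{lininst}, the instability Theorem~\ref{ninst} concerns $\tfrac{2\pi}{k_0}$-periodic data, and $\mathcal{M}$ should be read as a subspace of $L^2$ on the spatial torus. On the torus the compactness of convolution by $\partial_x U$ (and of any $W_n(t)$ after a velocity-averaging step) is unproblematic; on $\mathbb{R}$ it is in fact false — convolution by an $L^1$ kernel is not compact on $L^2(\mathbb{R})$, since the translation group commutes with it — so the argument you sketch would not close in the setting you describe. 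The fix is simply to observe that you are on a bounded spatial domain, after which either your iterated-Duhamel route or the paper's one-step version goes through. Your closing remark that the restriction to $\mathcal{M}$ is just a choice of invariant closed subspace is correct.
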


\begin{proof}
This follows easily from the Vidav's Lemma \cite{V}. Notice that we can
split 
\begin{eqnarray*}
\mathcal{L}{g} &=&\mathcal{\{}v\partial _{x}{g}+L{g}\}+\{\beta F(\sqrt{\mu }{
g})v\sqrt{\mu }\} \\
&\equiv &\mathcal{A}{g}+\mathcal{K}{g}.
\end{eqnarray*}
where $\mathcal{K}$ is a compact operator from $L^{2}$ to $L^{2}$, while $
\|e^{-t\mathcal{A}}\|_{L^{2}(\mathcal{M})\rightarrow L^{2}(\mathcal{M})}\leq
1$.
\end{proof}

\begin{lemma}
\label{regularity}Let ${R}=(R_{1},R_{2})\in L^{2}(\mathcal{M})$ with $\|{R}
\|_{L^{2}(\mathcal{M})}=1$ be an eigenvector of $-\mathcal{L}$ with $\Re\lambda >0$. Then there exists a constant $C$ depending only on $\lambda $
such that 
\begin{eqnarray}
\|\nabla _{x,v}{R}\|_{L^{2}(\mathcal{M})} &\leq &C.  \label{hibound} \\
\sup_{x,v}w(v)|{R}(x,v)| &\leq &C,\text{ }  \label{rbound}
\end{eqnarray}
where $w$ is a polynomial weight as in previous section.
\end{lemma}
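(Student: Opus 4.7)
The proof proceeds via Fourier decomposition in $x$. Because the background $M_{\text{hom}}=(\mu,\mu)$ is spatially homogeneous, the operator $\mathcal{L}$ commutes with $x$-translations, and Theorem \ref{lininst} places us on a period cell of length $P=2\pi/k_0$; thus write $R(x,\xi)=\sum_{k\in\mathbb{Z}} e^{2\pi ikx/P}q_k(\xi)$. Translation invariance decouples the Fourier modes, and by Lemma \ref{vidav} the eigenspace of $-\mathcal{L}$ at $\lambda$ has finite dimension, so only finitely many $q_k$ are nonzero and each satisfies $\|q_k\|_{L^2_\xi}\leq 1$. Each such $q_k$ is then an eigenfunction in velocity space of the operator derived from $\mathcal{L}$ by the substitution $\partial_x\mapsto 2\pi ik/P$:
\[
\bigl(\lambda + 2\pi ikv/P + \nu(\xi)\bigr)q_{k,i}(\xi)=(Kq_k)_i(\xi)+c_{k,i}\,v\sqrt{\mu(\xi)},
\]
where the scalar $c_{k,i}=\beta \widehat{U}(2\pi k/P)(2\pi ik/P)\langle q_{k,i+1},\sqrt{\mu}\rangle$ satisfies $|c_{k,i}|\leq C$.

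For the weighted $L^\infty$ bound (\ref{rbound}), invert the denominator, which has modulus $\geq \nu_0>0$. The Vlasov contribution obeys $\sup_\xi w(\xi)|v|\sqrt{\mu(\xi)}<\infty$ since Gaussian decay dominates the polynomial weight. For the collision contribution, Cauchy--Schwarz gives
\[
w(\xi)\,|(Kq_k)(\xi)|\leq w(\xi)\,\|k(\xi,\cdot)\|_{L^2_{\xi'}}\,\|q_k\|_{L^2},
\]
and the Grad bound on the hard-sphere kernel reproduced in Lemma \ref{lemma7} implies $\sup_\xi w(\xi)\|k(\xi,\cdot)\|_{L^2_{\xi'}}\leq C$. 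Summing the bounds over the finitely many nonzero Fourier modes yields $\|wR\|_{L^\infty}\leq C$, which is (\ref{rbound}).

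For the $H^1$ bound (\ref{hibound}), $\partial_x R=\sum_k (2\pi ik/P)e^{2\pi ikx/P}q_k$ is manifestly in $L^2$ since it is a finite sum of modes each with $\|q_k\|_{L^2}\leq 1$. For $\partial_v R$, differentiate the integral representation for $q_k$:
\[
\partial_v q_{k,i}=\frac{1}{\lambda+2\pi ikv/P+\nu}\Bigl[\partial_v(Kq_k)_i-(2\pi ik/P+\partial_v\nu)q_{k,i}+c_{k,i}\sqrt{\mu}+c_{k,i}v\partial_v\sqrt{\mu}\Bigr].
\]
For hard spheres $\partial_v\nu$ is bounded, the Maxwellian terms $\sqrt{\mu}$ and $v\partial_v\sqrt{\mu}$ decay exponentially, and the key control $\|\partial_v(Kq_k)\|_{L^2}\leq C\|q_k\|_\nu$ is supplied by the Grad change-of-variables argument (Lemma 2.2 of \cite{G1}), which is the same input invoked in (\ref{vderi}) for the stability theorem. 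Combining these estimates gives $\|\partial_v q_k\|_{L^2}\leq C$ and hence (\ref{hibound}) after summing.

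The main obstacle is the control of $\partial_v K q$ in $L^2$, which requires the delicate change of variables on the Grad kernel; once available, the Fourier decomposition reduces the whole proof to finitely many uncoupled velocity-space problems, so that the pointwise bound and the derivative bound both close without any further dynamical information.
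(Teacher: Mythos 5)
Your Fourier decomposition approach is genuinely different from the paper's. The paper avoids Fourier analysis entirely: it derives the integral representation $R = -\int_0^\infty e^{-\lambda t}e^{-t\mathcal{A}}\mathcal{K}R\,dt$ from Duhamel and reduces both bounds to previously established mapping properties of the pure Boltzmann semigroup $e^{-t\mathcal{A}}$ (resp.\ $e^{-t\mathcal{A}_w}$). Your $H^1$ part is essentially sound. The reduction to finitely many Fourier modes via the finite multiplicity in Lemma \ref{vidav} and the translation-invariance of $\mathcal{L}$ on the full $L^2$ space is legitimate (even though $\mathcal{M}$ itself is not shift-invariant, you can decompose $R$ on $L^2$ and the mode equations hold mode by mode). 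The $\partial_x$ bound is then immediate, and for $\partial_v$ your formula combined with the $\partial_v K$ control from \cite{G1} closes once one notes $\|\nu q_k\|_{L^2}\leq C$ follows from the resolvent identity $\nu q_k=\frac{\nu}{\lambda+\mathbf{i}kv+\nu}(c_kv\sqrt{\mu}+Kq_k)$.

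However, the $L^\infty$ step contains a genuine gap. The claim $\sup_\xi w(\xi)\|k(\xi,\cdot)\|_{L^2_{\xi'}}\leq C$ is \emph{false} for the hard-sphere Grad kernel with the weight $w(\xi)=(\Sigma+|\xi|^2)^\gamma$, $\gamma>3/2$, and it is not what Lemma \ref{lemma7} provides. Lemma \ref{lemma7} gives the weighted $L^1$ estimate $\int \frac{w(\xi)}{w(\xi')}|k(\xi,\xi')|\,d\xi'\leq \frac{C}{1+|\xi|}$, which crucially has the factor $w(\xi')$ in the \emph{denominator}, to be placed on $q_k$; there is no corresponding unweighted $L^2$ bound with the desired decay. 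Quantitatively, the Grad kernel has $\|k(\xi,\cdot)\|_{L^2_{\xi'}}\sim (1+|\xi|)^{-1/2}$ as $|\xi|\to\infty$ (the singular part $|\xi-\xi'|^{-1}$ and the cross-exponential $e^{-\frac{1}{8}(|\xi|^2-|\xi'|^2)^2/|\xi-\xi'|^2}$ together localize $\xi'$ to a slab of width $\sim 1/|\xi|$ around the sphere $|\xi'|=|\xi|$), so $w(\xi)\|k(\xi,\cdot)\|_{L^2}\sim |\xi|^{2\gamma-1/2}\to\infty$. Consequently a single Cauchy--Schwarz on $K$ cannot close the $L^\infty$ estimate. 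To repair this you would need either to split into $|\xi|\leq N$ (where the plain $L^2$ Cauchy--Schwarz suffices and the weight is harmless) and $|\xi|>N$ (where you apply Lemma \ref{lemma7} to get a small prefactor $\frac{C}{\nu_0 N^2}$ multiplying $\|wq_k\|_{L^\infty}$, then absorb), or, as the paper does, feed $\mathcal{K}_w S$ into the $L^2\cap L^\infty\to L^\infty$ bound for the weighted Boltzmann semigroup established by the two-step Duhamel iteration of Section 3. Either way, the resolvent inversion alone with one application of Cauchy--Schwarz is insufficient.
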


\begin{proof}
We begin with ${R}\in L^{2}$. We first claim 
\begin{equation}
{R}=-\int_{0}^{\infty }e^{-\lambda t}e^{-t\mathcal{A}}\mathcal{K}{R}dt.
\label{rintegral}
\end{equation}
Notice that the corresponding growing mode ${g}(t)=e^{\lambda t}{R}$
satisfies 
\begin{equation*}
\partial _{t}{g}+\mathcal{A}{g}\mathcal{=-K}{g}
\end{equation*}
so that $e^{\lambda t}{R}=e^{-(t-s)\mathcal{A}}e^{\lambda s}{R}
-\int_{s}^{t}e^{-(t-\tau )\mathcal{A}}\mathcal{K}{R}d\tau $. Letting $
s\rightarrow -\infty $, since $\|e^{-t\mathcal{A}}\|_{L^{2}\rightarrow
L^{2}}\leq 1$ for any $t>0$ and $\Re \lambda >0$, we get 
\begin{equation*}
e^{\lambda t}{R}=-\int_{-\infty }^{t}e^{-(t-\tau )\mathcal{A}}\mathcal{K}{R}
\text{ }e^{\lambda \tau }d\tau =-\int_{0}^{\infty }e^{-\tau \mathcal{A}}
\mathcal{K}{R}\text{ }e^{\lambda (t-\tau )}d\tau .
\end{equation*}
Dividing by $e^{\lambda t}$ we prove our claim because $\Re \lambda >0$
and the integral converges in $L^{2}$.

From the property of linear Boltzmann equation, clearly \\ $\Vert \partial _{x}\{e^{-t
\mathcal{A}}g\}\Vert _{L^{2}(\mathcal{M})} \leq \Vert \partial _{x}g\Vert _{L^{2}(\mathcal{M})}$.
Taking $v$ derivative of $(\partial _{t}+v\partial _{x}+L)g=0$ yields:
\begin{equation*}
\{\partial _{t}+v\partial _{x}\}\{\partial _{v}g\}+\partial
_{v}\{-Lg\}=-\partial _{x}g.
\end{equation*}
From  \cite{G1}, $\int \partial _{v}\{-Lg\}\partial _{v}g\geq \int 
\frac{\nu }{2}|\partial _{v}g|^{2}-C_{\nu }\Vert g\Vert _{L^{2}}^{2}$. We
thus obtain by taking $L^{2}$ inner product with $\partial _{v}g$, 
\begin{equation}
||\partial _{v}g||_{L^{2}}=\Vert \partial _{v}\{e^{-t\mathcal{A}
}g_{0}\}\Vert _{L^{2}}\leq C(t+1)\{\Vert g_{0}\Vert _{L^{2}}+\Vert \nabla
_{x,v}g_{0}\Vert _{L^{2}}\}.  \label{boltzmannh1}
\end{equation}
Since $\mathcal{K}{R}\in C^{\infty }$ and $\Vert \partial _{x}\{\mathcal{K}{
R\}}\Vert _{L^{2}}+\Vert \partial _{v}\{\mathcal{K}{R\}}\Vert _{L^{2}}\leq
C\Vert {R}\Vert _{L^{2}}$, we can take $\partial _{x}$ and $\partial _{v}$
derivatives in (\ref{rintegral}) to get 
\begin{eqnarray*}
\Vert \partial _{x}{R}\Vert _{L^{2}}+\Vert \partial _{v}{R}\Vert _{L^{2}}
&\leq &\int_{0}^{\infty }e^{-\Re \lambda t}\{\Vert \partial _{x}\{e^{-t
\mathcal{A}}\mathcal{K}{R\}}\Vert _{L^{2}}+\Vert \partial _{v}\{e^{-t
\mathcal{A}}\mathcal{K}{R\}}\Vert _{L^{2}}\}dt \\
&\leq &C\int_{0}^{\infty }e^{-\Re \lambda t}(t+1)\Vert \partial _{x}\{
\mathcal{K}{R\}}\Vert _{L^{2}}+\Vert \partial _{v}\{\mathcal{K}{R\}}\Vert
_{L^{2}}dt \\
&\leq &C\Vert {R}\Vert _{L^{2}}\int_{0}^{\infty }e^{-\Re \lambda
t}(t+1)dt\leq C\Vert {\ R}\Vert _{L^{2}}.
\end{eqnarray*}

We therefore deduce (\ref{hibound}). To show (\ref{rbound}), we denote ${S}
=wR$. We then have 
\begin{equation*}
\lambda {S}=\{v\partial _{x}{S}+wL(\frac{{S}}{w})\}+\{\beta F(\frac{\sqrt{
\mu }{S}}{w})vw\sqrt{\mu }\}\equiv \mathcal{A}_{w}{S}+\mathcal{K}_{w}{S}.
\end{equation*}
Applying the same proof in Section 3 for the stability for the pure linear
Boltzmann operator, we can establish: 
\begin{equation*}
\Vert e^{-t\mathcal{A}_{w}}{g}_{0}\Vert _{L^{\infty }}\leq C\{\Vert {g}
_{0}\Vert _{L^{2}}+\Vert {g}_{0}\Vert _{L^{\infty }}\}.
\end{equation*}
We can similarly obtain ${S}=-\int_{0}^{\infty }e^{-\lambda t}e^{-t\mathcal{A
}_{w}}\mathcal{K}_{w}{S}dt$, so that from $\Re \lambda >0$, 
\begin{eqnarray*}
\Vert {S}\Vert _{L^{\infty }} &\leq &\int_{0}^{\infty }e^{-\Re \lambda
t}\Vert e^{-t\mathcal{A}_{w}}\mathcal{K}_{w}{S}\Vert _{L^{\infty }}dt \\
&\leq &C\int_{0}^{\infty }e^{-\Re \lambda t}\{\Vert \mathcal{K}_{w}{S}
\Vert _{L^{\infty }}+\Vert \mathcal{K}_{w}{S}\Vert _{L^{2}}\}dt \\
&\leq &C.
\end{eqnarray*}
\end{proof}

\begin{lemma}
Let ${R}$ be an eigenvector of $-\mathcal{L}$ with its eigenvalue $\lambda$
with $\Re \lambda >0$. If $\lambda $ is not real, then there is a
constant $\zeta >0$ such that for all $t>0$, 
\begin{equation*}
\|e^{-\mathcal{L}t}\Im  {R}\|_{L^{2}}\geq \zeta e^{\Re \lambda
t}\| \Im  {R}\|_{2}>0.
\end{equation*}
\end{lemma}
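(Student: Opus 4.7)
The strategy is to use the reality of the coefficients of $\mathcal{L}$ to split the eigenmode into its real and imaginary parts and then reduce the claim to a uniform lower bound on a trigonometric quadratic form.

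First, I would observe that $\mathcal{L}$ commutes with complex conjugation: the transport term $v\partial_x$, the Vlasov term $\beta F(\sqrt\mu\,\cdot)v\sqrt\mu$, and the linearized Boltzmann operator $L$ are all built from real integral kernels. Hence $-\mathcal{L}\bar R=\bar\lambda\bar R$, so $\bar R$ is an eigenvector with eigenvalue $\bar\lambda$. Writing $R=R_r+iR_i$ with $R_r=\Re R$ and $R_i=\Im R$, this gives
\begin{equation*}
e^{-\mathcal{L}t}R=e^{\lambda t}R,\qquad e^{-\mathcal{L}t}\bar R=e^{\bar\lambda t}\bar R,
\end{equation*}
and therefore, by linearity and the identity $R_i=(R-\bar R)/(2i)$,
\begin{equation*}
e^{-\mathcal{L}t}R_i=\Im\!\bigl(e^{\lambda t}R\bigr)=e^{\Re\lambda\, t}\bigl[\cos(\Im\lambda\, t)R_i+\sin(\Im\lambda\, t)R_r\bigr].
\end{equation*}
Note that since $R\in\mathcal{M}$ and $\mathcal{M}$ is defined by a real linear relation, both $R_r$ and $R_i$ lie in $\mathcal{M}$, so the $L^2(\mathcal{M})$ norm is the relevant one.

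Setting $\theta=\Im\lambda\,t$, $a=\|R_r\|_{L^2}^2$, $b=\|R_i\|_{L^2}^2$ and $c=\langle R_r,R_i\rangle_{L^2}$, I would compute
\begin{equation*}
\|e^{-\mathcal{L}t}R_i\|_{L^2}^2=e^{2\Re\lambda\, t}\,Q(\theta),\qquad Q(\theta):=a\sin^2\theta+b\cos^2\theta+2c\sin\theta\cos\theta,
\end{equation*}
and rewrite
\begin{equation*}
Q(\theta)=\tfrac{a+b}{2}+\tfrac{b-a}{2}\cos 2\theta+c\sin 2\theta,
\end{equation*}
so that $\min_\theta Q(\theta)=\tfrac{a+b}{2}-\sqrt{(\tfrac{b-a}{2})^2+c^2}$. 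A direct algebraic manipulation shows
\begin{equation*}
\Bigl(\tfrac{a+b}{2}\Bigr)^2-\Bigl[\bigl(\tfrac{b-a}{2}\bigr)^2+c^2\Bigr]=ab-c^2,
\end{equation*}
so the sought uniform lower bound $\min_\theta Q(\theta)>0$ is equivalent to the \emph{strict} Cauchy--Schwarz inequality $ab>c^2$, i.e.\ linear independence of $R_r$ and $R_i$ in $L^2$.

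The main (and essentially only) obstacle is to establish this linear independence. I would argue by contradiction: if $R_i=\kappa R_r$ for some real $\kappa$ (the case $R_r\equiv0$ being handled analogously with $R=iR_i$), then $R=(1+i\kappa)R_r$ with $R_r$ real. Plugging into $-\mathcal{L}R=\lambda R$ and using that $\mathcal{L}R_r$ is real, the real and imaginary parts of the equation force $\lambda$ to be real, contradicting the hypothesis $\lambda\notin\mathbb R$. Hence $R_r,R_i$ are linearly independent, $ab-c^2>0$, and we may take
\begin{equation*}
\zeta:=\sqrt{\tfrac{1}{b}\Bigl(\tfrac{a+b}{2}-\sqrt{\bigl(\tfrac{b-a}{2}\bigr)^2+c^2}\Bigr)}>0,
\end{equation*}
which gives the stated estimate $\|e^{-\mathcal{L}t}R_i\|_{L^2}\geq\zeta\, e^{\Re\lambda\, t}\|R_i\|_{L^2}$ for all $t>0$, together with $\|R_i\|_{L^2}>0$ (again from the non-reality of $\lambda$).
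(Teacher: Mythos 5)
Your proof is correct and takes essentially the same approach as the paper: both rewrite $e^{-\mathcal{L}t}\Im R$ as $e^{\Re\lambda\,t}\bigl(\sin(\Im\lambda\,t)\Re R+\cos(\Im\lambda\,t)\Im R\bigr)$ and reduce the uniform lower bound to the linear independence of $\Re R$ and $\Im R$, which is forced by $\lambda\notin\mathbb{R}$. The only (cosmetic) difference is in the final step: you minimize the trigonometric quadratic form explicitly and obtain a closed-form $\zeta$, whereas the paper argues by contradiction, extracting a convergent subsequence of $\bigl(\sin(\Im\lambda\,t_n),\cos(\Im\lambda\,t_n)\bigr)$ to produce a vanishing nontrivial real combination $a\Im R+b\Re R=0$.
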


\begin{proof}
We prove by contradiction. Notice that, since \\$\Im F(\sqrt{\mu}\, {R}
)=F(\sqrt{\mu}\Im  {R})$, one can immediately check that 
\begin{equation*}
e^{-\mathcal{L}t}\Im  {R}=\Im \{e^{-\mathcal{L}t} {R}\}=e^{\Re \lambda t}(\sin [\Im \lambda t]\Re  {R}+\cos [\Im 
\lambda t] \Im  {R}).
\end{equation*}
If the Lemma were false, by passing through a convergent subsequence of \\ $\sin [\Im \lambda t_{n}]$ and $\cos [\Im \lambda t_{n}]$, with $
n\rightarrow \infty $ we would have $a\Im  {R}+b\Re  {R}=0$, with $
a^{2}+b^{2}=1$. Therefore either $\Im R$ or $\Re R$ would be a
real eigenvector and $\lambda $ would be real, a contradiction.
\end{proof}

\begin{lemma}
\label{cutoff} Let ${R}$ be as in the preceeding lemma. There exists $\delta
_{0}>0$, such that for $0<\delta <\delta _{0}$, there exists a (compactly
supported) approximate eigenfunction ${R}_{\delta }$ such that 
\begin{eqnarray*}
\delta |{R}_{\delta }(x,v)|\sqrt{\mu } &\leq &\mu , \\
\Vert {R}-{R}_{\delta }\Vert _{L^{2}} &\leq &\sqrt{\delta }, \\
\Vert \partial _{x}{R}_{\delta }\Vert _{L^{2}}+\Vert \partial _{v}{R}
_{\delta }\Vert _{L^{2}} &\leq &C\{\Vert \partial _{x}{R}\Vert
_{L^{2}}+\Vert \partial _{v}{R}\Vert _{L^{2}}\}.
\end{eqnarray*}
\end{lemma}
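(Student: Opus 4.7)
I would choose $R_\delta$ as a smooth velocity truncation of $R$ at a slowly growing radius $N_\delta\sim\sqrt{\log(1/\delta)}$. Fix a nonnegative $\chi\in C^\infty(\mathbb{R}^3)$ with $\chi\equiv 1$ on $\{|v|\le 1\}$ and $\chi\equiv 0$ on $\{|v|\ge 2\}$, and set
\begin{equation*}
R_\delta(x,v)=R(x,v)\,\chi(v/N_\delta),
\end{equation*}
with $N_\delta$ adjusted so that $e^{-cN_\delta^2}\le\delta$ for an appropriate constant $c>0$ (so $N_\delta^2\sim c^{-1}\log(1/\delta)$).

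The essential preliminary step is to sharpen Lemma \ref{regularity} from the polynomial weight $w$ to a Maxwellian-type pointwise decay
\begin{equation*}
|R(x,v)|\le C\,e^{-c|v|^2}
\end{equation*}
for some $c\in((\beta-1)/4,\,\beta/4)$, a range which is nonempty in the regime $\beta>1$ we consider. The idea is to bootstrap the integral representation (\ref{rintegral}) in the weighted $L^\infty$ space with weight $e^{c|v|^2}$: the source $\mathcal{K}R=\beta F(\sqrt{\mu}R_{i+1})\,v\sqrt{\mu}$ carries an explicit $\sqrt{\mu}=c_\beta\,e^{-\beta|v|^2/4}$ factor, and the free-plus-collisional linear Boltzmann semigroup $e^{-t\mathcal{A}}$ preserves weighted $L^\infty$ bounds with weight $e^{c|v|^2}$ for all $c<\beta/4$ (a standard Grad-type estimate). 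Integrating against $e^{-\Re\lambda t}$ and using $\Re\lambda>0$ then yields the pointwise exponential bound.

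Granted this decay, the three properties follow. For (1), on $\mathrm{supp}\,R_\delta\subset\{|v|\le 2N_\delta\}$ one has $\delta|R|\sqrt{\mu}\le \delta C\,e^{-c|v|^2}\sqrt{\mu}$, and comparing exponents this is $\le\mu$ provided $\delta\,e^{(4c-\beta)N_\delta^2}\le C^{-1}$; with $c>(\beta-1)/4$ and $N_\delta^2\sim c^{-1}\log(1/\delta)$ this holds for $\delta$ small. For (2), $R-R_\delta$ is supported in $\{|v|\ge N_\delta\}$, hence
\begin{equation*}
\|R-R_\delta\|_{L^2}^2\le P\int_{|v|\ge N_\delta}C^2e^{-2c|v|^2}\,dv\lesssim e^{-cN_\delta^2}\le\delta,
\end{equation*}
where $P$ is the $x$-period. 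For (3), $\partial_xR_\delta=(\partial_xR)\chi(v/N_\delta)$ gives $\|\partial_xR_\delta\|_{L^2}\le\|\partial_xR\|_{L^2}$, and $\partial_vR_\delta=(\partial_vR)\chi(v/N_\delta)+N_\delta^{-1}R\,\nabla\chi(v/N_\delta)$; the commutator term has $L^2$-norm $O(\sqrt{\delta}/N_\delta)$ by the tail estimate just used, and is absorbed into a multiplicative constant times $\|\partial_vR\|_{L^2}$ (which, for a nonzero eigenfunction normalized as in Lemma \ref{regularity}, is a fixed positive quantity).

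The main obstacle is the preliminary exponential decay of $R$. The polynomial weight from Lemma \ref{regularity} yields only $(\log(1/\delta))^{-(2\gamma-3/2)}$ tails in (2), far short of $\sqrt{\delta}$. The delicate point is the fine-tuning between the decay rate $c$ and the growth of $N_\delta$: one needs $c$ large enough ($c>(\beta-1)/4$) for (1) and the weight $e^{c|v|^2}$ small enough ($c<\beta/4$) for the semigroup to remain bounded, and this window is precisely what the explicit $\sqrt{\mu}$ factor in $\mathcal{K}R$ combined with the Maxwellian invariance of the transport–collision semigroup makes available.
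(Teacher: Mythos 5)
Your approach is essentially the same as the paper's: truncate $R$ in velocity at a Gaussian scale $N\sim\sqrt{\log(1/\delta)}$, so that on the support of $R_\delta$ the Maxwellian $\sqrt{\mu}$ still dominates $\delta|R_\delta|$, while the tail outside has $L^2$-mass smaller than $\sqrt{\delta}$. What you add, correctly, is the observation that the polynomial-weight bound $\sup w(\xi)|R|\leq C$ from Lemma \ref{regularity} is not strong enough: with $w$ a polynomial and $N\sim\sqrt{\log(1/\delta)}$, the tail $\|R\mathbf{1}_{|\xi|\geq N}\|_{L^2}$ decays only logarithmically in $1/\delta$, far short of the required $\sqrt{\delta}$. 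The paper's proof does in fact invoke a Gaussian-type bound on $R$ (in the display (\ref{boundR}) and in the estimate $\|R\mathbf{1}_{|v|\geq N}\|_{L^2}\leq C\int_{|v|\geq N}\mu^{1/2}(v)/w(v)\,dv$), neither of which follows from Lemma \ref{regularity} as stated. Your proposed repair --- bootstrap the integral representation (\ref{rintegral}) in an $e^{c|\xi|^2}$-weighted $L^\infty$ space, exploiting the explicit $v\sqrt{\mu}$ factor in $\mathcal{K}R$ and the Gaussian-weight invariance of the transport--collision semigroup for $c<\beta/4$ --- is exactly the right idea and is what justifies the step the paper leaves implicit.

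One small arithmetic slip in your verification of the first property: the condition at $|v|=2N_\delta$ is
\begin{equation*}
\delta\,e^{(\beta-4c)N_\delta^2}\leq C',
\end{equation*}
not $\delta\,e^{(4c-\beta)N_\delta^2}\leq C^{-1}$ (the sign in the exponent is reversed). With $N_\delta^2=c^{-1}\log(1/\delta)$ this reads $\delta^{(5c-\beta)/c}\leq C'$, so the correct lower bound on $c$ is $c\geq\beta/5$, not $c>(\beta-1)/4$. Since $(\beta/5,\beta/4)$ is nonempty for all $\beta>0$, the strategy is unaffected, but the threshold you quoted is wrong and your verification as written would trivially pass for any $c<\beta/4$, masking the actual constraint.
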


\begin{proof}
In fact, we choose $\chi (v)$ be a smooth cutoff function $\chi (v)=1$ for $
|v|\leq N$ and $\chi (v)\equiv 0$ for $|v|\geq N+1$. By Lemma \ref
{regularity}, we have 
\begin{equation}
|\chi (v){R}(x,v)|\sqrt{\mu }\leq |{R}(x,v)|\sqrt{\mu }=|w{S}|\sqrt{\mu }
\leq \{Cw\mu ^{-1/2}\}\mu .  \label{boundR}
\end{equation}
Define $N$ by the equation $\delta =\frac{\mu ^{1/2}(N+1)}{Cw(N+1)}$ and
define 
\begin{equation*}
{R}_{\delta }=\chi (v){R}(x,v).
\end{equation*}
Clearly the third estimate in the Lemma is valid. From (\ref{boundR}) and
the definition of $N$ and $\delta $, the first inequality in the Lemma is also
valid. Since $w$ is a polynomial, we have $\sqrt{\delta }=\sqrt{\frac{\mu
^{1/2}(N+1)}{Cw(N+1)}}\geq \sqrt{\mu (N)}$ when $N$ is large. We then
conclude the lemma by 
\begin{equation*}
\|{R}-{R}_{\delta }\|_{L^{2}}=\|{R}\mathbf{1}_{|v|\geq N}\|_{L^{2}}\leq
C\int_{|v|\geq N}\frac{\mu ^{\frac{1}{2}}(v)}{w(v)}dv=C\mu ^{\frac{1}{2}
}(N)\leq \sqrt{\delta }.
\end{equation*}
\end{proof}

We now establish the crucial bootstrap Lemma which shows that $L^{2}$ growth
leads to the same growth rate for $L^{\infty }$.

\begin{lemma}
\label{boot}Let ${g}=(g_{1},g_{2})$ be a solution to the nonlinear problem
around $M_{\text{\rm hom}}$:
\begin{eqnarray}
\Big(\partial _{t}+v\partial _{x}\Big)g_{i} &+&\beta F(\sqrt{\mu }g_{i+1})
\sqrt{\mu }v-L_{i}{g}  \label{maineqpert1} \\
&=&-F(\sqrt{\mu }g_{i+1})\partial _{v}g_{i}+F(\sqrt{\mu }g_{i+1})g_{i}v+
\Gamma (g_{i},g_{i})+\Gamma (g_{i},g_{i+1}).  \notag
\end{eqnarray}
Assume that $\Re \lambda >0$ and 
\begin{equation*}
\Vert {g}(t)\Vert _{L^{2}}\leq Ce^{\Re \lambda t}\Vert {g}(0)\Vert
_{L^{2}}
\end{equation*}
for $t\in \lbrack 0,T]$. There exists $\varepsilon _{0}>0$ such that if $
\sup_{0\leq t\leq T}\{\Vert w{g}(t)\Vert _{L^{\infty }}+\Vert {g}(t)\Vert
_{L^{2}}\}\leq \varepsilon _{0}$, then there is a constant $C$ such that 
\begin{multline}
\Vert \partial _{x}{g}(t)\Vert _{L^{2}}+\Vert \partial _{v}{g}(t)\Vert
_{L^{2}}+\Vert {wg}(t)\Vert _{L^{\infty }}\\ \leq Ce^{\Re \lambda
t}\{\Vert \partial _{x}{g}(0)\Vert _{L^{2}}+\Vert \partial _{v}{g}(0)\Vert
_{L^{2}}+\Vert {h}(0)\Vert _{L^{\infty }}\}.  \label{bootstraph1}
\end{multline}
\end{lemma}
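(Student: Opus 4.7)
The plan is to derive the two estimates separately by (i) an $L^{2}$ energy estimate for $\partial_{x}g$ and $\partial_{v}g$ and (ii) a Duhamel-along-characteristics iteration for $h=wg$, in both cases using the assumed hypothesis $\|g(t)\|_{L^{2}}\leq Ce^{\Re\lambda t}\|g(0)\|_{L^{2}}$ as a ready-made source bound. A crucial structural simplification in the homogeneous case is that $F(\mu)\equiv 0$: indeed $\int U(|x-y|)\,dy=\int U(|z|)\,dz=1$ does not depend on $x$, so $F(M_{i+1}+\sqrt{\mu}g_{i+1})=F(\sqrt{\mu}g_{i+1})$ is of size $O(\|g\|_{L^{2}})$, and the ``bad'' commutator $\partial_{x}F(M_{i+1})\partial_{v}g$ that forced the $e^{Ct}$ loss in (\ref{h1bound}) of Theorem \ref{stabl} vanishes identically.

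For the $H^{1}$ part, I would take $\partial_{x}$ and $\partial_{v}$ of (\ref{maineqpert1}) and test them in $L^{2}$ against $\partial_{x}g_{i}$ and $\partial_{v}g_{i}$ as in (\ref{xderi})--(\ref{vderi}). The $\partial_{v}$-equation produces the dissipation $\tfrac{1}{4}\|\partial_{v}g\|_{\nu}^{2}$ from the $L$-operator by Lemma~2.2 of \cite{G1}. All remaining terms are either prefactor terms carrying $\|F(\sqrt{\mu}g)\|_{L^{\infty}}+\|\partial_{x}F(\sqrt{\mu}g)\|_{L^{\infty}}\leq C\|g\|_{L^{2}}\leq C\varepsilon_{0}$ times $\|\nabla_{x,v}g\|_{L^{2}}^{2}$, or inhomogeneous sources bounded by $C\|g\|_{L^{2}}^{2}\leq Ce^{2\Re\lambda t}\|g(0)\|_{L^{2}}^{2}$. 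This yields
\begin{equation*}
\frac{d}{dt}\|\nabla_{x,v}g\|_{L^{2}}^{2}\leq C\varepsilon_{0}\|\nabla_{x,v}g\|_{L^{2}}^{2}+Ce^{2\Re\lambda t}\|g(0)\|_{L^{2}}^{2}.
\end{equation*}
Choosing $\varepsilon_{0}$ so that $C\varepsilon_{0}<2\Re\lambda$, Gronwall gives the $H^{1}$ half of (\ref{bootstraph1}), using also $\|g(0)\|_{L^{2}}\leq C\|wg(0)\|_{L^{\infty}}$, valid on the $2\pi/k_{0}$-periodic strip since $\gamma>3/2$ makes $1/w\in L^{2}(\R^{3})$.

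For the weighted $L^{\infty}$ part, I would reproduce the Duhamel plus kernel-iteration scheme of Lemma \ref{bound} applied to the equation satisfied by $h=wg$, specialized to $M=M_{\text{hom}}$. The unperturbed characteristics are now straight lines $V_{i}^{0}(s)=v$, $X_{i}^{0}(s)=x+(s-t)v$, so $\partial_{v}X_{i}^{0}(s)=s-t$ never degenerates on $[0,t-\varepsilon]$; in particular Lemma \ref{determinant} is not needed here, and the change of variable of CASE 3 of Lemma \ref{bound} applies directly. The key new input is that the linear Vlasov source obeys $\|\beta F(\sqrt{\mu}g_{i+1})v\sqrt{\mu}\|_{L^{\infty}}\leq C\|g(s)\|_{L^{2}}\leq Ce^{\Re\lambda s}\|g(0)\|_{L^{2}}$ and
\begin{equation*}
\int_{0}^{t}e^{-\nu_{0}(t-s)}e^{\Re\lambda s}\,ds\leq \frac{e^{\Re\lambda t}}{\nu_{0}+\Re\lambda},
\end{equation*}
which restores exactly the rate $e^{\Re\lambda t}$ from the collision factor. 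Setting $M(t)=\sup_{0\leq s\leq t}e^{-\Re\lambda s}\|h(s)\|_{L^{\infty}}$, the nonlinear terms $\Gamma(g,g)$ and $F(\sqrt{\mu}g)vg$ contribute at most $C\varepsilon_{0}M(t)$, while the iterated double-kernel terms (treated via the CASE 1--2 split using (\ref{integral}) and Lemma \ref{lemma7}) contribute at most $\varepsilon M(t)+C_{\varepsilon}\|h(0)\|_{L^{\infty}}$. Absorbing the small-coefficient terms gives $M(t)\leq C(\|h(0)\|_{L^{\infty}}+\|g(0)\|_{L^{2}})$, which is (\ref{bootstraph1}).

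The main obstacle is that the linear growth rate $\Re\lambda$ produced by Theorem \ref{lininst} is not a priori small compared to the collision rate $\nu_{0}$, so one cannot simply absorb $e^{-\nu_{0}(t-s)}e^{\Re\lambda s}$ into $e^{\Re\lambda t}$ in the decaying fashion of Lemma \ref{bound}. The saving feature is that every loss multiplicative factor comes with a smallness $\varepsilon_{0}$ inherited from the solution itself, rather than a factor of order one, so the absorption thresholds $C\varepsilon_{0}<2\Re\lambda$ and $C\varepsilon_{0}<1$ are uniform in $\lambda$. The form of the Lemma---with $L^{2}$ growth taken as a hypothesis and $L^{\infty}/H^{1}$ growth as conclusion---is tuned precisely to make this absorption available.
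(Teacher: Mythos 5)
Your identification of the key structural feature $F(\mu)=0$ and your outline of the weighted-$L^{\infty}$ half (Duhamel along straight-line characteristics, the convolution bound $\int_{0}^{t}e^{-\nu_{0}(t-s)}e^{\Re\lambda s}\,ds\leq e^{\Re\lambda t}/(\nu_{0}+\Re\lambda)$, and the hypothesis $\|g(t)\|_{L^{2}}\leq Ce^{\Re\lambda t}\|g(0)\|_{L^{2}}$ as a ready-made bound for the Vlasov source) are all in line with the paper's own ``simpler version of Lemma~\ref{bound}'' argument.

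However, the $H^{1}$ half has a genuine gap. The combined inequality you write,
\begin{equation*}
\frac{d}{dt}\|\nabla_{x,v}g\|_{L^{2}}^{2}\leq C\varepsilon_{0}\|\nabla_{x,v}g\|_{L^{2}}^{2}+Ce^{2\Re\lambda t}\|g(0)\|_{L^{2}}^{2},
\end{equation*}
does not follow from (\ref{xderi})--(\ref{vderi}). The $v$-derivative equation (\ref{vderi}) contains the term $-\partial_{x}g_{i}$ on the right (coming from $\partial_{v}(v\partial_{x}g_{i})$), which is \emph{not} prefactored by any small quantity and is \emph{not} of size $O(\|g\|_{L^{2}}^{2})$; testing against $\partial_{v}g_{i}$ and absorbing only the $\|\partial_{v}g\|$ part into the dissipation $\tfrac14\|\partial_{v}g\|_{\nu}^{2}$ leaves $C\|\partial_{x}g\|_{L^{2}}^{2}$ with an order-one constant. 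The paper's own display (\ref{xvderi}) records exactly this asymmetry: the $\partial_{x}$ estimate carries the small factor $\|h\|_{L^{\infty}}$, but the $\partial_{v}$ estimate has $C\|\partial_{x}g\|_{L^{2}}^{2}$ with no smallness. Consequently a single Gronwall with rate $C\varepsilon_{0}$ is unavailable, and the condition $C\varepsilon_{0}<2\Re\lambda$ is not the right absorption threshold.

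The paper gets around this by exploiting the one-way nature of the bad coupling: it first integrates the $\partial_{x}$-inequality alone (Gronwall), expressing $\|\partial_{x}g(t)\|_{L^{2}}^{2}$ as an $\varepsilon_{0}$-weighted time integral of $\|\partial_{v}g\|_{L^{2}}^{2}$ plus a source $Ce^{2\Re\lambda t}\|g(0)\|_{L^{2}}^{2}$, then substitutes this into the $\partial_{v}$-inequality. The resulting second-order integro-differential inequality in the auxiliary variable $Y(t)=\int_{0}^{t}e^{C\varepsilon_{0}(t-s)}\|\partial_{v}g(s)\|_{L^{2}}^{2}\,ds$ has an \emph{effective} feedback coefficient $O(\varepsilon_{0})$ (the large constant in the $\partial_{v}$-estimate only multiplies $\partial_{x}g$, which itself feeds back on $\partial_{v}g$ with an $\varepsilon_{0}$ coefficient). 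The trick of multiplying the differential inequality by $Y'\geq 0$ and estimating $(Y')^{2}$ then closes the argument with the correct growth rate $e^{2\Re\lambda t}$. You would need to replace your combined Gronwall with this two-step substitution (or some equivalent device) for the $H^{1}$ conclusion to hold.

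One minor remark: you correctly note that $\|g(0)\|_{L^{2}}\leq C\|wg(0)\|_{L^{\infty}}$ on the periodic strip since $1/w\in L^{2}(\mathbb{R}^{3})$ for $\gamma>3/2$; this is indeed needed to put the final bound in the form (\ref{bootstraph1}), whose right-hand side does not explicitly carry $\|g(0)\|_{L^{2}}$.
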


\begin{proof}
We take $x$ and $v$ derivatives for (\ref{maineqpert1}). Since $F(M_{i})=0$,
from (\ref{h1estimate}), we have $(h=wg)$ 
\begin{eqnarray}
\frac{d}{dt}\Vert \partial _{x}g\Vert _{L^{2}}^{2} &\leq &C\Vert h\Vert
_{L^{\infty }}\{\Vert \partial _{x}g\Vert _{L^{2}}^{2}+\Vert \partial
_{v}g\Vert _{L^{2}}^{2}\}+C\Vert g\Vert _{L^{2}}^{2},  \label{xvderi} \\
\frac{d}{dt}\Vert \partial _{v}g\Vert _{L^{2}}^{2} &\leq &C\Vert \partial
_{x}g\Vert _{L^{2}}^{2}+C\Vert g\Vert _{L^{2}}^{2}.  \notag
\end{eqnarray}
Applying the Gronwall's inequality to (\ref{xvderi}), by $||h||_{L^{\infty
}}\leq \varepsilon _{0}<\Re \lambda $, we obtain 
\begin{multline}
\Vert \partial _{x}g(t)\Vert _{L^{2}}^{2} \leq \Vert \partial
_{x}g(0)\Vert _{L^{2}}^{2}+C\varepsilon _{0}\int_{0}^{t}e^{C\varepsilon
_{0}(t-s)}\Vert \partial _{v}g(s)\Vert _{L^{2}}^{2}ds\\+Ce^{\Re \lambda
t}\Vert g(0)\Vert _{L^{2}}^{2},  \label{xgrowth}\end{multline}
\begin{multline}
\frac{d}{dt}\Vert \partial _{v}g(t)\Vert _{L^{2}}^{2} \leq C\varepsilon
_{0}\int_{0}^{t}e^{C\varepsilon _{0}(t-s)}\Vert \partial _{v}g(s)\Vert
_{L^{2}}^{2}ds+Ce^{\Re \lambda t}\{\Vert g(0)\Vert _{L^{2}}^{2}\\
+\Vert
\partial _{x}g(0)\Vert _{L^{2}}^{2}\}. 
\end{multline}
Letting $Y(t)=\int_{0}^{t}e^{C\varepsilon _{0}(t-s)}\Vert \partial
_{v}g(s)\Vert _{L^{2}}^{2}ds$ so that $Y^{\prime }=\Vert \partial
_{v}g(s)\Vert _{L^{2}}^{2}$. Multiplying with $Y^{\prime }\geq 0$, we have 
\begin{equation*}
\frac{d}{dt}\{Y^{\prime }\}^{2}\leq C\varepsilon _{0}\{Y^{\prime
}\}^{2}+C_{\varepsilon _{0}}e^{\Re \lambda t}\{\Vert g(0)\Vert
_{L^{2}}^{2}+\Vert \partial _{x}g(0)\Vert _{L^{2}}^{2}\}.
\end{equation*}
Gronwall's inequality and integration by parts implies 
\begin{eqnarray*}
\{Y^{\prime }\}^{2}(t) &\leq &\Vert \partial _{v}g(0)\Vert
_{L^{2}}^{4}+C\{\Vert g(0)\Vert _{L^{2}}^{2}+\Vert \partial _{x}g(0)\Vert
_{L^{2}}^{2}\}^{2}\int_{0}^{t}e^{C\varepsilon _{0}(t-s)}e^{2\Re \lambda
s}ds \\
&\leq &C\{\Vert \partial _{v}g(0)\Vert _{L^{2}}^{2}+\Vert g(0)\Vert
_{L^{2}}^{2}+\Vert \partial _{x}g(0)\Vert _{L^{2}}^{2}\}^{2}e^{2\Re 
\lambda t}.
\end{eqnarray*}

Taking the square root and plugging into (\ref{xgrowth}), we thus conclude
the desired estimate for $\Vert \partial _{v}g(s)\Vert _{L^{2}}^{2}+\Vert
\partial _{x}g(t)\Vert _{L^{2}}^{2}$ in (\ref{bootstraph1}).

We now consider the $L^{\infty }$ estimate. By repeating the same (but
simpler) argument in Lemma \ref{bound} for the full nonlinear system (with $
F(M_{i})=0$) , we can show that if $\{\Vert h(t)\Vert _{L^{\infty }}+\Vert
g(t)\Vert _{L^{2}}\}$ is sufficiently small, 
\begin{equation*}
\Vert h(t)\Vert _{L^{\infty }}\leq C\{\Vert h(0)\Vert _{L^{\infty
}}+\sup_{0\leq s\leq t}\Vert g(s)\Vert _{L^{2}}\}\leq Ce^{\Re \lambda
t}\{\Vert h(0)\Vert _{L^{\infty }}+\Vert g(0)\Vert _{L^{2}}\}.
\end{equation*}
\end{proof}

\begin{proof}[Proof of Theorem \ref{ninst}] We choose $R$ to be the eigenfunction whose
eigenvalue has the largest positive real part. If $\lambda $ is not real,
then $\Vert \Im R\Vert _{L^{2}}=r>0$. We choose the approximate
eigenfunction $\Im R^{\delta }$ to the imaginary part of $R$ by Lemma \ref
{cutoff}. In case $\lambda $ is real, we simply do not take the imaginary
parts.

We choose a family of solutions $f^{\delta }(0,x,\xi )=\mu +\delta \Im 
R^{\delta }\sqrt{\mu }\geq 0$ or $g^{\delta }(0,x,\xi )=\delta \Im 
R^{\delta }$. Note that the positivity follows from the first statement in
Lemma \ref{cutoff}, for $\delta $ sufficiently small. Clearly, from Lemma 
\ref{cutoff}, 
\begin{equation*}
\Vert g^{\delta }(0)-\delta \Im R\Vert _{L^{2}}=\delta \Vert R-\Im 
R^{\delta }\Vert _{L^{2}}\leq \delta ^{1+\frac{1}{2}}\leq \frac{\delta r}{2},
\end{equation*}
for $\delta $ sufficiently small. Hence, from Lemma \ref{cutoff}, 
\begin{equation*}
\Vert g^{\delta }(0)\Vert _{H^{1}}+\Vert h^{\delta }(0)\Vert _{L^{\infty
}}=\delta \Vert \Im R^{\delta }\Vert _{H^{1}}+\delta \Vert \Im 
R^{\delta }\Vert _{L^{\infty }}\leq C\delta r.
\end{equation*}
\ 

Now from the nonlinear Vlasov-Boltzmann system (\ref{maineqpert1}), we have 
\begin{eqnarray}
&&g^{\delta }(t)=\delta e^{-\mathcal{L}t}\Im R^{\delta }\\&&+\int_{0}^{t}e^{-
\mathcal{L(}t-\tau )}
\begin{pmatrix}
-F(\sqrt{\mu }g_{2})\partial _{v}g_{1}+F(\sqrt{\mu }g_{2})vg_{1}+\Gamma
(g_{1},g_{1})+\Gamma (g_{1},g_{2})  \label{gduhamel}\\ 
-F(\sqrt{\mu }g_{1})\partial _{v}g_{2}+F(\sqrt{\mu }g_{1})vg_{2}+\Gamma
(g_{2},g_{2})+\Gamma (g_{2},g_{1})
\end{pmatrix}
d\tau . \notag
\end{eqnarray}
We choose $\Lambda $ such that 
\begin{equation}
\Re \lambda <\Lambda <(1+\frac{1}{2})\Re \lambda .
\label{defLambda}
\end{equation}
Let 
\begin{equation*}
T^{\delta \delta }=\frac{1}{\Lambda -\Re \lambda }|\ln \frac{\zeta r}{
2C_{\Lambda }\sqrt{\delta }}|,
\end{equation*}
By (\ref{defLambda}) and $T^{\delta }=\frac{1}{\Re \lambda }\ln \frac{
\theta }{\delta }$. Since $2(\Lambda -\Re \lambda )<\Re \lambda $,
for small $\delta $, we have $T^{\delta }\leq T^{\delta \delta }$. Clearly, $
e^{(\Lambda -\Re \lambda )t}\leq e^{(\Lambda -\Re \lambda
)T^{\delta \delta }}=\frac{\zeta r}{2C_{\Lambda }\sqrt{\delta }}$, and for $
0\leq t\leq T^{\delta \delta }$: 
\begin{equation}
C_{\Lambda }\sqrt{\delta }e^{\Lambda t}\leq \frac{\zeta }{2}e^{\Re 
\lambda t}r.  \label{tdelta2}
\end{equation}
Let 
\begin{eqnarray}
T^{\ast } &=&\sup_{s}\{s:\|\nabla _{x}g^{\delta }(t)\|_{L^{2}}+\|\nabla
_{v}g^{\delta }(t)\|_{L^{2}}+\|h^{\delta }(t)\|_{L^{\infty }}\leq
\varepsilon _{0}\}  \label{tstar} \\
T^{\ast \ast } &=&\sup_{s}\{s:\|g^{\delta }(t)-\delta e^{-\mathcal{L}
t}R^{\delta }\|_{L^{2}}\leq \frac{\zeta }{4}\delta e^{\Re \lambda t}r,
\text{ for all }0\leq t\leq s\}.  \label{tstar2}
\end{eqnarray}
For $0\leq t\leq \min \{T^{\delta },T^{\ast \ast }\}$, we have from (\ref
{tdelta2}), 
\begin{eqnarray}
\|g^{\delta }(t)\|_{L^{2}} &\leq &\delta \|e^{-\mathcal{L}t}\Im 
R^{\delta }\|_{L^{2}}+\frac{\zeta }{4}\delta e^{\Re \lambda t}r
\label{l2bound} \\
&=&\delta \|e^{-\mathcal{L}t}\Im R\|_{L^{2}}+\delta \|e^{-\mathcal{L}
t}\{R-R^{\delta }\}\|_{L^{2}}+\frac{\zeta }{4}\delta e^{\Re \lambda t}r
\notag \\
&\leq &\delta e^{\Re \lambda t}+C_{\Lambda }\delta ^{\{1+\frac{1}{2}
\}}e^{\Lambda t}+\frac{\zeta }{4}\delta e^{\Re \lambda t}r  \notag \\
&\leq &(1+\frac{3\zeta }{4})e^{\Re \lambda t}\|g^{\delta }(0)\|_{L^{2}}.
\notag
\end{eqnarray}
We now claim that $T^{\delta }\leq \min \{T^{\ast },T^{\ast \ast }\}$.

In fact, if $T^{\ast }<\min \{T^{\delta },T^{\ast \ast }\}$, then by (\ref
{tdelta}), (\ref{tstar}), (\ref{l2bound}), and the  Lemma \ref{boot} (bootstrap lemma), we obtain for $0\leq t\leq T^{\ast }$: 
\begin{equation*}
\Vert \nabla _{x,v}g^{\delta }(t)\Vert _{L^{2}}+\Vert h^{\delta }(t)\Vert
_{L^{\infty }}\leq Ce^{\Re \lambda t}\{\Vert \nabla _{x,v}g^{\delta
}(0)\Vert _{L^{2}}+\Vert h^{\delta }(0)\Vert _{L^{\infty }}\}.
\end{equation*}
In particular, by (\ref{tdelta}), 
\begin{eqnarray*}
\Vert \nabla _{x,v}g^{\delta }(T^{\ast })\Vert _{L^{2}}+\Vert h^{\delta
}(T^{\ast })\Vert _{L^{\infty }} &\leq &Ce^{\Re \lambda T^{\ast
}}\{\Vert \nabla _{x,v}g^{\delta }(0)\Vert _{L^{2}}+\Vert h^{\delta
}(0)\Vert _{L^{\infty }}\} \\
&<&Ce^{\Re \lambda T^{\delta }}\delta =C\theta .
\end{eqnarray*}
This is a contradiction to the definition of $T^{\ast }$ if $\theta $ is
chosen $<<\varepsilon _{0}$.

On the other hand, if $T^{\ast \ast }<\min \{T^{\delta },T^{\ast }\}$, then
by (\ref{gduhamel}), (\ref{l2bound}), and Lemma \ref{boot}, 
\begin{eqnarray*}
\Vert g^{\delta }(T^{\ast \ast })-\delta e^{-\mathcal{L}T^{\ast \ast }}\Im R^{\delta }\Vert _{L^{2}} &\leq &C\int_{0}^{T^{\ast \ast }}\{\Vert \nabla
_{x,v}g^{\delta }(t)\Vert _{L^{2}}+\Vert h^{\delta }(t)\Vert _{L^{\infty
}}\}^{2}dt \\
&\leq &Ce^{2\Re \lambda T^{\ast \ast }}\{\Vert \nabla _{x,v}g^{\delta
}(0)\Vert _{L^{2}}+\Vert h^{\delta }(0)\Vert _{L^{\infty }}\}^{2} \\
&<&C\{e^{\Re \lambda T^{\delta }}\delta \}C\{e^{\Re \lambda
T^{\ast \ast }}\delta \} \\
&\leq &C\theta e^{\Re \lambda T^{\ast \ast }}\delta .
\end{eqnarray*}
This is a contradiction to $T^{\ast \ast }$ in (\ref{tstar2}) when
\thinspace $\theta $ is small.

Now that $T^{\delta }\leq \min \{T^{\ast },T^{\ast \ast }\}$, we can
evaluate $t=T^{\delta }$ in (\ref{gduhamel}) to get 
\begin{equation*}
\|g^{\delta }(T^{\delta })-\delta e^{-\mathcal{L}T^{\delta }}\Im 
R^{\delta }\|_{L^{2}}\leq C\theta ^{2}.
\end{equation*}
But 
\begin{eqnarray*}
\|\delta e^{-\mathcal{L}T^{\delta }}\Im R^{\delta }\|_{L^{2}} &\geq
&\|\delta e^{-\mathcal{L}T^{\delta }}\Im R\|_{L^{2}}-\|\delta e^{-
\mathcal{L}T^{\delta }}\Im \{R-R_{\delta }\}\|_{L^{2}} \\
&\geq &\zeta e^{-\Re \lambda T^{\delta }}\|\Im R\|_{L^{2}}-C_{
\Lambda }\sqrt{\delta }e^{\Lambda T^{\delta }}\delta r \\
&=&\zeta e^{\Re \lambda T^{\delta }}\delta r-\frac{\zeta }{4}e^{\Re \lambda T^{\delta }}\delta r=\frac{\zeta }{4}r\theta .
\end{eqnarray*}
Therefore, \ for $\theta $ sufficiently small, 
\begin{equation*}
\|g^{\delta }(T^{\delta })\|_{L^{2}}\geq \frac{\zeta }{4}r\theta -C\theta
^{2}\geq \frac{\zeta }{8}r\theta .
\end{equation*}
\end{proof}

\medskip

\begin{acknowledgement} R. E. and R.M. thank the Brown
University for the kind and warm hospitality. The work of R.E. and R.M. has
been supported by MURST and INDAM-GNFM. Y. G. thanks both Universit\`a di
L'Aquila and Universit\`a di Roma Tor Vergata for their hospitality when this
work was initiated. Y. G. is supported in part by an NSF grant.
\end{acknowledgement}

\end{document}